\def\a{\alpha}
\def\b{\beta}
\def\f{\frac}
\def\g{\gamma}
\def\m{\mu}
\def\n{\nu}
\def\p{\partial}
\def\s{\sigma}
\def\tr{\mathrm{tr}}
\def\be{\begin{equation}}
\def\ee{\end{equation}}
\def\bea{\begin{eqnarray}}
\def\eea{\end{eqnarray}}
\def\ba{\begin{array}}
\def\ea{\end{array}}
\def\bc{\begin{center}}
\def\ec{\end{center}}
\def\bl{\begin{flushleft}}
\def\el{\end{flushleft}}
\def\br{\begin{flushright}}
\def\er{\end{flushright}}
\def\bi{\begin{itemize}}
\def\ei{\end{itemize}}
\def\bt{\begin{tabular}}
\def\et{\end{tabular}}
\newsavebox\myboxA
\newsavebox\myboxB
\newlength\mylenA
\newcommand*\xoverline[2][0.75]{%
    \sbox{\myboxA}{$\m@th#2$}%
    \setbox\myboxB\null
    \ht\myboxB=\ht\myboxA%
    \dp\myboxB=\dp\myboxA%
    \wd\myboxB=#1\wd\myboxA
    \sbox\myboxB{$\m@th\overline{\copy\myboxB}$}
    \setlength\mylenA{\the\wd\myboxA}
    \addtolength\mylenA{-\the\wd\myboxB}%
    \ifdim\wd\myboxB<\wd\myboxA%
       \rlap{\hskip 0.5\mylenA\usebox\myboxB}{\usebox\myboxA}%
    \else
        \hskip -0.5\mylenA\rlap{\usebox\myboxA}{\hskip 0.5\mylenA\usebox\myboxB}%
    \fi}
\def\be{\begin{equation}}
\def\ee{\end{equation}}
\def\bea{\begin{eqnarray}}
\def\eea{\end{eqnarray}}
\def\f{\frac}
\def\p{\partial}
\newcommand*\diff{\mathrm{d}} 
\newcommand*\ldiff[2][]{ \ifthenelse{\isempty{#1}}{ \diff
#2}{\diff^#1#2} \,} 
\let\limitint\int 
\renewcommand{\int}{\limitint \!} 
\newrobustcmd{\PSALTer}{\textit{PSALTer}\xspace}
\newrobustcmd{\xAct}{\textit{xAct}\xspace}
\newrobustcmd{\xTensor}{\textit{xTensor}\xspace}
\newrobustcmd{\xCoba}{\textit{xCoba}\xspace}
\newrobustcmd{\xPerm}{\textit{xPerm}\xspace}
\newrobustcmd{\xCore}{\textit{xCore}\xspace}
\newrobustcmd{\xTras}{\textit{xTras}\xspace}
\newrobustcmd{\SymManipulator}{\textit{SymManipulator}\xspace}
\newrobustcmd{\RectanglePacking}{\textit{RectanglePacking}\xspace}
\newrobustcmd{\Inkscape}{\textit{Inkscape}\xspace}
\newrobustcmd{\Mathematica}{\textit{Mathematica}\xspace}
\newrobustcmd{\xPert}{\textit{xPert}\xspace}
\newrobustcmd{\MathGR}{\textit{MathGR}\xspace}
\newrobustcmd{\HiGGS}{\textit{HiGGS}\xspace}
\newrobustcmd{\Windows}{\textit{Microsoft Windows}\xspace}
\newrobustcmd{\Mac}{\textit{macOS}\xspace}
\newrobustcmd{\Linux}{\textit{Linux}\xspace}
\newrobustcmd{\GitHub}{\textit{GitHub}\xspace}
\newrobustcmd{\Bash}{\textit{bash}\xspace}
\newrobustcmd{\WolframLanguage}{\textit{Wolfram Language}\xspace}
\newrobustcmd{\CPP}{\textit{C++}\xspace}
\titleformat{\section}[block]{\normalfont\bfseries\centering}{\MakeUppercase\thesection}{1em}{\MakeUppercase}
\newrobustcmd{\pea}[1]{%
	\emph{#1}\textbf{.\ \ \ ---}
}
\titleformat{\paragraph}[runin]{\vspace{5pt}\normalfont\normalsize\bfseries}{\emph\theparagraph}{1em}{\pea}
\newtheorem{theorem}{Theorem}[section]
\newtheorem{corollary}[theorem]{Corollary}
\theoremstyle{definition}
\newtheorem{definition}[theorem]{Definition}
\newtheorem*{remark}{Remark}
\crefname{paragraph}{paragraph}{paragraphs}
\Crefname{paragraph}{Paragraph}{Paragraphs}
\newrobustcmd{\dollar}{\mbox{\color{gray}\textbf\textdollar}}
 \lstdefinestyle{ascii-tree}{ 
    literate=
    {├}{{%
	\hphantom{\raisebox{0.5ex}{\rule{0.5ex}{1pt}}}%
	\smash{\raisebox{-1ex}{\rule{1pt}{1.1\baselineskip}}}%
	\raisebox{0.5ex}{\rule{0.5ex}{1pt}}%
	}}1 
    {│}{{%
	\hphantom{\raisebox{0.5ex}{\rule{0.5ex}{1pt}}}%
	\smash{\raisebox{-1ex}{\rule{1pt}{1.1\baselineskip}}}%
	\hphantom{\raisebox{0.5ex}{\rule{0.5ex}{1pt}}}%
	}}1 
    {─}{{%
	\smash{\raisebox{0.5ex}{\rule{1ex}{1pt}}}%
	}}1 
    {└}{{%
	\hphantom{\raisebox{0.5ex}{\rule{0.5ex}{1pt}}}%
	\smash{\raisebox{0.5ex}{\rule{1pt}{0.5\baselineskip}}}%
	\raisebox{0.5ex}{\rule{0.5ex}{1pt}}%
	}}1 
    { }{ }1 
 }
\lstdefinelanguage[PSALTer]{Mathematica}[]{Mathematica}{
	morekeywords=[2]{brown,PacletInstall,DrazinInverse,mx,wl,Association,KernelID,OptionsPattern,OptionValue,Private,Head,DistributeDefinitions,Portrait},
	morekeywords=[3]{blue,SymmetryOf,xAct,xAct`,xTensor,xCore,xPerm,xTras,SymManipulator,AllowUpperDerivatives,Antisymmetric,ChangeCovD,Christoffel,ConstantSymbolQ,ContractMetric,ContractMetrics,DefConstantSymbol,DefScalarFunction,DefTensor,delta,DependenciesOfTensor,IndicesOfVBundle,Labels,LI,LieDToCovD,MakeRule,NoScalar,OverDerivatives,ParamD,PD,PrintAs,Projected,ScalarFunctionQ,ScreenDollarIndices,SeparateMetric,SlotsOfTensor,Symmetric,SymmetryGroupOfTensor,ToCanonical,CommuteCovDs,xTensorQ,Zero,Tensors,ConstantSymbols,DefManifold,IndexRange,DefMetric,FlatMetric,SymCovDQ,DefCovD},
	morekeywords=[4]{green,a,Action,b,c,cartesian,CD,ChristoffelCD,ChristoffelPDcartesian,d,Def,DefField,DetG,e,EinsteinCCCD,EinsteinCCPDcartesian,EinsteinCD,En,Eps,epsilonG,etaDowncartesian,etaUpcartesian,f,g,G,h,i,j,k,KretschmannCD,l,m,M4,MaxLaurentDepth,Mo,n,o,p,P,ParticleSpectrum,PDcartesian,PoleResidue,PrintSourceAs,q,r,RicciCD,RicciPDcartesian,RicciScalarCD,RicciScalarPDcartesian,RiemannCD,RiemannPDcartesian,s,SchoutenCCCD,SchoutenCCPDcartesian,SchoutenCD,SchoutenPDcartesian,SymRiemannCD,SymRiemannPDcartesian,t,TangentM4,TetraG,TFRicciCD,TheoryName,ShowPropagator,TorsionCD,TorsionPDcartesian,u,v,V,w,WeylCD,x,y,z},
	morekeywords=[5]{red,Coupling1,Coupling2,Coupling3,Coupling4,Coupling5,Coupling6,ScalarField,VectorField,TwoFormField,MetricPerturbation,TetradPerturbation,SpinConnection,HigherSpinField,Connection,Alp0,Alp1,Alp2,Alp3,Alp4,Alp5,Alp6,Bet1,Bet2,Bet3,kT1,A0,A1,A2,A3,A4,A5,A6,A7,A8,A9,A10,A11,C0,C1,C2,C3,C4,C5,C6,C7,C8,C9,C10,C11,C12,C13,C14,C15,C16,D1,U0,U1,U2,U3,U4,V0,V1,V2,V3,V4,W0,W1,W2,W3,W4,MetricPerturbation,SquareMass1,SquareMass2,Mass1,kLambda,kR1,kR2,kR3,kR4,kR5,kR6,kR7,kR8,kT1,kT2,kT3,kT4,kT5,ThreeFormField}}
\lstdefinelanguage{Special}{%
morekeywords=[1]{%
ParticleSpectrographMassiveGravity.m,ValidateSymmetryField.m,ValidateSymmetryMode.m,ValidateTraceless.m,ValidateInverseField.m,ValidateInverseMode.m,ValidateSpatial.m,DefAllComponentValues.m,DefSummary.m,ValidateSO3Irreps.m,DefSymbol.m,MakeAutomaticallyTraceless.m,MakeUniquePartialDual.m,MakeUniqueQuadratic.m,RemoveContraction.m,MakeAutomaticallyNotAntisymmetric.m,RegisterFieldRank0.m,RegisterFieldRank1.m,RegisterFieldRank2.m,RegisterFieldRank2Antisymmetric.m,RegisterFieldRank2Symmetric.m,RegisterFieldRank3.m,RegisterFieldRank3Antisymmetric13.m,RegisterFieldRank3Symmetric12.m,RegisterFieldRank3Symmetric13.m,RegisterFieldRank3Symmetric23.m,RegisterFieldRank3TotallySymmetric.m,DecompositionTable.m,ExpansionTable.m,FieldMosaic.m,AllIndexConfigurations.m,AllocateTensorValues.m,RegisterFieldRank3Antisymmetric12.m,RegisterFieldRank3Antisymmetric23.m,RegisterFieldRank3TotallyAntisymmetric.m,AppendToField.m,DefFiducialField.m,DefSO3Irrep.m,PreComputeComponents.m,CombineRules.m,SummariseField.m,IsNegativeParitySpinTwo.m,CatalogueInvariant.m,CacheContexts.m,DefPlaceholderSpins.m,GenerateAnsatz.m,NormaliseRescalings.m,PoleToSquareMass.m,IsolatePoles.m,IrrationalQ.m,StripPoly.m,PartitionDeterminant.m,GaugeArtifactQ.m,MassiveGhost.m,MassiveAnalysisOfSector.m,SimplifyMasses.m,SquareMassQ.m,UnresolvedPoleQ.m,IndependentComponents.m,DefFreeSourceVariables.m,AllIndependentComponents.m,ConstraintComponentToLightcone.m,MakeConstraintComponentList.m,MakeFreeSourceVariables.m,RescaleNullVector.m,ExtractPart.m,ExtractReparameterisationMatrix.m,ExtractDenominator.m,DenominatorOfElement.m,ExtractSecularEquation.m,MasslessAnalysisOfTotal.m,AssistedResidue.m,NumericalLowestOrder.m,LowestOrder.m,ObtainTermReplacements.m,ProcessPart.m,ResidueFormula.m,NullResidue.m,PerformReduction.m,PrepareReduction.m,BasicNullResidue.m,TestDependency.m,ReparameteriseSources.m,ConstrainInLightcone.m,ExpressInLightcone.m,FullyCanonicalise.m,FullyExpandSources.m,MakeSaturatedMatrix.m,MatrixFromSymbolic.m,MatrixToSymbolic.m,Repartition.m,SourcesToComponents.m,ExaminePoleOrder.m,ConstructLightcone.m,ConvertLightcone.m,CarefullyOrthogonalise.m,ParameterisedNullVectorQ.m,GrabExpression.m,SimplifyIfSmall.m,GradualExpand.m,BatchExpanded.m,ConsolidateUnmakeSymbolic.m,GradualExpandSubTask.m,InitialExpand.m,ConsolidateFinalElement.m,IntermediateRules.m,MakeSymbolic.m,ManualPseudoInverse.m,DistributeConjugate.m,UnmakeSymbolic.m,ConjectureInverse.m,CreateList.m,IsNullVectorOfSpace.m,MinimalExampleCaseNullSpace.m,CommonNullVector.m,CleanNullVector.m,EnsureLinearInCouplings.m,RemoveReferencesToMomentum.m,SymbolicNullSpace.m,ConjectureNullSpace.m,NonTrivialDot.m,ToCovariantForm.m,GetHermitianPart.m,ConstructOperator.m,FourierLagrangian.m,TimedReduce.m,CombineAssociations.m,ConstructLinearAction.m,ConstructWaveOperator.m,ValidateMaxLaurentDepth.m,ValidateTheoryName.m,ConstructUnitarityConditions.m,ConstructSourceConstraints.m,GetDiagram.m,NRoot.m,ProtractList.m,StripFactors.m,ParticleRow.m,ParticleRows.m,CLICallStack.m,Status.m,GUICallStack.m,ShowIfSmall.m,TheoryRows.m,UnresolvedPoleRows.m,UnresolvedPoleRow.m,AbbreviationRow.m,SourceConstraintRows.m,UnitarityRow.m,Abbreviate.m,ReplaceRadicals.m,CarefulReplace.m,GetAbbreviationRules.m,ShortEnoughQ.m,StripRadicals.m,WignerGrid.m,AllParticleRows.m,ValidateLagrangian.m,ConstructMassiveAnalysis.m,ConstructMasslessAnalysis.m,ConstructSaturatedPropagator.m,ConstructSpectrograph.m,UpdateTheoryAssociation.m,Diagnostic.m,MonitorParallel.m,ParallelGrid.m,RaggedBlock.m,ReMagnify.m,NewFramed.m,NewParallelSubmit.m,Vectorize.m,CallStackEnd.m,NameOfFunction.m,StackStrip.m,ToNewCanonical.m,CallStackBegin.m,Colours.m,StackSetDelayed.m,DefGeometry.m,ReloadPackage.m,DefField.m,ParticleSpectrum.m,PSALTer.m},morekeywords=[2]{%
init.wl},morekeywords=[3]{%
},morekeywords=[4]{%
FieldKinematics.pdf,GitHubLogo.pdf,GitLabLogo.pdf,ParticleSpectrograph.pdf,FeynmanDiagramHexic.pdf,FeynmanDiagramQuadratic.pdf,FeynmanDiagramQuartic.pdf,FeynmanDiagramSpin0ParityEvenResolved.pdf,FeynmanDiagramSpin0ParityEvenUnresolvedDouble.pdf,FeynmanDiagramSpin0ParityEvenUnresolvedQuadruple.pdf,FeynmanDiagramSpin0ParityEvenUnresolvedQuintuple.pdf,FeynmanDiagramSpin0ParityEvenUnresolvedTriple.pdf,FeynmanDiagramSpin0ParityNoneResolved.pdf,FeynmanDiagramSpin0ParityNoneUnresolvedDouble.pdf,FeynmanDiagramSpin0ParityNoneUnresolvedQuadruple.pdf,FeynmanDiagramSpin0ParityNoneUnresolvedQuintuple.pdf,FeynmanDiagramSpin0ParityNoneUnresolvedTriple.pdf,FeynmanDiagramSpin0ParityOddResolved.pdf,FeynmanDiagramSpin0ParityOddUnresolvedDouble.pdf,FeynmanDiagramSpin0ParityOddUnresolvedQuadruple.pdf,FeynmanDiagramSpin0ParityOddUnresolvedQuintuple.pdf,FeynmanDiagramSpin0ParityOddUnresolvedTriple.pdf,FeynmanDiagramSpin1ParityEvenResolved.pdf,FeynmanDiagramSpin1ParityEvenUnresolvedDouble.pdf,FeynmanDiagramSpin1ParityEvenUnresolvedQuadruple.pdf,FeynmanDiagramSpin1ParityEvenUnresolvedQuintuple.pdf,FeynmanDiagramSpin1ParityEvenUnresolvedTriple.pdf,FeynmanDiagramSpin1ParityNoneResolved.pdf,FeynmanDiagramSpin1ParityNoneUnresolvedDouble.pdf,FeynmanDiagramSpin1ParityNoneUnresolvedQuadruple.pdf,FeynmanDiagramSpin1ParityNoneUnresolvedQuintuple.pdf,FeynmanDiagramSpin1ParityNoneUnresolvedTriple.pdf,FeynmanDiagramSpin1ParityOddResolved.pdf,FeynmanDiagramSpin1ParityOddUnresolvedDouble.pdf,FeynmanDiagramSpin1ParityOddUnresolvedQuadruple.pdf,FeynmanDiagramSpin1ParityOddUnresolvedQuintuple.pdf,FeynmanDiagramSpin1ParityOddUnresolvedTriple.pdf,FeynmanDiagramSpin2ParityEvenResolved.pdf,FeynmanDiagramSpin2ParityEvenUnresolvedDouble.pdf,FeynmanDiagramSpin2ParityEvenUnresolvedQuadruple.pdf,FeynmanDiagramSpin2ParityEvenUnresolvedQuintuple.pdf,FeynmanDiagramSpin2ParityEvenUnresolvedTriple.pdf,FeynmanDiagramSpin2ParityNoneResolved.pdf,FeynmanDiagramSpin2ParityNoneUnresolvedDouble.pdf,FeynmanDiagramSpin2ParityNoneUnresolvedQuadruple.pdf,FeynmanDiagramSpin2ParityNoneUnresolvedQuintuple.pdf,FeynmanDiagramSpin2ParityNoneUnresolvedTriple.pdf,FeynmanDiagramSpin2ParityOddResolved.pdf,FeynmanDiagramSpin2ParityOddUnresolvedDouble.pdf,FeynmanDiagramSpin2ParityOddUnresolvedQuadruple.pdf,FeynmanDiagramSpin2ParityOddUnresolvedQuintuple.pdf,FeynmanDiagramSpin2ParityOddUnresolvedTriple.pdf,FeynmanDiagramSpin3ParityEvenResolved.pdf,FeynmanDiagramSpin3ParityEvenUnresolvedDouble.pdf,FeynmanDiagramSpin3ParityEvenUnresolvedQuadruple.pdf,FeynmanDiagramSpin3ParityEvenUnresolvedQuintuple.pdf,FeynmanDiagramSpin3ParityEvenUnresolvedTriple.pdf,FeynmanDiagramSpin3ParityNoneResolved.pdf,FeynmanDiagramSpin3ParityNoneUnresolvedDouble.pdf,FeynmanDiagramSpin3ParityNoneUnresolvedQuadruple.pdf,FeynmanDiagramSpin3ParityNoneUnresolvedQuintuple.pdf,FeynmanDiagramSpin3ParityNoneUnresolvedTriple.pdf,FeynmanDiagramSpin3ParityOddResolved.pdf,FeynmanDiagramSpin3ParityOddUnresolvedDouble.pdf,FeynmanDiagramSpin3ParityOddUnresolvedQuadruple.pdf,FeynmanDiagramSpin3ParityOddUnresolvedQuintuple.pdf,FeynmanDiagramSpin3ParityOddUnresolvedTriple.pdf},morekeywords=[5]{%
FieldKinematics.png,GitHubLogo.png,GitLabLogo.png,ParticleSpectrograph.png},morekeywords=[6]{%
FeynmanDiagramHexic.tex,FeynmanDiagramQuadratic.tex,FeynmanDiagramQuartic.tex,Preamble.tex,FeynmanDiagramSpin0ParityEvenResolved.tex,FeynmanDiagramSpin0ParityEvenUnresolvedDouble.tex,FeynmanDiagramSpin0ParityEvenUnresolvedQuadruple.tex,FeynmanDiagramSpin0ParityEvenUnresolvedQuintuple.tex,FeynmanDiagramSpin0ParityEvenUnresolvedTriple.tex,FeynmanDiagramSpin0ParityNoneResolved.tex,FeynmanDiagramSpin0ParityNoneUnresolvedDouble.tex,FeynmanDiagramSpin0ParityNoneUnresolvedQuadruple.tex,FeynmanDiagramSpin0ParityNoneUnresolvedQuintuple.tex,FeynmanDiagramSpin0ParityNoneUnresolvedTriple.tex,FeynmanDiagramSpin0ParityOddResolved.tex,FeynmanDiagramSpin0ParityOddUnresolvedDouble.tex,FeynmanDiagramSpin0ParityOddUnresolvedQuadruple.tex,FeynmanDiagramSpin0ParityOddUnresolvedQuintuple.tex,FeynmanDiagramSpin0ParityOddUnresolvedTriple.tex,FeynmanDiagramSpin1ParityEvenResolved.tex,FeynmanDiagramSpin1ParityEvenUnresolvedDouble.tex,FeynmanDiagramSpin1ParityEvenUnresolvedQuadruple.tex,FeynmanDiagramSpin1ParityEvenUnresolvedQuintuple.tex,FeynmanDiagramSpin1ParityEvenUnresolvedTriple.tex,FeynmanDiagramSpin1ParityNoneResolved.tex,FeynmanDiagramSpin1ParityNoneUnresolvedDouble.tex,FeynmanDiagramSpin1ParityNoneUnresolvedQuadruple.tex,FeynmanDiagramSpin1ParityNoneUnresolvedQuintuple.tex,FeynmanDiagramSpin1ParityNoneUnresolvedTriple.tex,FeynmanDiagramSpin1ParityOddResolved.tex,FeynmanDiagramSpin1ParityOddUnresolvedDouble.tex,FeynmanDiagramSpin1ParityOddUnresolvedQuadruple.tex,FeynmanDiagramSpin1ParityOddUnresolvedQuintuple.tex,FeynmanDiagramSpin1ParityOddUnresolvedTriple.tex,FeynmanDiagramSpin2ParityEvenResolved.tex,FeynmanDiagramSpin2ParityEvenUnresolvedDouble.tex,FeynmanDiagramSpin2ParityEvenUnresolvedQuadruple.tex,FeynmanDiagramSpin2ParityEvenUnresolvedQuintuple.tex,FeynmanDiagramSpin2ParityEvenUnresolvedTriple.tex,FeynmanDiagramSpin2ParityNoneResolved.tex,FeynmanDiagramSpin2ParityNoneUnresolvedDouble.tex,FeynmanDiagramSpin2ParityNoneUnresolvedQuadruple.tex,FeynmanDiagramSpin2ParityNoneUnresolvedQuintuple.tex,FeynmanDiagramSpin2ParityNoneUnresolvedTriple.tex,FeynmanDiagramSpin2ParityOddResolved.tex,FeynmanDiagramSpin2ParityOddUnresolvedDouble.tex,FeynmanDiagramSpin2ParityOddUnresolvedQuadruple.tex,FeynmanDiagramSpin2ParityOddUnresolvedQuintuple.tex,FeynmanDiagramSpin2ParityOddUnresolvedTriple.tex,FeynmanDiagramSpin3ParityEvenResolved.tex,FeynmanDiagramSpin3ParityEvenUnresolvedDouble.tex,FeynmanDiagramSpin3ParityEvenUnresolvedQuadruple.tex,FeynmanDiagramSpin3ParityEvenUnresolvedQuintuple.tex,FeynmanDiagramSpin3ParityEvenUnresolvedTriple.tex,FeynmanDiagramSpin3ParityNoneResolved.tex,FeynmanDiagramSpin3ParityNoneUnresolvedDouble.tex,FeynmanDiagramSpin3ParityNoneUnresolvedQuadruple.tex,FeynmanDiagramSpin3ParityNoneUnresolvedQuintuple.tex,FeynmanDiagramSpin3ParityNoneUnresolvedTriple.tex,FeynmanDiagramSpin3ParityOddResolved.tex,FeynmanDiagramSpin3ParityOddUnresolvedDouble.tex,FeynmanDiagramSpin3ParityOddUnresolvedQuadruple.tex,FeynmanDiagramSpin3ParityOddUnresolvedQuintuple.tex,FeynmanDiagramSpin3ParityOddUnresolvedTriple.tex},morekeywords=[7]{%
convert_logos.sh,compile_diagrams.sh,generate_sources.sh},morekeywords=[8]{%
ASCIILogo.txt},morekeywords=[9]{%
GitHubLogo.svgz,GitLabLogo.svgz},morekeywords=[10]{%
LICENSE.md,README.md},morekeywords=[11]{cd,git,clone,r,cp,tree,mathematica,sudo,pacman,S},
keywordstyle=[1]\color{Green}\textbf,keywordstyle=[2]\color{Brown}\textbf,keywordstyle=[3]\color{RoyalBlue}\textbf,keywordstyle=[4]\color{Cyan}\textbf,keywordstyle=[5]\color{SkyBlue}\textbf,keywordstyle=[6]\color{Blue}\textbf,keywordstyle=[7]\color{Purple}\textbf,keywordstyle=[8]\color{Black}\textbf,keywordstyle=[9]\color{teal}\textbf,keywordstyle=[10]\color{teal}\textbf,keywordstyle=[11]\color{black}\textbf,alsoletter={./},basicstyle=\color{gray}\ttfamily,moredelim=[s][\color{gray}\textbf]{[user@system }{]}}
\definecolor{backing}{rgb}{0.88,1,0.88}
\setlist[description]{leftmargin=\parindent,labelindent=\parindent}
\begin{document}

\title{\Large The particle spectra of parity-violating theories:\\
      A less radical approach and an upgrade of~\PSALTer{}}

\author{\large W. Barker}
\email{barker@fzu.cz}
\affiliation{Central European Institute for Cosmology and Fundamental Physics, Institute of Physics of the Czech Academy of Sciences, Na Slovance 1999/2, 182 00 Prague 8, Czechia}
\affiliation{Astrophysics Group, Cavendish Laboratory, JJ Thomson Avenue, Cambridge CB3 0HE, UK}
\affiliation{Kavli Institute for Cosmology, Madingley Road, Cambridge CB3 0HA, UK}

\author{\large G.~K.~Karananas}
\email{georgios.karananas@physik.uni-muenchen.de}
\affiliation{Arnold Sommerfeld Center for Theoretical Physics, Ludwig-Maximilians-Universit\"at M\"unchen, Theresienstra\ss e 37, 80333 M\"unchen, Germany}

\author{\large H. Tu}
\email{ht486@cam.ac.uk}
\email{ht486@cantab.ac.uk}
\affiliation{Astrophysics Group, Cavendish Laboratory, JJ Thomson Avenue, Cambridge CB3 0HE, UK}
\affiliation{Kavli Institute for Cosmology, Madingley Road, Cambridge CB3 0HA, UK}

\begin{abstract}
\leavevmode
\begin{center}
\textbf{Abstract}
\end{center}
\par  Due to computational barriers, the effects of parity violation have so far been grossly neglected in gravitational model-building, leading to a serious gap in the space of prior models. We present a new algorithm for efficiently computing the particle spectrum for any parity-violating tensorial field theory. It allows to extract conditions for the absence of massive ghosts without resorting to any manipulation of radicals in cases where the particle masses are irrational functions of the Lagrangian coupling coefficients. We test it against several examples, among which is the most general parity-indefinite Einstein--Cartan/Poincar\'e gravity that propagates two healthy massive scalars (in addition to the massless graviton). Importantly, we upgrade the~\PSALTer{} software in the \WolframLanguage{} to accommodate parity-violating theories.~\PSALTer{} is a contribution to the \xAct{} project.
\end{abstract}

\maketitle

\newpage

\tableofcontents

\newpage

\section{Introduction}\label{Introduction}

\paragraph*{Motivation} 
In the metric-based approach to gravitation, at lowest order in the derivative expansion, there is only one diffeomorphism invariant operator of the metric field:\footnote{To be precise, there is also the cosmological constant, but it is irrelevant for the discussions here.} the Ricci scalar. This is parity invariant, as are its quantum corrections~\cite{tHooft:1974toh,tHooft:1973bhk,Deser:1974cy,Deser:1974nb,Goroff:1985th}. This does not mean, however, that gravity necessarily preserves parity. Nor is its apparent parity invariance motivated by any physical principle or phenomenological observation: rather, it is an artefact of the metrical formulation leading to a highly symmetric Riemann curvature tensor. Parity violation comes hand-in-hand with formulations of gravity where the metric and connection are independent from each other, such as in Einstein--Cartan (EC) theory or metric-affine gravity (MAG). This means that in principle, starting at linear order in the derivative expansion, there exist in these theories pseudoscalar operators whose~\emph{a priori} exclusion is not justified. Actually, it is exactly the inclusion of such operators that can lead to good phenomenology: for instance in EC theory and MAG, parity-breaking terms enable inflation of geometrical origin. Specifically, the inclusion of the pseudoscalar curvature is absolutely essential for the inflaton potential to have a plateau --- see~\cite{Salvio:2022suk,DiMarco:2023ncs,Karananas:2025xcv} for details. Moreover,  the way gravity is formulated --- and consequently whether or not it  violates parity --- also infiltrates and modifies the Standard Model of particle physics, as shown for instance in~\cite{Freidel:2005sn,Alexandrov:2008iy,Magueijo:2012ug,Shaposhnikov:2020gts,Shaposhnikov:2020frq,Shaposhnikov:2020aen,Karananas:2021zkl,Karananas:2024xja}. In spite of the nontrivial implications (e.g.~\cite{Chatzistavrakidis:2020wum,Chatzistavrakidis:2021oyp}) for phenomenology, the field dynamics of parity-nonpreserving gravitational theories remains poorly studied, with the one exception being EC/Poincar\'e gravity, cf.~\cite{Kuhfuss:1986rb,Diakonov:2011fs,Baekler:2010fr,Baekler:2011jt,Obukhov:2012je,Karananas:2014pxa,Obukhov:2017pxa,Blagojevic:2018dpz} for a non-exhaustive list of references. 

\paragraph*{In this paper} We present two advances in the study of the particle content of parity-violating tensor field theories in the weak-field limit, with particular relevance to theories of gravity:
\begin{description}
	 \item[New software] We implement the parity-violating extension (for EC/Poincar\'e gravity, a first attempt was made in the 80s~\cite{Kuhfuss:1986rb}, but systematized much later in~\cite{Karananas:2014pxa}) of the popular spin-projection operator (SPO) method~\cite{barnes1965lagrangian,Rivers:1964nfl,VanNieuwenhuizen:1973fi,Percacci:2020ddy} as an upgrade of the pre-existing \WolframLanguage{} framework for such calculations: Particle Spectrum for Any Tensor Lagrangian~(\PSALTer{}), a package for the \Mathematica{} software system, first presented in~\cite{Barker:2024juc}. It is a contribution to the open-source \xAct{} tensor computer algebra project~\cite{Martin-Garcia:2007bqa,Martin-Garcia:2008ysv,Martin-Garcia:2008yei,Nutma:2013zea,xCore:2018,xCoba:2020,SymManipulator:2021}. Earlier versions of~\PSALTer{} have been used already in~\cite{Barker:2023bmr,Barker:2024ydb,Barker:2024dhb,Barker:2024goa,Dyer:2024kvo,Barker:2025xzd}. The~\PSALTer{} software can be obtained from the public \GitHub{} repository \href{https://github.com/wevbarker/PSALTer}{\texttt{github.com/wevbarker/PSALTer}}. The majority of the upgraded software was written by generative pretrained transformers, highlighting the growing application of artificial intelligence in theoretical physics.
\item[New algorithm] We point out~\emph{a novel way to efficiently} derive conditions for the absence of ghosts, that does not involve any inversion of the wave operator nor the computation of residues of the propagator at massive poles. This is particularly useful for, but not limited to,  parity-violating theories which propagate two or more massive modes within one spin sector, since it completely bypasses dealing with radicals. It will be implemented in \PSALTer{} hopefully in the near future. 
\end{description}

\paragraph*{Organization of the paper} This paper is organized as follows. In~\cref{TheoreticalDevelopment}, we briefly outline some basics of the SPO formalism and discuss the standard condition(s) for absence of tachyons, as well as our novel, simplified, way to obtain  constraints for ghost-freedom. In~\cref{SymbolicImplementation}, the new features in~\PSALTer{} are illustrated, by obtaining the particle spectrographs of various parity-violating theories. This section is divided in two parts: in the first, we consider somewhat trivial toy-models involving~$p$-forms, which provide a pedagogical introduction. In the second part, we use \PSALTer{} to study the spectrum of the most general parity-indefinite EC theory that propagates the massless graviton and two scalars of gravitational origin --- we explicitly show that this model is free from ghosts and tachyons. We conclude in~\cref{Conclusions}, and there follow several technical appendices.

\paragraph*{Conventions} The conventions and notation are aligned as closely as possible with~\cite{Barker:2024juc}; any departures will be explicitly noted. We work exclusively in four spacetime dimensions. We use the `particle physics' or `mostly minus' signature for the Minkowski metric~$\eta_{\m\n} = {\rm diag}(1,-1,-1,-1)$ and we take~$\tensor{\epsilon}{_{0123}} = 1$ for the totally antisymmetric tensor. The various different kinds of indices and labels introduced are summarized in~\cref{IndicesTable}. 

\begin{table}[!h]
\caption{\label{IndicesTable} 
Indices and labels. Summation is assumed only over repeated spacetime indices.}
\begin{center}
\begin{tabularx}{\linewidth}{c|c|l}
\hline\hline
	Symbols & Values & Meaning\\
\hline
	$\mu,\ \nu$ &~$\{0,1,2,3\}$ & Minkowski spacetime indices\\
	$\SmashAcute\mu,\ \SmashAcute\nu$ &~$\{0,1,2,3\}$ & curved spacetime indices\\
	$\ovl\mu,\ \ovl\nu$ &~$\{0,1,2,3\}$ & Minkowski spacetime indices orthogonal to the momentum  \\
	$X,\ Y$ &~(symbolic) & Distinct Lorentz-covariant tensor fields\\
~$\mu_X,\ \nu_X$ &~$\{0,1,2,3\}^{\mathbb{Z}^{\geq}}$ & Collections of symmetrised spacetime indices associated with field~$X$\\
	$J,\ \Jp{}$ &~$\mathbb{Z}^{\geq}$ & Spin\\
	$P,\ \Pp{}$ &~$\{1,-1\}$ & Parity\\
~$\States{J}{P}{X}{i},\ \States{J}{P}{X}{j}$ &~$\mathbb{Z}^{>}$ & Multiple independent copies of a~$J^P$ state associated with field~$X$\\
	$s_{J},\ \Sp{}_{J}$ &~$\mathbb{Z}^{>}$ & Slots for masses associated with~$J$ if~$P$ is not a quantum number\\
	$s_{J^P},\ \Sp{}_{J^P}$ &~$\mathbb{Z}^{>}$ & Slots for masses associated with a given~$J^P$ state\\
	$\ovl\mu_{J^P},\ \ovl\nu_{J^P}$ &~$\{0,1,2,3\}^{\mathbb{Z}^{\geq}}$ & Collections of momentum-orthogonal indices associated with~$J^P$ \\ 
	$a_{J},\ \Ap{}_{J}$ &~$\mathbb{Z}^{>}$ & Slots for the null eigenvectors~(if any) of the~$J$ wave operator block~$\WaveOperatorJ{J}$\\
\hline\hline
\end{tabularx}
\end{center}
\end{table}

\section{Theoretical development}\label{TheoreticalDevelopment}

\paragraph*{Introducing the algorithm} Any tensorial field theory which has been expanded quadratically around Minkowski spacetime is described by the free action
\begin{align}\label{BasicPositionAction}
S =\int\mathrm{d}^4x\ \sum_X\FieldDown{X}{\mu}\left[\sum_Y\WaveOperatorTensorUpDown{X}{Y}{\mu}{\nu}\FieldUp{Y}{\nu}-\SourceUp{X}{\mu}\right] \ ,
\end{align}
where~\cref{BasicPositionAction} has the following components:
\begin{enumerate}
	\item The quantities~$\FieldDown{X}{\mu}$ are real tensor fields~(they are never pseudotensors).\footnote{Note that the use of pseudotensor fields can always be exchanged for tensorial~$\FieldDown{X}{\mu}$ by altering the parity of various contributing terms in the wave operator.} Different fields are distinguished by an index~$X$, and have a collection of spacetime indices~$\FieldIndices{X}{\mu}$, with or without some symmetry. The symmetries implemented in~\PSALTer{} so far for any~$X$ are: scalar~$\tensor{\zeta}{}$; vector~$\tensor{\zeta}{_{\mu}}$; the general tensor of second rank~$\tensor{\zeta}{_{\mu\nu}}$ and the special cases of antisymmetric~$\tensor{\zeta}{_{\mu\nu}}\equiv\tensor{\zeta}{_{[\mu\nu]}}$ or symmetric fields~$\tensor{\zeta}{_{(\mu\nu)}}$; the general tensor of third rank~$\tensor{\zeta}{_{\mu\nu\sigma}}$ and the special cases~$\tensor{\zeta}{_{\mu\nu\sigma}}\equiv\tensor{\zeta}{_{[\mu\nu\sigma]}}$,~$\tensor{\zeta}{_{(\mu\nu\sigma)}}$,~$\tensor{\zeta}{_{[\mu\nu]\sigma}}$,~$\tensor{\zeta}{_{[\mu|\nu|\sigma]}}$,~$\tensor{\zeta}{_{\mu[\nu\sigma]}}$,~$\tensor{\zeta}{_{(\mu\nu)\sigma}}$,~$\tensor{\zeta}{_{(\mu|\nu|\sigma)}}$ and~$\tensor{\zeta}{_{\mu(\nu\sigma)}}$.
	\item The quantity~$\WaveOperatorTensorUpDown{X}{Y}{\mu}{\nu}$ is the wave operator, which contains all the kinematical data. It is a real differential operator constructed from~$\tensor{\eta}{_{\mu\nu}}$,~$\tensor{\partial}{_{\mu}}$ and the totally antisymmetric tensor~$\tensor{\epsilon}{_{\mu\nu\sigma\lambda}}$. The presence of the latter indicates that the model is parity-violating, and marks a departure from the assumptions of~\cite{Barker:2024juc}. The wave operator in \PSALTer{} must have a homogeneous, linear parametrization in terms of some collection of coupling coefficients, i.e., its every term must be linear in these couplings.
	\item The quantities~$\SourceUp{X}{\mu}$ are real sources for the fields~$\FieldDown{X}{\mu}$, which inherit their index symmetries. In the presence of gauge redundancies, the sources must of course obey the corresponding constraints.
\end{enumerate}
So far,~\cref{BasicPositionAction} has been written in coordinate space. Accordingly, we introduce the momentum~$\Momenta{^\mu}$ and define (using~$x$ and~$k$ as shorthand) the Fourier transform and its inverse
\begin{equation}\label{FourierTransform}
	\FieldDown{X}{\mu}(k)\equiv\int\diff^4 k \, \exp\left(-i \Momenta{_\mu}\Coordinates{^\mu}\right) \FieldDown{X}{\mu}(x) \ ,\quad
	\FieldDown{X}{\mu}(x)\equiv\f{1}{(2\pi)^4}\int\diff^4 k \, \exp\left(i \Momenta{_\mu}\Coordinates{^\mu}\right) \FieldDown{X}{\mu}(k) \ .
\end{equation}
By using~\cref{FourierTransform} and the convolution theorem, it is possible to express~\cref{BasicPositionAction} in momentum space as
\begin{equation}\label{MomentumRepresentation}
	\begin{aligned}
	S 
	=
	\frac{1}{(2\pi)^4}\int\mathrm{d}^4k
	\sum_X
		\Bigg[&
		\FieldDown{X}{\mu}\left(-k\right)
		\sum_Y
		\WaveOperatorTensorUpDown{X}{Y}{\mu}{\nu}\left(k\right)
		\FieldUp{Y}{\nu}\left(k\right)
		\\
		&
		-\frac{1}{2}\Big(
		\FieldDown{X}{\mu}\left(-k\right)
		\SourceUp{X}{\mu}\left(k\right)
		+
		\FieldDown{X}{\mu}\left(k\right)
		\SourceUp{X}{\mu}\left(-k\right)
		\Big)
	\Bigg].
	\end{aligned}
\end{equation}
For real fields and sources it follows that~$\FieldDown{X}{\mu}(-k)\equiv\FieldDownConj{X}{\mu}(k)$ and~$\SourceDown{X}{\mu}(-k)\equiv\SourceDownConj{X}{\mu}(k)$. Accordingly, we will henceforth suppress the~$k$-dependence, so that~\cref{MomentumRepresentation} becomes
\begin{equation}\label{MomentumRepresentation_suppressed}
	S 
	=
	\frac{1}{(2\pi)^4}\int\mathrm{d}^4k
	\sum_X
	\Bigg[
		\FieldDownConj{X}{\mu}
		\sum_Y
		\WaveOperatorTensorUpDown{X}{Y}{\mu}{\nu}
		\FieldUp{Y}{\nu}
		-\frac{1}{2}\left(
		\FieldDownConj{X}{\mu}
		\SourceUp{X}{\mu}
		+
		\FieldUp{X}{\mu}\SourceDownConj{X}{\m}
		\right)
	\Bigg] \ ,
\end{equation}
and from the equations of motion which follow from~\cref{MomentumRepresentation_suppressed}, we can immediately read off a formal definition for the (scalar-valued) saturated propagator
\begin{equation}
\label{eq:saturated_propagator}
\Pi \equiv \SourceDownConj{X}{\mu}\WaveOperatorInverseTensorUpDown{X}{Y}{\mu}{\nu}\SourceUp{Y}{\nu} \ ,
\end{equation}
where in~\cref{eq:saturated_propagator} the quantity~$\WaveOperatorInverseTensorUpDown{X}{Y}{\mu}{\nu}$ is the `inverse' of the wave operator. This definition will need further careful treatment in~\cref{SpinProjection}, because gauge symmetries actually render the wave operator \emph{non-}invertible. For the moment,~\cref{eq:saturated_propagator} is safe heuristically because any singular parts of the `inverse' will be sandwiched between parts of the~$\SourceUp{Y}{\nu}$ which those same symmetries force to vanish. Thus, the only remaining parts of~$\Pi$ refer exclusively to (i.e., they are \emph{saturated} by) the unconstrained parts of~$\SourceUp{Y}{\nu}$, which are also the physical parts. Working in~$k$-space, the pole structure of the propagator encodes all important information about the particle content. The squares of the particle masses can be read off from the positions of the poles; if these are real and positive, a theory is tachyon-free. Meanwhile, the positivity of the pole residues guarantees freedom from ghosts. Note, however, that for parity-violating theories propagating more than one massive particle of spin-$J$, it is somewhat involved to extract the constraints for absence of ghosts from~\cref{eq:saturated_propagator} --- we discuss how this difficulty can be fully bypassed in~\cref{MassiveSpectrum}.

\subsection{Spin-projection operators}\label{SpinProjection}

\paragraph*{Fully covariant approach} The simplest way to proceed when it comes to tensorial field theories is to work in a fully covariant manner by using spin-projection operators (SPOs). We will develop the general theory and conventions for SPOs in detail in~\cref{ConstructionOfSpinProjectionOperators}, and in~\cref{ExplicitSpinProjectorOperators} we provide explicit formulae for those SPOs which are relevant for the examples in~\cref{SymbolicImplementation}. Here, we will only briefly outline the main ideas and notation. As their name suggests, SPOs break tensorial fields down into their irreducible parts with respect to the three-dimensional rotation group~$\othree$, i.e. into constituent parts of definite spin~$J$ and parity~$P$, which we denote by~$J^P$. The action of the SPOs, however, goes beyond mere decomposition: they constitute a complete basis for the possible ways in which the various~$J^P$ states from across all the fields interface with each other. Dealing first with simple decomposition, we use the labels~$\States{J}{P}{X}{i}$ to indicate the multiple independent states with a common~$J^P$ which are contained within the single field~$X$. Then, the `diagonal' SPOs
\be
\SPODownUp{J}{P}{X}{X}{i}{i}{\mu}{\nu} \ ,
\ee
form a complete basis
\begin{align}\label{Completeness}
	\FieldDown{X}{\mu}\equiv\sum_{J,P}\sum_{\States{J}{P}{X}{i}}\SPODownUp{J}{P}{X}{X}{i}{i}{\mu}{\nu}\FieldDown{X}{\nu}\implies
\sum_{J,P}\sum_{\States{J}{P}{X}{i}}\SPODownUp{J}{P}{X}{X}{i}{i}{\mu}{\nu} = \tensor*{\Delta}{^{\nu_X}_{\mu_X}} \ ,
\end{align}
where~$\tensor*{\Delta}{^{\nu_X}_{\mu_X}}$ in~\cref{Completeness} is the product of Kronecker symbols~$\tensor*{\delta}{^\nu_\mu}$, with one Kronecker factor for each index, in order. The diagonal SPOs are also positive-definite; recalling that parity is associated with having even or odd free spatial indices, and that spatial indices pick up a minus sign under contraction, this leads to the positivity condition
\begin{equation}\label{Positivity}
P\FieldDown{X}{\mu}^*
		\SPOUpDown{J}{P}{X}{X}{i}{i}{\mu}{\nu}
		\FieldUp{X}{\nu}\geq 0\  . 
\end{equation}
Going beyond decomposition, the general notation
\be
\PVSPODownUp{J}{P}{\Pp{}}{X}{Y}{i}{j}{\mu}{\nu} \ ,
\ee
encompasses `off-diagonal' SPOs in which different states~$\States{J}{P}{X}{i}$ and~$\States{J}{\Pp{}}{Y}{j}$ are interfaced, or mixed.\footnote{In this general notation, the first and second SPO arguments always share field labels~$X$ and~$Y$ with the first and second collections of Lorentz indices, respectively.} Note that for parity-violating tensorial field theories, the two states need only share a common~$J$~\cite{Kuhfuss:1986rb,Karananas:2014pxa}. This requirement is more broad than that considered in~\cite{Barker:2024juc}, where the SPOs were additionally required to have the same parity~$P = \Pp{}$. The parity-violating SPOs, i.e. those off-diagonal SPOs for which~$P\neq \Pp{}$, necessarily contain an odd power of the totally antisymmetric tensor. Together with the diagonal SPOs as a special case, all SPOs in addition to completeness~\cref{Completeness}, satisfy the following properties 
\begin{subequations}
	\begin{gather}
		\label{NewHermicity}
	\PVSPODownUp{J}{P}{\Pp{}}{X}{Y}{i}{j}{\mu}{\nu} = \PVSPOUpDown{J}{\Pp{}}{P}{Y}{X}{j}{i}{\nu}{\mu} \ ,
		\\
	\PVSPOUpDown{J}{P}{\Pp{}}{X}{Y}{i}{j}{\mu}{\nu}^* = 
	P\Pp{}
	\PVSPODownUp{J}{\Pp{}}{P}{Y}{X}{j}{i}{\nu}{\mu} \ , \label{ChequerHermitianIndices}
		\\
	\PVSPODownUp{J}{P}{\Pp{}}{X}{Y}{i}{j}{\mu}{\nu}
	\PVSPODownUp{\Jp{}}{\Ppp{}}{\Pppp{}}{Y}{Z}{k}{l}{\nu}{\sigma}
         = 	
	\tensor*{\delta}{_{jk}}
	\tensor*{\delta}{_{J\Jp{}}}
	\tensor*{\delta}{_{\Pp{}\Ppp{}}}
	\PVSPODownUp{J}{P}{\Pppp{}}{X}{Z}{i}{l}{\mu}{\sigma} \ , \label{NewOrthonormality}
	\end{gather}
\end{subequations}
where~\cref{NewHermicity,NewOrthonormality} encode symmetry and orthonormality, respectively. The condition in~\cref{ChequerHermitianIndices} implies Hermicity for parity-preserving SPOs, and skew-Hermicity for parity-violating SPOs. Within each~$J$ sector, our convention is to collect all the~$P=1$ states together, followed by the~$P=-1$ states. When the SPOs are arranged in this~$2\times 2$ block form, their (skew-)Hermicity is confined to (off-)diagonal blocks. We refer to this property as \emph{chequer-Hermicity}, and discuss its consequences for algebraic manipulations in~\cref{ChequerHermicity}.

\paragraph*{Wave operator} As a consequence of the properties of the SPO basis in~\cref{Completeness,NewHermicity,ChequerHermitianIndices,NewOrthonormality}, the spectral analysis can be performed in a far more convenient matrix representation. For example, the wave operator introduced in~\cref{BasicPositionAction} can be expressed as 
\begin{equation}\label{ToMatrices1}
	\WaveOperatorTensorUpDown{X}{Y}{\mu}{\nu}  =  \sum_{J,P,\Pp{}}\sum_{\States{J}{P}{X}{i},\States{J}{\Pp{}}{Y}{j}}\WaveOperatorJComponents{J}{P}{\Pp{}}{X}{Y}{i}{j}\PVSPOUpDown{J}{P}{\Pp{}}{X}{Y}{i}{j}{\mu}{\nu},
\end{equation}
where, for each~$J$, the chequer-Hermitian wave operator coefficient matrix~$\WaveOperatorJ{J}$ is indexed by all the~$\States{J}{P}{X}{i}$ and~$\States{J}{\Pp{}}{Y}{j}$. As explained above, the different spins do not mix at the level of the free action, and so the total wave operator coefficient matrix~$\WaveOperator{}$ assumes a block-diagonal form in~$J$-space
\begin{equation}\label{BuildWaveOperator}
\WaveOperator{} \equiv \bigoplus_{J}\WaveOperatorJ{J},
\quad
	\WaveOperatorJComponents{J}{P}{\Pp{}}{X}{Y}{i}{j}
	\PVSPOUpDown{J}{P}{\Pp{}}{X}{Y}{i}{j}{\mu}{\nu}
	   = 
	\SPOUpDown{J}{P}{X}{X}{i}{i}{\mu}{\sigma}
	\WaveOperatorTensorUpDown{X}{Y}{\sigma}{\lambda}
	\SPOUpDown{J}{\Pp{}}{Y}{Y}{j}{j}{\lambda}{\nu},
\end{equation}
where~\cref{ToMatrices1,BuildWaveOperator} are mutually consistent due to the properties in~\cref{Completeness,NewHermicity,ChequerHermitianIndices,NewOrthonormality}. Within each~$\WaveOperatorJ{J}$ block, however, various massive and massless particles can perfectly well co-exist: this complicates the calculations, as we will see in~\cref{MassiveSpectrum}.

\paragraph*{Saturated propagator} Having determined the structure of the general wave operator in terms of SPOs, the next objective is to obtain the saturated propagator~$\Pi$ given in~\cref{eq:saturated_propagator}. Comparison of~\cref{eq:saturated_propagator,BuildWaveOperator} suggests that~$\WaveOperatorInverse{}$ is the relevant quantity to compute. As anticipated, however, difficulties arise when one or more of the~$\WaveOperatorJ{J}$ are degenerate, so that~$\WaveOperatorInverse{}$ may not exist. As explained in~\cite{Barker:2024juc}, the dimensionality of the kernel of each~$\WaveOperatorJ{J}$, multiplied by the multiplicity~$2J+1$, and summed over all~$J$, yields the total number of gauge generators for the free theory in~\cref{BasicPositionAction}. It is also explained in~\cite{Barker:2024juc} that the most elegant approach to inverting such singular matrix is via the \emph{Moore--Penrose pseudoinverse}~\cite{Moore:1920,Penrose:1955} --- see~\cref{MoorePenrose}.\footnote{Pseudoiversion may be understood as a systematic approach to inverting the non-singular parts of a matrix. In analyses prior to~\cite{Barker:2024juc}, a less controlled but physically valid procedure --- effectively equivalent to pseudoinversion --- was to simply invert the largest non-singular submatrices of each~$\WaveOperatorJ{J}$.} Denoting this by~$\MoorePenroseJ{J}$ we can replace~\cref{eq:saturated_propagator} with
\begin{equation}
\label{eq:saturated_propagator_SPOs}
\Pi  =  \sum_{X,Y}\sum_{J,P,\Pp{}}\sum_{\States{J}{P}{X}{i},\States{J}{\Pp{}}{Y}{j}}\tensor{\left[\MoorePenroseJ{J}\right]}{_{\States{J}{P}{X}{i}\States{J}{\Pp{}}{Y}{j}}}
	\SourceDownConj{X}{\mu}
	\PVSPOUpDown{J}{P}{\Pp{}}{X}{Y}{i}{j}{\mu}{\nu}\SourceUp{Y}{\nu} \ ,
	\quad
	\MoorePenroseJ{J}
	 = 
	\Chequer{\MoorePenroseJPP{J}}{\MoorePenroseJPM{J}}{\MoorePenroseJMP{J}}{\MoorePenroseJMM{J}},
\end{equation}
Note that the Moore--Penrose pseudoinverse of a chequer-Hermitian matrix is also chequer-Hermitian. This result is developed across~\cref{ChequerHermicity,MoorePenrose,OperatorCoefficientMatrices}, which uses the shading scheme in~\cref{eq:saturated_propagator_SPOs} to indicate a chequer-Hermitian block structure. 

\subsection{Massless spectrum}\label{MasslessSpectrum}

\paragraph*{No-ghost condtion} When massless particles are involved, SPOs --- irrespectively of whether these are parity-preserving or parity-violating --- should be employed with certain care. This is because they are constructed out of the usual transverse and longitudinal projectors, which are not well-defined in the massless limit. Specifically, all SPOs incorporate powers of~$\Momenta{^\mu}$ into their tensor structures; these are necessarily accompanied by negative powers of~$k$ for reasons of normalisation. These negative powers can lead to spurious massless poles in~$\Pi$. The safest approach is to explicitly compute the full saturated propagator in~\cref{eq:saturated_propagator_SPOs} by expanding both the SPOs and the sources in terms of their tensorial components in some fiducial frame. Being artefacts of a `poor' choice of basis, spurious singularities cancel out and the limit on the lightcone can be carefully taken. Absence of massless ghosts requires that the corresponding residue be positive
\be
\NewRes{k^2}{0} \Pi > 0 \ .
\ee
This brute-force procedure (see e.g.~\cite{Sezgin:1979zf,Karananas:2014pxa,Barker:2024juc}) is not especially elegant, and indeed the massless spectrum typically accounts for more wallclock time in~\PSALTer{}~(which uses this method) than its massive counterpart, which we discuss next in~\cref{MassiveSpectrum}. Moreover, the version of the algorithm presented in~\cite{Barker:2024juc} does not require any particular modification when parity-violating operators are introduced.

\subsection{Massive spectrum}\label{MassiveSpectrum}

\paragraph*{No-tachyon condition} 

Consider the spin-sector~$J$ for which the coefficient matrix~$\WaveOperatorJ{J}$ is not block-diagonal, i.e. there is parity violation. This sector may have various simple zeroes\footnote{As shown in~\cite{Barker:2024juc}, non-simple zeroes always signal the presence of ghost modes.} in~$k^2\equiv\tensor{k}{^\mu}\tensor{k}{_\mu}$, at the positions of the square masses~$\PVSquareMass{J}{s}\neq 0$, where~$\PVMasses{J}{s}$ is a label for the various masses associated with~$J$. These correspond to the zeroes of the determinant of the largest non-degenerate submatrix of~$\WaveOperatorJ{J}$ or, equivalently, the zeroes of its pseudodeterminant. In general the formulae for these zeroes in terms of the Lagrangian coupling coefficients may not be expressible using rational functions, requiring either radicals or transcendental functions. We are interested in stable and non-tachyonic states, i.e. the masses ~$\PVSquareMass{J}{s}$ must be real, and positive~\cite{Karananas:2014pxa,Blagojevic:2018dpz}
\begin{equation}\label{eq:positive_masses}
\PVSquareMass{J}{s}>0\quad \forall J , \PVMasses{J}{s}\ .
\end{equation}
Note that~\cref{eq:positive_masses} is a generalisation of the parity-preserving case in~\cite{Sezgin:1979zf,Barker:2024juc}, for which masses are confined to a given~$J^P$ sector and labelled by~$\Masses{J}{P}{s}$.

\paragraph*{No-ghost condition} The no-tachyon condition in~\cref{eq:positive_masses} must be complemented with a no-ghost criterion as well. There exist two fully equivalent ways, explained shortly, to determine if the massive modes have healthy kinetic terms. Which one to use crucially depends on how many such modes are present in each spin sector. The well-known approach is to demand that the residues of the propagator in~\cref{eq:saturated_propagator_SPOs}, evaluated at all the unique massive poles~$\PVSquareMass{J}{s}$, be positive. We show in~\cref{NoGhostCriterionAppendix} how the chequer-Hermitian structure in~\cref{eq:saturated_propagator_SPOs} reduces this criterion to
\begin{equation}
\label{NoGhostCriterionMixed} 
\NewRes{k^2}{\PVSquareMass{J}
 {s}}\left(\tr \MoorePenroseJP{J}{+}-\tr \MoorePenroseJP{J}
 {-}\right)>0\quad \forall J, \PVMasses{J}{s} \ .
\end{equation} 
The formula in~\cref{NoGhostCriterionMixed} was not known previously, though it is consistent with the known formula --- see~\cref{NoGhostCriterion2} --- in the parity-preserving case, and is fully equivalent to the condition used in~\cite{Karananas:2014pxa}. Due to it being a straightforward extension of the parity-preserving criterion, it is precisely~\cref{NoGhostCriterionMixed} which is implemented in the upgrade to~\PSALTer{}. In practice, however, this criterion becomes extremely difficult to apply in those cases where the massive spectrum comprises more than one fields, and~$\PVSquareMass{J}{s}$ are expressible in terms of radicals or transcendental functions of the Lagrangian coupling coefficients. We will see in~\cref{ECSpectroscopy}, for example, that this is exactly what happens in EC theory. In fact, the complications are due to the presence of massless modes and -- more specifically -- because of their kinetic mixings with the massive ones.\footnote{It can also happen that, due to gauge redundancies, a sector does not contain massless propagating particles; this is the situation with the spin-one fields of the general quadratic EC gravity~\cite{Karananas:2014pxa,Blagojevic:2018dpz}, see~\cref{app:PGT_comparison}. Also in this case, it is simpler to not compute the residues of the propagator at the massive poles.} We propose now a novel way for determining the no-ghost condition in such cases. By re-ordering rows and columns in~$\WaveOperatorJ{J}$ one can obtain a matrix with the canonical block structure\,\footnote{Note that~\cref{eq:canonical_block_structure} is not shaded, because the re-ordered coefficient matrix will not, in general, be chequer-Hermitian. However, we can always make the diagonal blocks~$\WaveOperatorJVV{J}$ and~$\WaveOperatorJLL{J}$ chequer-Hermitian, as we will see in~\cref{ECSpectroscopy}.}
\begin{equation}\label{eq:canonical_block_structure}
\ReducedWaveOperatorJ{J}=\NormalMatrix{\WaveOperatorJVV{J}}{\WaveOperatorJVL{J}}{\WaveOperatorJLV{J}}{\WaveOperatorJLL{J}} \ ,
\end{equation}
such that massive modes of negative parity occupy the upper-left submatrices of~$\WaveOperatorJVV{J}$, the massless modes are contained in~$\WaveOperatorJLL{J}$, and the kinetic mixings between massive and massless modes are contained in~$\WaveOperatorJVL{J}$ and~$\WaveOperatorJLV{J}$. The key insight is that these mixings can be rotated away, i.e.~$\ReducedWaveOperatorJ{J}$ can be block-diagonalized by the transformation
\begin{equation}\label{eq:tilde_OJ}
	\ReducedWaveOperatorJ{J}=\NormalMatrix{\mathsf{1}}{\WaveOperatorJVL{J}\WaveOperatorJLLInverse{J}}{\mathsf{0}}{\mathsf{1}} 
	\ReducedWaveOperatorRotatedJ{J} 
	\NormalMatrix{\mathsf{1}}{\mathsf{0}}{\WaveOperatorJLLInverse{J}\WaveOperatorJLV{J}}{\mathsf{1}},\quad 
	\ReducedWaveOperatorRotatedJ{J} =
	\NormalMatrix{\WaveOperatorRotatedJVV{J}}{\mathsf{0}}{\mathsf{0}}{\WaveOperatorJLL{J}} ,
\end{equation}
where the first block in~\cref{eq:tilde_OJ} --- which encodes all the information about the massive spectrum --- decomposes as a polynomial in~$k$ according to\,\footnote{In the literature,~$\WaveOperatorRotatedJVV{J}$ is called the \emph{Schur complement}.}
\begin{equation}\label{DiagonalizedWaveOperator}
\WaveOperatorRotatedJVV{J} = \WaveOperatorJVV{J}-\WaveOperatorJVL{J}\WaveOperatorJLLInverse{J}\WaveOperatorJLV{J}
\equiv
\KineticJ{J}k^2+\MassJ{J} \ .
\end{equation}
Here~$\KineticJ{J}$ and~$\MassJ{J}$ in~\cref{DiagonalizedWaveOperator} are the kinetic- and mass- 
 matrices of the massive modes, respectively, which depend exclusively on the Lagrangian coupling coefficients. The no-ghost criterion simply becomes that the kinetic matrix be negative-definite
\begin{equation}
\label{eq:kin_matrix_negative}
\KineticJ{J}< 0 \ ,\quad \forall J \ .
\end{equation}
The condition in~\cref{eq:kin_matrix_negative} is necessarily equivalent to that in~\cref{NoGhostCriterionMixed}. Note that if the kinetic terms are canonicalised, then the masses correspond to the eigenvalues of~$\PostMassJ{J}\equiv-\KineticInverseJ{J}\MassJ{J}$, which coincide with the~$\PVSquareMass{J}{s}$ as can be straightforwardly verified. Note that our new method is not yet implemented in~\PSALTer{}, but we illustrate its use in~\cref{ECSpectroscopy}.

\section{Examples with code}\label{SymbolicImplementation}

\paragraph*{How to use this section} We now demonstrate the capabilities of the algorithm presented in~\cref{TheoreticalDevelopment}, when it is implemented in the latest version of the \PSALTer{} software. There are two types of examples we consider here. The first are various toy-models involving~$p$-form fields which were chosen simply to illustrate the versatility of the software. The other example we study is a specific subclass of Einstein--Cartan gravity that propagates two massives scalars --- this has been carefully chosen not only because the the \PSALTer{} results can be cross-checked analytically, see~\cref{app:cross-checks}, but also because such a theory can have interesting applications in cosmology and particle physics. Note that the sources for these examples can be found in the supplement~\cite{Supplement}.

\paragraph*{Syntax highlighting} From this point on we will occasionally present code listings which have syntax highlighting, our conventions for which are as follows. Symbols belonging to the \WolframLanguage{}~(\Mathematica{}) are \lstinline!brown!, those belonging to \xAct{} are \lstinline!blue!, those belonging to~\PSALTer{} are \lstinline!green!, and those which will be defined as part of the user session are \lstinline!red!. The start of each new input cell in a \Mathematica{} notebook is denoted by `\lstinline!In[#]:=!', and the start of each output cell is denoted by `\lstinline!Out[#]:=!'. Comments within the code appear in \lstinline!(*gray*)! and strings~(which are not symbols) are shown in \lstinline!"orange"!.

\paragraph*{Loading the software} As explained in~\cite{Barker:2024juc}, the software is loaded via the \lstinline!Get! command:
\lstinputlisting{LoadPSALTer.tex}
During the loading process, several other \Mathematica{} dependencies are loaded, including \xTensor{}~\cite{Martin-Garcia:2007bqa,Martin-Garcia:2008yei}, \SymManipulator{}~\cite{SymManipulator:2021}, \xPerm{}~\cite{Martin-Garcia:2008ysv}, \xCore{}~\cite{xCore:2018}, \xTras{}~\cite{Nutma:2013zea} and \xCoba{}~\cite{xCoba:2020} from \xAct{}. The~\PSALTer{} package will next pre-define the geometric environment; a flat manifold \lstinline!M4! with metric \lstinline!G!, the totally antisymmetric tensor \lstinline!epsilonG!, and the derivative on flat spacetime \lstinline!CD!. The lower-case Latin alphabet \lstinline!a!, \lstinline!b!,$\ldots$,\lstinline!z!, is fully reserved for Minkowski spacetime indices, which automatically format as Greek letters~$\alpha$,~$\beta,\ldots,\zeta$. For example, \lstinline!G[-m,-n]! formats as~$\tensor{\eta}{_{\mu\nu}}$ and \lstinline!CD[-m]@! formats as~$\PD{_\mu}$, whilst \lstinline!epsilonG[-m,-n,-r,-s]! formats as~$\tensor{\epsilon}{_{\mu\nu\rho\sigma}}$.\footnote{Note how the nearest Greek equivalents to Latin counterparts are automatically used for rendering.}

\subsection{Various~$p$-form toy models}\label{sec:p-form_models}

\subsubsection{Definitions}

\paragraph*{Scalar field} First, we introduce a scalar (zero-form)~$\phi$, which we call \lstinline!ScalarField!:
\lstinputlisting{ScalarField.tex}
The scalar source, denoted by~$\rho$, is automatically defined by \PSALTer{}. The output is shown in~\cref{FieldKinematicsScalarField}; we see that the scalar field~$\phi$ contains only a~$J^P=0^+$ mode.

\begin{table*}[!t]
	\includegraphics[width=\linewidth]{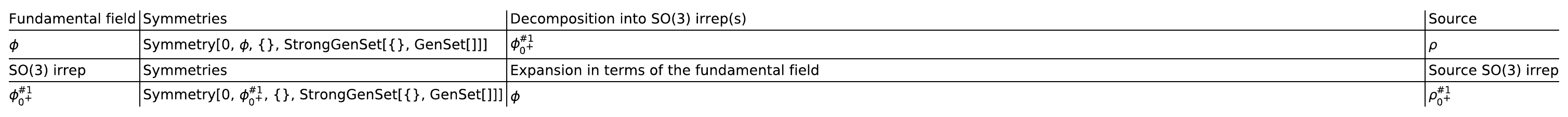}
	\caption{The declaration of \lstinline!ScalarField!, which contains only a~$J^P=0^+$ mode. These definitions are used in~\cref{ParticleSpectrographZeroByThreeCSKRTheory}.}
\label{FieldKinematicsScalarField}
\end{table*}

\paragraph*{Vector field} We next introduce a vector (one-form)~$\VectorField{_\mu}$, which we call \lstinline!VectorField!:
\lstinputlisting{VectorField.tex}
The output is shown in~\cref{FieldKinematicsVectorField}; we see that~$J^P=0^+$ and~$J^P=1^-$ modes are present in the field, and that the source~$\JForm{A}{^\mu}$ and its corresponding irreps are also defined. 

\begin{table*}[!t]
	\includegraphics[width=\linewidth]{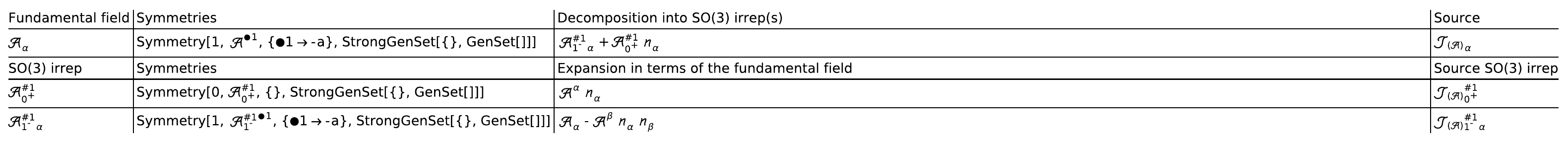}
	\caption{The declaration of \lstinline!VectorField!, with~$J^P=0^+$ and~$J^P=1^-$ modes. These definitions are used in~\cref{ParticleSpectrographOneByTwoCSKRTheory}.}
\label{FieldKinematicsVectorField}
\end{table*}

\paragraph*{Two-form field} A rank-two antisymmetric tensor (two-form)~$\TwoFormField{}{_{\mu\nu}}$ is defined as \lstinline!TwoFormField!: 
\lstinputlisting{TwoFormField.tex}
The output is shown in~\cref{FieldKinematicsTwoFormField}; we see that it carries~$J^P=1^+$ and~$J^P=1^-$ modes, and that the source~$\JForm{B}{^{\mu\nu}}$ and its irreps are also defined.

\begin{table*}[t!]
	\includegraphics[width=\linewidth]{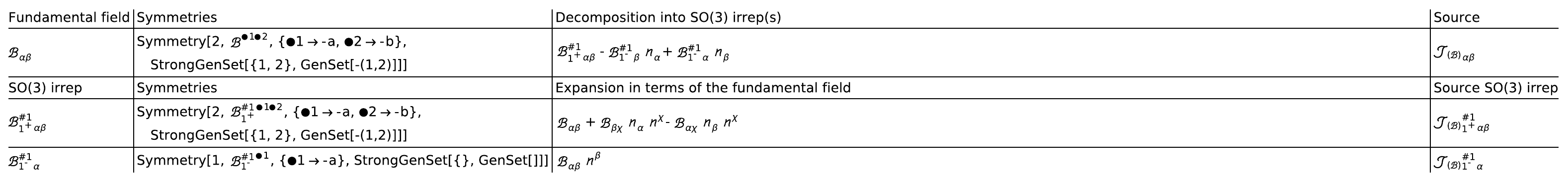}
	\caption{The declaration of \lstinline!TwoFormField!, which contains~$J^P=1^+$ and~$J^P=1^-$ modes. These definitions are used in~\cref{ParticleSpectrographParityViolatingMassiveTwoFormTheory,ParticleSpectrographParityIndefiniteMassiveTwoFormTheory,ParticleSpectrographOneByTwoCSKRTheory}.}
\label{FieldKinematicsTwoFormField}
\end{table*}

\paragraph*{Three-form field} Finally, a rank-three totally antisymmetric tensor (three-form)~$\ThreeFormField{_{\mu\nu\sigma}}$ is defined as \lstinline!ThreeFormField!:
\lstinputlisting{ThreeFormField.tex}
The output is shown in~\cref{FieldKinematicsThreeFormField}; we see that it carries~$J^P=1^+$ and~$J^P=0^-$ irreps, and that the source~$\JForm{C}{^{\mu\nu\sigma}}$ and its irreps are also defined.

\begin{table*}[t!]
	\includegraphics[width=\linewidth]{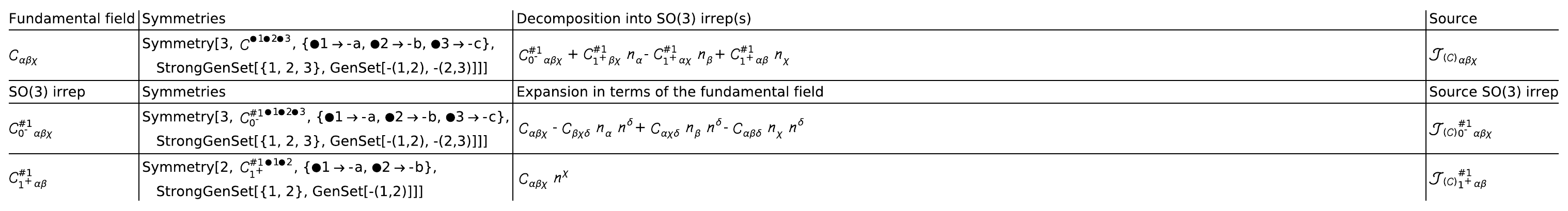}
	\caption{The declaration of \lstinline!ThreeFormField!, which contains~$J^P=0^-$ and~$J^P=1^+$ modes. These definitions are used in~\cref{ParticleSpectrographZeroByThreeCSKRTheory}.}
\label{FieldKinematicsThreeFormField}
\end{table*}

\paragraph*{Lagrangian coupling coefficients} In what follows, we shall consider three-parameter models, with the three couplings~$\alpha$,~$\beta$ and~$\gamma$ as named variables \lstinline!Coupling1!,
~\lstinline!Coupling2! and~\lstinline!Coupling3!:
\lstinputlisting{Coupling1.tex}
\lstinputlisting{Coupling2.tex}
\lstinputlisting{Coupling3.tex}
Note that \PSALTer{} strictly requires all the operators in the Lagrangian density to be \emph{linearly} parametrized by these Lagrangian coupling coefficients.

\subsubsection{Spectroscopy}

\paragraph*{Parity-violating massive two-form} The simplest parity-violating model that we could think of involves a two-form~$\TwoFormField{^{\mu\nu}}$ with a purely parity-violating mass term 
\begin{equation}\label{eq:massive_2form_parity-violating}
S = \int\diff^4x \left[\alpha\PD{_{[\mu}}\TwoFormField{_{\nu\rho]}}\PD{^{[\mu}}\TwoFormField{^{\nu\rho]}}+\gamma\tensor{\epsilon}{^{\mu\n\rho\sigma}}\TwoFormField{_{\mu\nu}}\TwoFormField{_{\rho\sigma}}\right] \ .
\end{equation}
Note that in~\cref{eq:massive_2form_parity-violating} both terms are linearly parameterised by~$\alpha$ and~$\gamma$, and square brackets denote antisymmetrization of the enclosed indexes. Henceforth, we will always omit the source coupling, which in~\cref{eq:massive_2form_parity-violating} would require adding the term~$\int\diff^4x\ \TwoFormField{_{\mu\nu}}\JForm{B}{^{\mu\nu}}$. This is because the source~$\SourceUp{X}{\mu}$ in~\cref{BasicPositionAction} is introduced as a formal test field, useful only for the computations in~\cref{TheoreticalDevelopment}. The source-free model defined in~\cref{eq:massive_2form_parity-violating} has precisely the same particle spectrum. In fact, \PSALTer{} does not accept source couplings to be input by the user; they are automatically included when~\cref{eq:massive_2form_parity-violating} is analysed with the command:
\lstinputlisting{ParityViolatingMassiveTwoFormTheory.tex}
The output of the \PSALTer{} software is given in a few seconds; it is shown in~\cref{ParticleSpectrographParityViolatingMassiveTwoFormTheory}. The result here is trivial: despite the rich appearance of~\cref{eq:massive_2form_parity-violating}, the theory does not contain any dynamical degrees of freedom (d.o.f).

\begin{figure}[h!]
	\includegraphics[width=\linewidth]{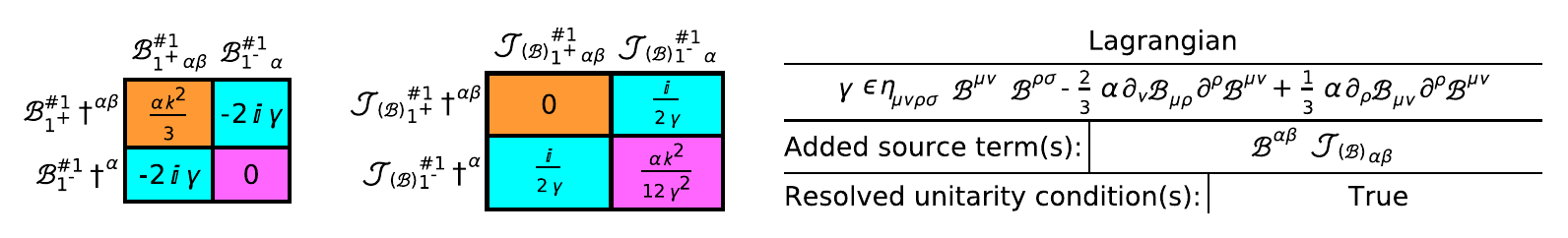}
	\caption{The spectrograph of the parity-violating two-form model in~\cref{eq:massive_2form_parity-violating}.}
\label{ParticleSpectrographParityViolatingMassiveTwoFormTheory}
\end{figure}

\paragraph*{Parity-indefinite massive two-form}~The next model under consideration is a direct generalization of~\cref{eq:massive_2form_parity-violating}, by allowing for both parity-preserving and parity-violating mass terms
\begin{equation}\label{eq:gen_two-form}
S=\int\diff^4x\left[\alpha\PD{_{[\mu}}\B{_{\nu\rho]}}\PD{^{[\mu}}\B{^{\nu\rho]}}+\beta\B{_{\mu\nu}}\B{^{\mu\nu}}+\gamma\tensor{\epsilon}{^{\mu\n\rho\sigma}}\B{_{\mu\nu}}\B{_{\rho\sigma}}\right] \ ,
\end{equation}
through the addition of the~$\beta$ coupling. The dynamics of~\cref{eq:gen_two-form} is probed using the following input:
\lstinputlisting{ParityIndefiniteMassiveTwoFormTheory.tex}
The output of the \PSALTer{} software is shown in~\cref{ParticleSpectrographParityIndefiniteMassiveTwoFormTheory}. If we adhere to the formal notation from~\cref{MassiveSpectrum}, we say that the theory in~\cref{eq:gen_two-form} propagates a massive spin-one particle with square mass~$\PVSquareMass{1}{1}=-3(\beta^2+4\gamma^2)/\alpha\beta$. The particle is not a ghost if~$\alpha>0$, and it is neither a ghost nor a tachyon if additionally~$\beta<0$.

\begin{figure}[h!]
	\includegraphics[width=\linewidth]{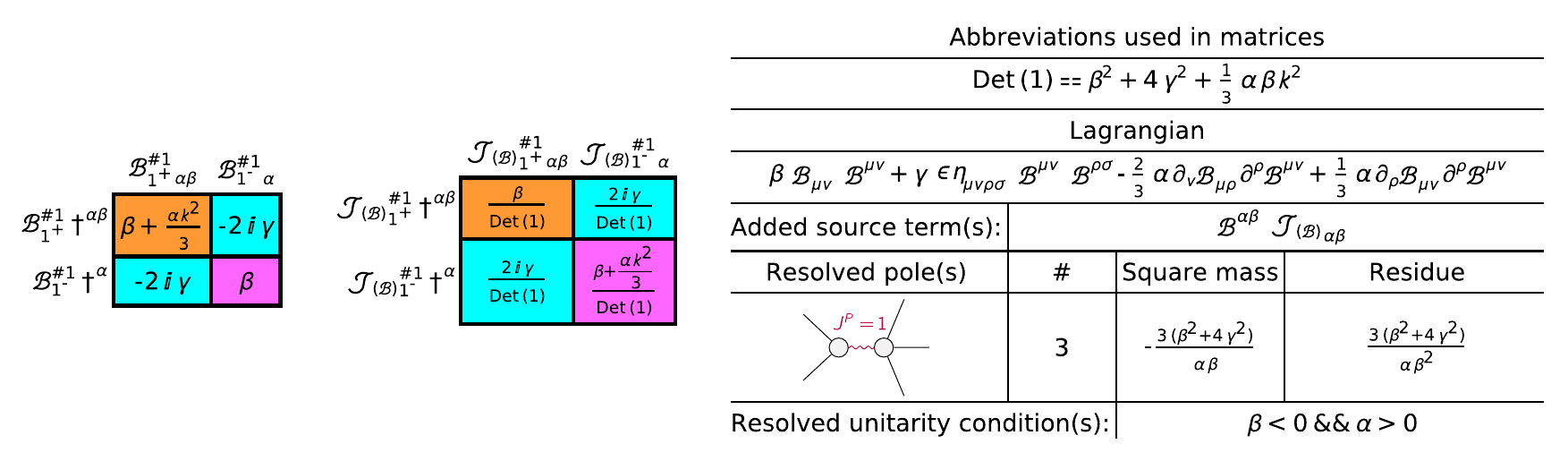}
	\caption{The spectrograph of the parity-violating two-form model in~\cref{eq:gen_two-form}.}
\label{ParticleSpectrographParityIndefiniteMassiveTwoFormTheory}
\end{figure}

\paragraph*{`One-by-two' Cremmer--Scherk--Kalb--Ramond (CSKR) theory} Yet another parity-indefinite action, that in common with the parity-violating two-form theory propagates a single massive spin-one particle, also involves a two-form~$\TwoFormField{_{\mu\nu}}$ coupled appropriately to a vector field~$\VectorField{_\mu}$~(see e.g.~\cite{Aurilia:1981xg})
\begin{align}\label{OneByTwoCSKRTheory}
	S = \int\mathrm{d}^4 x\left[\alpha\PD{_{[\mu}}\VectorField{_{\nu]}}\PD{^{[\mu}}\VectorField{^{\nu]}}+\beta\PD{_{[\mu}}\TwoFormField{_{\nu\rho]}}\PD{^{[\mu}}\TwoFormField{^{\nu\rho]}}+\gamma\tensor{\epsilon}{^{\mu\nu\rho\sigma}}\TwoFormField{_{\mu\nu}}\PD{_{[\rho}}\VectorField{_{\sigma]}}\right] \ ,
\end{align}
where the constants in~\cref{OneByTwoCSKRTheory} are not supposed to be consistent with those in the previous examples, not even up to the mass dimension:\footnote{Indeed, in~\cref{OneByTwoCSKRTheory} the constants are all dimensionless.} we are simply re-using symbols which have already been defined in the user session. We refer to this as the `\emph{one-by-two}' CSKR model because it mixes one- and two-forms.\footnote{More generally, in~$d=q+p+1$-dimensions, the~$p$-by-$q$ CSKR model mixes a~$p$-form with a~$q$-form; it is built from two gauge-invariant kinetic terms and a topological interaction. Evidently in~$d=4$ there is exactly one further CSKR theory, which we discuss presently.} The model in~\cref{OneByTwoCSKRTheory} is probed using the following input:
\lstinputlisting{OneByTwoCSKRTheory.tex}
The output is shown in~\cref{ParticleSpectrographOneByTwoCSKRTheory}. As anticipated, the particle content of~\cref{OneByTwoCSKRTheory} comprises a massive spin-one particle; its square mass is~$\PVSquareMass{1}{1}=-3\gamma^2/\alpha\beta$, and the no-ghost and no-tachyon conditions are~$\alpha<0$ and~$\beta>0$. Note that if a theory possesses gauge invariances, \PSALTer{} automatically identifies the associated source constraints; in the present case, the action is invariant under the independent transformations~$\TwoFormField{_{\mu\nu}}\mapsto\TwoFormField{_{\mu\nu}}+2\PD{_{[\mu}}\GaugePotential{_{\nu]}}$ and~$\VectorField{_{\mu}}\mapsto\VectorField{_{\mu}}+\PD{_\mu}\GaugePotential{}$, where~$\GaugePotential{_\mu}\equiv\GaugePotential{_\mu}(x)$ and~$\GaugePotential{}\equiv\GaugePotential{}(x)$ are local gauge generators. As a consequence of these gauge invariances, the sources~$\JForm{A}{^\mu}$ and~$\JForm{B}{^{\mu\nu}}$ are subject to the constraints~$\PD{_\mu}\JForm{A}{^\mu}=0$ and~$\PD{_{\mu}}\JForm{B}{^{\mu\nu}}=0$, which are automatically identified and taken account of in the \PSALTer{} code.

\begin{figure}[h!]
	\includegraphics[width=\linewidth]{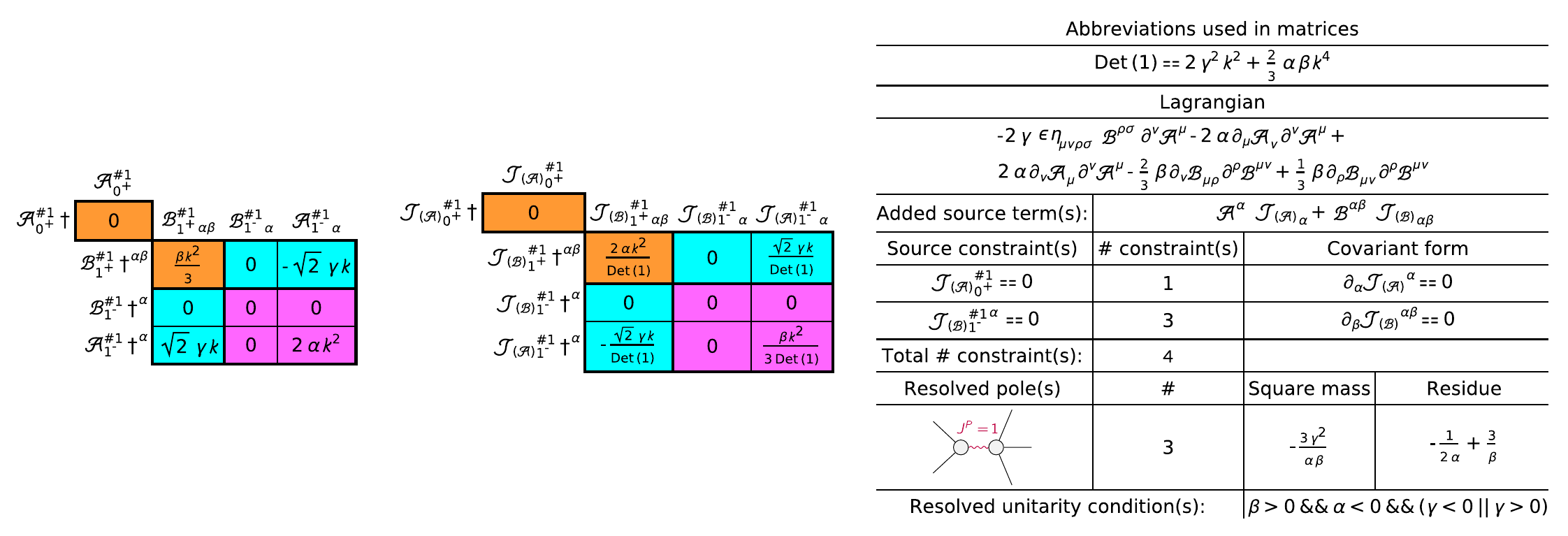}
	\caption{The spectrograph of the `one-by-two' CSKR model in~\cref{OneByTwoCSKRTheory}. All quantities are defined in~\cref{FieldKinematicsVectorField,FieldKinematicsTwoFormField}.}
\label{ParticleSpectrographOneByTwoCSKRTheory}
\end{figure}

\paragraph*{`Zero-by-three' CSKR theory} The other four-dimensional CSKR theory is the `zero-by-three' model, which mixes a scalar with a three-form. The action reads~(see e.g.~\cite{Aurilia:1981xg})
\begin{align}\label{ZeroByThreeCSKRTheory}
	S = \int\mathrm{d}^4 x\left[\alpha\PD{_\mu}\phi\PD{^\mu}\phi+\beta\PD{_{[\mu}}\ThreeFormField{_{\nu\rho\sigma]}}\PD{^{[\mu}}\ThreeFormField{^{\nu\rho\sigma]}}+\gamma\tensor{\epsilon}{^{\mu\nu\rho\sigma}}\ThreeFormField{_{\mu\nu\rho}} \PD{_{\sigma}}\phi\right],
\end{align}
and the dynamics of~\cref{ZeroByThreeCSKRTheory} is probed using the following input:
\lstinputlisting{ZeroByThreeCSKRTheory.tex}
The output is shown in~\cref{ParticleSpectrographZeroByThreeCSKRTheory}, from which we see that for~$\alpha>0$ and~$\beta<0$,~the theory in~\cref{ZeroByThreeCSKRTheory} propagates a healthy massive spin-zero mode with mass~$\PVSquareMass{0}{1}= -6\gamma^2/\alpha\beta$. Since the action in~\cref{ZeroByThreeCSKRTheory} is invariant under~$\ThreeFormField{_{\mu\nu\rho}}\mapsto\ThreeFormField{_{\mu\nu\rho}}+\PD{_{[\mu}}\GaugePotential{_{\nu\rho]}}$ for local gauge generator~$\GaugePotential{_{\mu\nu}}\equiv\GaugePotential{_{[\mu\nu]}}\equiv\GaugePotential{_{\mu\nu}}(x)$, there is an associated source constraint which is given by~$\PD{_{\mu}}\JForm{C}{^{\mu\nu\rho}}=0$.

\begin{figure}[h!]
	\includegraphics[width=\linewidth]{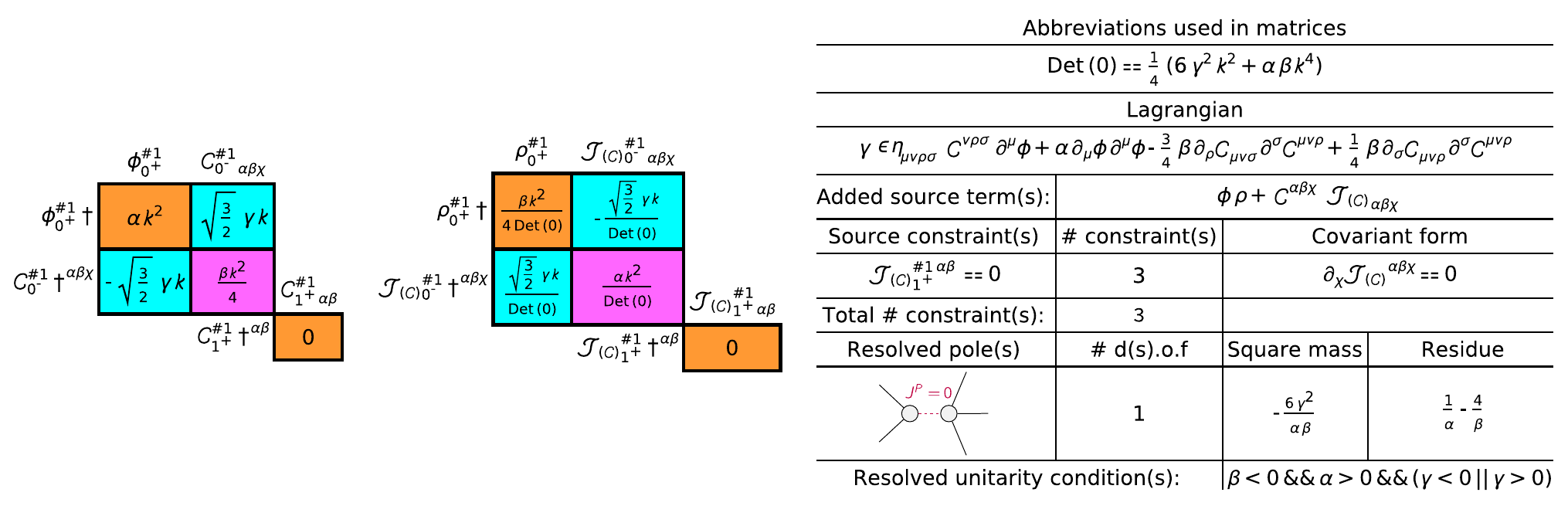}
	\caption{The spectrograph of the `zero-by-three' CSKR model in~\cref{ZeroByThreeCSKRTheory}. All quantities are defined in~\cref{FieldKinematicsThreeFormField,FieldKinematicsScalarField}.}
\label{ParticleSpectrographZeroByThreeCSKRTheory}
\end{figure}

\subsection{Einstein--Cartan gravity}\label{sec:EC_gravity}

\paragraph*{Localizing the Poincar\'{e} group} Gravitational interactions emerge organically by gauging the Lorentz group; i.e., promoting the global symmetry under Lorentz rotations --- which is an established feature of all laboratory physics --- to a local symmetry. By combining this with the local symmetry of translations --- which in general relativity is already manifest as general covariance --- one arrives at a gauge theory of the whole Poincar\'{e} group~\cite{Utiyama:1956sy,Kibble:1961ba,Sciama:1962}. This calls for the introduction of two gauge fields: the (co-)tetrad~$\tensor{e}{^\alpha_{\SmashAcute\mu}}$ and spin connection~$\tensor{\omega}{^{\alpha\beta}_{\SmashAcute\mu}}\equiv\tensor{\omega}{^{[\alpha\beta]}_{\SmashAcute\mu}}$, resulting in the Einstein--Cartan (EC) formulation of gravity.\footnote{Extra pedantry may avoid confusion resulting from our choice of words here. Strictly speaking, `\emph{EC gravity}' refers exclusively to the framing of the Einstein--Hilbert action in this formulation, and then only when it is combined with the geometric interpretation in which the spacetime genuinely has curvature and torsion~\cite{Cartan:1922,Cartan:1923,Cartan:1924,Cartan:1925,Einstein:1925, Einstein:1928,Einstein:19282}. Physics in general, and particle physics in particular, are utterly oblivious to our interpretation of the geometry. It comes as no surprise, therefore, that one can also gauge the Poincar\'{e} group entirely within Minkowski spacetime~\cite{Utiyama:1956sy, Sciama:1962, Kibble:1961ba} (see also~\cite{Barker:2020elg,Barker:2020gcp,Barker:2021oez,Rigouzzo:2022yan,VandepeerBarker:2022xnp,Barker:2022jsh,Barker:2023bmr,Barker:2023bmr}). Properly, only this Minkowski formulation is referred to as Poincar\'{e} gauge theory (PGT). As with EC gravity, PGT has also become associated for historical reasons with a \emph{theory}, not just with a formulation. PGT extends the Einstein--Hilbert action to include all other scalar invariants which are quadratic in the translational and rotational field strength tensors (geometrically, the torsion and the curvature in~\cref{TorsionCurvature}). We deal with this much larger model in~\cref{app:PGT_comparison}.} To be consistent with the notation used so far, Greek letters such as~$\a$ and~$\b$ here continue to stand for internal Lorentz indexes, as manipulated with the Minkowski metric, whereas accented Greek letters such as~$\SmashAcute\mu$ and~$\SmashAcute\nu$ stand for curved spacetime indexes. The curved-space metric is~$\tensor{g}{_{\SmashAcute\mu\SmashAcute\nu}}=\tensor{e}{^\alpha_{\SmashAcute\mu}}\tensor{e}{^\beta_{\SmashAcute\nu}}\tensor{\eta}{_{\alpha\beta}}$, and we assume the auxiliary identities~$\tensor{e}{^\alpha_{\SmashAcute\mu}}\tensor{e}{_\alpha^{\SmashAcute\nu}}=\tensor*{\delta}{_{\SmashAcute\mu}^{\SmashAcute\nu}}$ and~$\tensor{e}{^\alpha_{\SmashAcute\mu}}\tensor{e}{_\beta^{\SmashAcute\mu}}=\tensor*{\delta}{_\beta^\alpha}$ which function as extra kinematic restrictions. The associated field strength tensors, out of which the action is built, are torsion and curvature
\begin{equation}\label{TorsionCurvature}
	\begin{aligned}
		\ECT{^\alpha_{\SmashAcute\mu\SmashAcute\nu}} & \equiv \PD{_{\SmashAcute\mu}}\tensor{e}{^\alpha_{\SmashAcute\nu}} - \PD{_{\SmashAcute\nu}} \tensor{e}{^\alpha_{\SmashAcute\mu}} + \tensor{\omega}{^{\alpha}_{\beta\SmashAcute\mu}}\tensor{e}{^\beta_{\SmashAcute\nu}} - \tensor{\omega}{^{\alpha}_{\beta\SmashAcute\nu}}\tensor{e}{^\beta_{\SmashAcute\mu}}  \ ,
	\\
		\ECR{^{\alpha\beta}_{\SmashAcute\mu\SmashAcute\nu}} & \equiv \PD{_{\SmashAcute\mu}}\tensor{\omega}{^{\alpha\beta}_{\SmashAcute\nu}} - \PD{_{\SmashAcute\nu}}\tensor{\omega}{^{\alpha\beta}_{\SmashAcute\mu}} + \tensor{\omega}{^\alpha_{\gamma\SmashAcute\mu}}\tensor{\omega}{^{\gamma\beta}_{\SmashAcute\nu}} - \tensor{\omega}{^\alpha_{\gamma\SmashAcute\nu}}\tensor{\omega}{^{\gamma\beta}_{\SmashAcute\mu}} \ .
	\end{aligned}
\end{equation}

To extract the flat particle content of a theory --- provided that Minkowski spacetime is an admissible perturbative background in that no `accidental' gauge symmetries are present\footnote{Any gauge symmetry which is either broken non-linearly or absent on non-flat backgrounds is termed `accidental'~\cite{Percacci:2020ddy}. This feature necessarily signals a pathology~\cite{Karananas:2024hoh,Karananas:2024qrz}.} --- the tetrad is perturbed around the `Kronecker' choice of vacuum (see alternative vacua in~\cite{Barker:2023bmr,Blixt:2022rpl,Blixt:2023qbg}) so that 
\begin{equation}\label{DefTetradPerturbation}
	 \tensor{e}{^\alpha_{\SmashAcute{\mu}}} \equiv \tensor*{\delta}{^\alpha_{\SmashAcute{\mu}}}+\tensor{f}{^\alpha_{\SmashAcute\mu}} \ ,
	\quad
	\tensor{e}{_\alpha^{\SmashAcute{\mu}}} \equiv \tensor*{\delta}{_\alpha^{\SmashAcute{\mu}}}-\tensor{f}{_\alpha^{\SmashAcute{\mu}}}+\mathcal{O}\big(f^2\big) \ ,
\end{equation}
and~\cref{DefTetradPerturbation} defines a concrete perturbation scheme in~$\tensor{f}{^\alpha_{\SmashAcute\mu}}$ which makes it evident that --- at the quadratic order of the free theory in~\cref{BasicPositionAction} --- Greek and accented Greek indices can be freely exchanged. To complete our setup of the weak-field regime, we assume that~$\tensor{\omega}{^{\alpha\beta}_{\SmashAcute\mu}}$ is inherently perturbative.

\subsubsection{Definitions}

\paragraph*{Tetrad perturbation} We define the asymmetric rank-two \lstinline!TetradPerturbation!:
\lstinputlisting{TetradPerturbation.tex}
	The output is shown in~\cref{FieldKinematicsTetradPerturbation}. Neglecting the (higher-order) distinction between indices, the field~$\tensor{f}{^\a_\b}$ is found to contain~$2^+$,~$1^-$,~$1^+$ and two~$0^+$ modes. The conjugate source~$\tensor{\tau}{_\a^\b}$ has a physical interpretation as the asymmetric stress-energy tensor, and it is automatically defined by \PSALTer{}.

\begin{table*}[t!]
	\includegraphics[width=\linewidth]{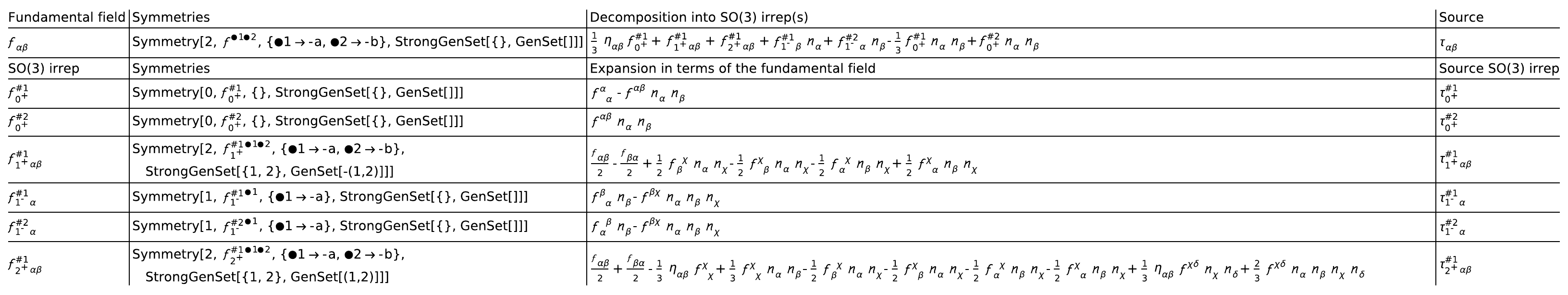}
	\caption{The declaration of \lstinline!TetradPerturbation!. These definitions are used in~\cref{ParticleSpectrographScalarParityViolatingPGT,ParticleSpectrographGeneralParityViolatingPGT}.}
	\label{FieldKinematicsTetradPerturbation}
\end{table*}

\paragraph*{Spin connection} We define the pair-antisymmetric \lstinline!SpinConnection! as:
\lstinputlisting{SpinConnection.tex}
The output is shown in~\cref{FieldKinematicsSpinConnection}. Working again exclusively with the Lorentz indices at lowest order, the field~$\tensor{\omega}{^{\alpha\beta}_{\gamma}}$ is found to contain~$2^+$,~$2^-$, two~$1^-$, two~$1^+$,~$0^+$ and~$0^-$ modes. The conjugate source~$\tensor{\sigma}{_{\alpha\beta}^{\gamma}}$ has a physical interpretation as the spin current, and it is automatically defined by \PSALTer{}.

\begin{table*}[t!]
	\includegraphics[width=\linewidth]{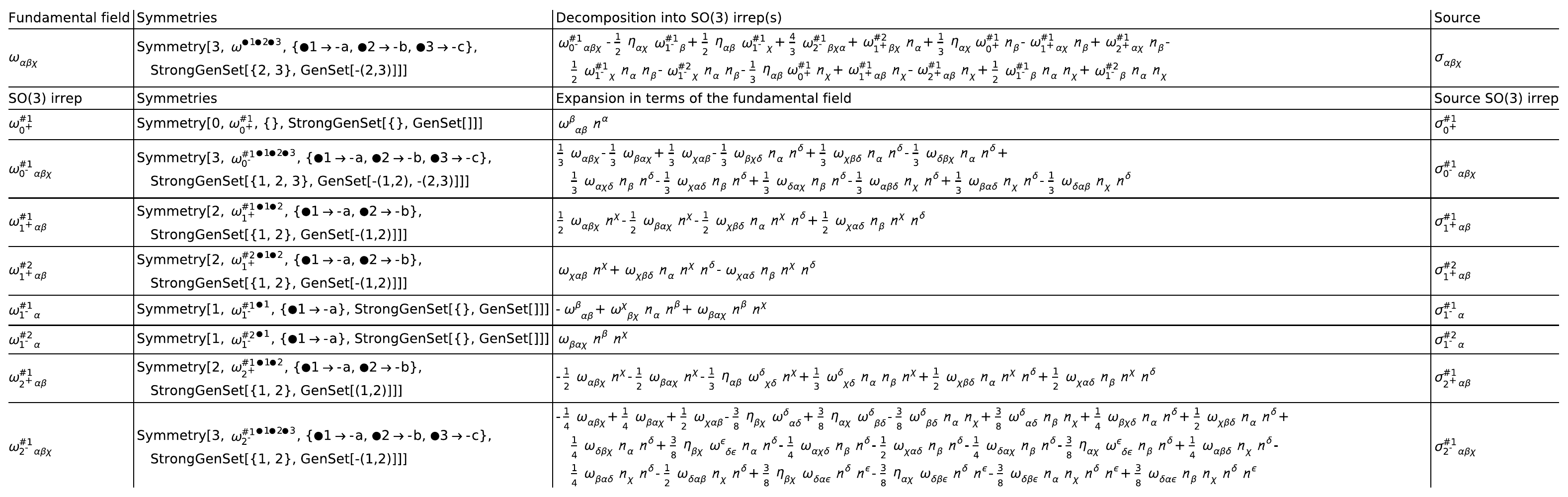}
	\caption{The declaration of \lstinline!SpinConnection!. These definitions are used in~\cref{ParticleSpectrographScalarParityViolatingPGT,ParticleSpectrographGeneralParityViolatingPGT}.}
	\label{FieldKinematicsSpinConnection}
\end{table*}   

\paragraph*{Lagrangian coupling coefficients} As with our analysis in~\cref{ParticleSpectrographZeroByThreeCSKRTheory}, we also define \lstinline!C1! so as to denote the coupling~$c_1$:
\lstinputlisting{C1.tex}
For the following considerations, there are ten Lagrangian coupling coefficients that need to be introduced, ranging up to \lstinline!C10! which denotes~$c_{10}$. Of these, the first five are dimensionless, and the final five are of mass dimension two.

\subsubsection{Spectroscopy}\label{ECSpectroscopy}

\paragraph*{Parity-indefinite Einstein--Cartan gravity} The tetrad and spin connection have in total 40 components (16 are in~$\tensor{e}{^\alpha_{\SmashAcute\mu}}$ and 24 in the pair-antisymmetric~$\tensor{\omega}{^{\alpha\beta}_{\SmashAcute\mu}}$), and if the action is taken to contain terms which are at most quadratic in the derivatives, then the theory can accommodate up to 20 propagating d.o.f.\footnote{These are the two massless graviton polarisations, and three pairs of massive particles of spin two (two times five d.o.f), spin one (two times three d.o.f) and spin zero (two times one d.o.f).}~It is well known~\cite{Sezgin:1979zf,Hayashi:1980qp,Karananas:2014pxa} that even linearly, not all of these can harmoniously coexist. Although there are choices of parameters for which particles carrying spin~$J\ge 1$ may be healthy at the linearized level, it is widely believed that when it comes to (at least parity-preserving) EC gravity,  the only non-linearly viable models propagate exclusively scalar modes~\cite{Hecht:1996np,Chen:1998ad,Yo:1999ex,Yo:2001sy}.\footnote{See also~\cite{Yo:2006qs,Shie:2008ms,Chen:2009at,Baekler:2010fr,Ho:2011qn,Ho:2011xf,Ho:2015ulu,Tseng:2018feo,Zhang:2019mhd,Zhang:2019xek,MaldonadoTorralba:2020mbh,delaCruzDombriz:2021nrg,Barker:2024dhb} for applications and~\cite{Puetzfeld:2004yg,Ni:2009fg,Puetzfeld:2014sja,Ni:2015poa,Barker:2020gcp} for reviews.} The most general parity-indefinite action that propagates the graviton and scalars only, comprises all invariants which are at most quadratic in torsion and the scalar~$\ECR{}$ and pseudoscalar~$\Holst{}$ curvatures, whose definitions are
\begin{equation}\label{ScalarConditions}
	\ECR{} \equiv \tensor{e}{_\alpha^{\SmashAcute\mu}}\tensor{e}{_\beta^{\SmashAcute\nu}}\ECR{^{\alpha\beta}_{\SmashAcute\mu\SmashAcute\nu}} \ ,\quad \Holst{} \equiv \epsilon^{\alpha\beta\kappa\lambda} \tensor{\eta}{_{\alpha\gamma}}\tensor{\eta}{_{\beta\delta}}\tensor{e}{_\kappa^{\SmashAcute\mu}}\tensor{e}{_\lambda^{\SmashAcute\nu}}\ECR{^{\gamma\delta}_{\SmashAcute\mu\SmashAcute\nu}} \ .
\end{equation}
To accompany~\cref{ScalarConditions} we define~$e\equiv \det(\tensor{e}{^\alpha_{\SmashAcute\mu}})$, and the Lorentz-indexed torsion~$\ECT{_{\mu\nu\rho}}\equiv\tensor{\eta}{_{\mu\sigma}}\tensor{e}{_\nu^{\SmashAcute\mu}}\tensor{e}{^\rho_{\SmashAcute\nu}}\ECT{^\sigma_{\SmashAcute\mu\SmashAcute\nu}}$, along with its trace~$\ECT{_\nu}\equiv\tensor{\eta}{^{\mu\rho}}\ECT{_{\mu\nu\rho}}$. Accounting for all possible  combinations, this action reads\,\footnote{\label{foot:holst}Note that up to a total derivative
\be
\ECR{} \propto \tensor{\epsilon}{^{\m\n\rho\s}} \ECT{_{\m\n\lambda}}\ECT{_{\rho\s}^{\lambda}} \ ,
\ee
so it is not really necessary to include the~$c_2$-term in the action. 
}
\begin{align}\label{eq:EC_parity_indefinite_general}
	S = \int \diff^4x \,e \Big[& c_1\ECR{} +c_2 \Holst{} +c_3 \ECR{}^2 +c_4 \ECR{}\Holst{} + c_5 \Holst{}^2+ c_6 \ECT{_{\mu\nu\rho}}\ECT{^{\mu\nu\rho}} \nonumber \\
	& + c_7 \ECT{_{\mu\nu\rho}}\ECT{^{\nu\rho\mu}} + c_8 \ECT{_\mu}\ECT{^\mu} +c_9\tensor{\epsilon}{^{\mu\nu\rho\sigma}}\ECT{_{\lambda\mu\nu}}\ECT{^{\lambda}_{\rho\sigma}} +c_{10}\tensor{\epsilon}{^{\m\n\rho\s}}\ECT{_{\mu\nu\lambda}}\ECT{_{\rho\sigma}^{\lambda}}\Big] \ .
\end{align}
The theory in~\cref{eq:EC_parity_indefinite_general} is an extension of that proposed in~\cite{Karananas:2024xja} (see also~\cite{Karananas:2025xcv,BeltranJimenez:2019hrm}), which uses terms solely quadratic in~$\ECR{}$ and~$\Holst{}$. It was shown in~\cite{Karananas:2024hoh,Karananas:2024qrz} that, in isolation, the square of the scalar curvature  propagates the Einstein graviton on a de Sitter background, but that this species becomes strongly coupled on Minkowski spacetime. The inclusion of the Einstein--Hilbert term guarantees, however, that no accidental symmetries arise and that perturbation theory makes sense also on flat backgrounds. Then,~$\ECR{}$~and~$\ECR{}^2$ propagate a positive-parity scalar mode \emph{in addition to} the massless graviton.\footnote{Even when the geometry is taken to be torsion-free, this mode persists and is instead associated with the scalar mode of the metric tensor. This is the well-known Starobinsky scalaron~\cite{Starobinsky:1980te}.} By contrast,~$\Holst{}^2$ propagates a negative-parity scalar mode~\cite{Hecht:1996np}, emerging from the axial vector part of torsion. Given its non-linear consistency, the evident lack of accidental gauge symmetries, and the interesting roles the scalars of gravitational origin can play in particle physics and cosmology~\cite{Karananas:2024xja,Gialamas:2024iyu,Karananas:2025xcv}, the model defined by~\cref{eq:EC_parity_indefinite_general} stands out and certainly deserves further scrutiny. To study the general case of~\cref{eq:EC_parity_indefinite_general} we input:
\lstinputlisting{ScalarParityViolatingPGT.tex}
The resulting particle spectrum is shown in~\cref{ParticleSpectrographScalarParityViolatingPGT}. The graviton is, as expected, present since it is the natural companion of the Einstein--Hilbert term. It is healthy as long as
\begin{equation}\label{eq:massless_noghost}
c_1 < 0 \ .
\end{equation}
We also notice that the particle content contains two scalars which are dynamical. The current functionality in \PSALTer{} does not yet allow these states to be resolved, due to the fact that their square masses~$\PVSquareMass{0}{1}$ and~$\PVSquareMass{0}{2}$ are not, in general, rational functions of the Lagrangian coupling coefficients in~\cref{eq:EC_parity_indefinite_general}. Let us therefore discuss in detail the no-ghost and no-tachyon constraints for these scalars, by applying manually the improved techniques of~\cref{MassiveSpectrum}. We start from the~$4\times4$ coefficient matrix of the scalar~$J=0$ sector. By comparing~\cref{eq:canonical_block_structure,ParticleSpectrographScalarParityViolatingPGT} we find
\begin{equation}\label{eq:tilde_OJ_actual}
\begin{gathered}
	\WaveOperatorJVV{0} =\Chequer{ -24 c_5 k^2 - \Upsilon_4 }{ 2i(3c_4 k^2 + \Upsilon_2) }{ 2i(3c_4 k^2 + \Upsilon_2) }{ \f 1 2 (12 c_3 k^2 -\Upsilon1) } \ ,
\quad
	\WaveOperatorJVL{0} =\NormalMatrix{-2\sqrt{2}\Upsilon_2 k }{ 0 }{ -i\f{\Upsilon_1}{\sqrt 2} k }{ 0 } \ ,
\\
	\WaveOperatorJLV{0} =\NormalMatrix{-2\sqrt{2}\Upsilon_2 k }{ i\f{\Upsilon_1}{\sqrt 2} k  }{ 0}{ 0 } \ ,
\quad
	\WaveOperatorJLL{0} =\ChequerMain{ \Upsilon_3 k^2 }{ 0 }{ 0 }{ 0 } \ ,
\end{gathered}
\end{equation}
where the definitions of the coupling abbreviations in~\cref{eq:tilde_OJ_actual} can also be found in~\cref{ParticleSpectrographScalarParityViolatingPGT}. Then, from~\cref{eq:tilde_OJ_actual} we see that the kinetic- and mass-matrices in~\cref{DiagonalizedWaveOperator} are
\begin{equation}\label{eq:kin_mass_matrices}
	\KineticJ{0} = 6\Chequer{-4c_5 }{ i c_4}{i c_4 }{ c_3 },
	\quad
	\MassJ{0} = \frac{1}{\Upsilon_3}\Chequer{ 8\Upsilon_2^2-\Upsilon_3 \Upsilon_4 }{ 2i \Upsilon_2(\Upsilon_1+\Upsilon_3) }{ 2i \Upsilon_2 (\Upsilon_1+\Upsilon_3) }{ -\f{\Upsilon_1}{2}(\Upsilon_1+\Upsilon_3) }.
\end{equation}
According to~\cref{eq:kin_matrix_negative}, the theory is ghost-free provided that~$\KineticJ{0}$ is negative-definite, i.e.
\begin{equation}\label{eq:no_ghost_scalar_EC}
	\Big[ c_5>0\Big]\wedge \Big[ 4c_3 c_5 - c_4^2 > 0 \Big]\ ,
\end{equation}
and from~\cref{eq:no_ghost_scalar_EC} it follows that~$c_3>0$. Since~$\MassJ{0}$ is a~$2\times 2$ matrix, tachyon-freedom further requires~\cite{Blagojevic:2018dpz}
\begin{equation}
\label{eq:mass_constraints}
	\left[ \tr\left(\PostMassJ{0}\right)^2 - 4 \det\left(\PostMassJ{0}\right) > 0 \right] \wedge \left[ \tr\left(\PostMassJ{0}\right) > 0 \right]\wedge\left[ \det\left(\PostMassJ{0}\right) > 0 \right], \quad \PostMassJ{0}\equiv-\KineticInverseJ{0}\MassJ{0}.
\end{equation}
It can be easily checked that the second and third inequalities in~\cref{eq:mass_constraints} translate into
\begin{equation}\label{eq:mvpre}
	\begin{aligned}
		&\left[\frac{c_3(\Upsilon_3 \Upsilon_4 - 8\Upsilon_2^2)-2(\Upsilon_1+\Upsilon_3)(2c_4 \Upsilon_2+c_5 \Upsilon_1)}{\Upsilon_3(c_4^2-4c_3c_5)}> 0 \right]
		\\&\hspace{240pt}
		\wedge \left[\frac{(\Upsilon_1+\Upsilon_3)(\Upsilon_1 \Upsilon_4+8\Upsilon_2^2)}{\Upsilon_3(c_4^2-4c_3c_5)}> 0\right] \ ,
	\end{aligned}
\end{equation}
respectively. The latter inequality in~\cref{eq:mvpre}, once the massless~\cref{eq:massless_noghost} and massive~\cref{eq:no_ghost_scalar_EC} no-ghost conditions are taken into account, together with~$\Upsilon_1+\Upsilon_3=2c_1$, gives
\begin{equation}
\label{eq:no_tachyon_cond_1}
\Upsilon_3(\Upsilon_1 \Upsilon_4+8\Upsilon_2^2) > 0 \ ,
\end{equation}
and from the former --- following more-or-less verbatim the procedure spelled out in~\cite{Blagojevic:2018dpz} for simplifying such expressions --- one finds that
\begin{equation}
\label{eq:no_tachyon_cond_2}
\Upsilon_1 \Upsilon_3 <0 \ .
\end{equation}
In terms of the Lagrangian coupling coefficients, the conditions in~\cref{eq:no_tachyon_cond_1,eq:no_tachyon_cond_2} read
\begin{equation}\label{eq:no_tachyon_cond_3}
\begin{gathered}
	\bigg[(2c_6-c_7+3c_8)\left[(2c_1-2c_6+c_7-3c_8)(c_1-4(c_6+c_7))+8(c_2+c_{10}-2c_9)^2\right] > 0\bigg]
\\
	\wedge\bigg[(2c_6-c_7+3c_8)(2c_1-2c_6+c_7-3c_8) < 0\bigg] \ ,
\end{gathered}
\end{equation}
which are the well-known constraints~\cite{Blagojevic:2018dpz}, albeit written in a different notation. 
Accordingly, the theory in~\cref{eq:EC_parity_indefinite_general} is unitary if~\cref{eq:massless_noghost,eq:no_ghost_scalar_EC,eq:no_tachyon_cond_3} are satisfied.

\begin{figure}[htbp]
	\includegraphics[width=.9\linewidth]{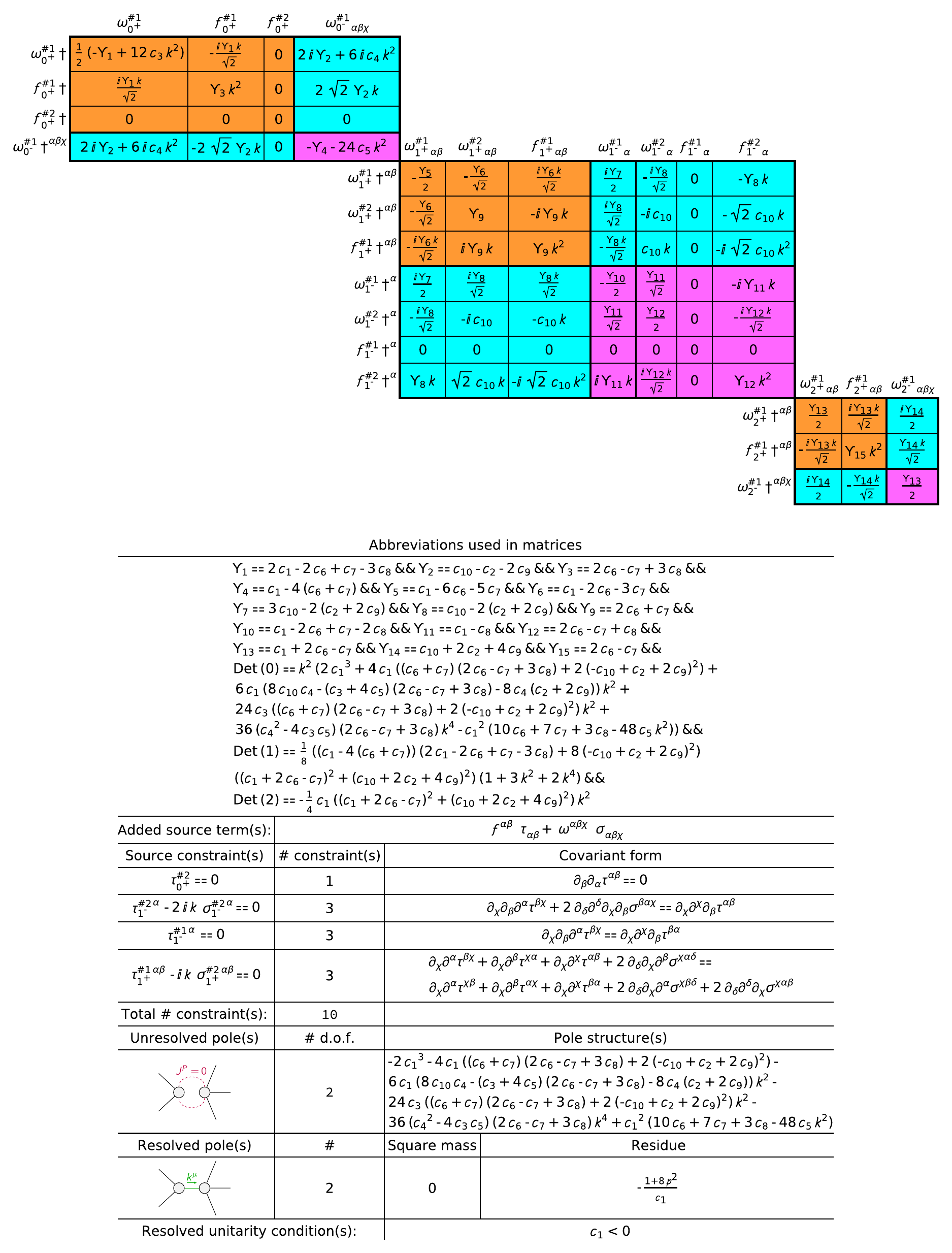}
	\caption{The spectrograph of the theory defined by~\cref{eq:EC_parity_indefinite_general}. In the current version of \PSALTer{}, the analysis is halted once it has been determined that the square masses are not rational functions of the Lagrangian coupling coefficients. Thus, the only `resolved' unitarity condition corresponds to the graviton in~\cref{eq:massless_noghost}. The existence of two `unresolved' poles is indicated. Our improved algorithm, not yet implemented in \PSALTer{}, recovers the associated conditions~\cref{eq:no_ghost_scalar_EC,eq:no_tachyon_cond_3}. 
    Note also the appearance of ten gauge generators --- owing to the Poincar\'e gauge redundancy --- ensuring the absence of accidental symmetries. 
    All quantities are defined in~\cref{FieldKinematicsTetradPerturbation,FieldKinematicsSpinConnection}.}
\label{ParticleSpectrographScalarParityViolatingPGT}
\end{figure}

\section{Conclusions}\label{Conclusions}

\paragraph*{Results of this paper} The spin-projection operator algorithm for parity-violating tensorial field theories, with its accompanying conventions, was implemented as part of the the Particle Spectrum for Any Tensor Lagrangian~(\PSALTer{}) software initiative.~\PSALTer{} is an open-source package contribution to the \xAct{} project, designed for use with \Mathematica{} (see~\cref{Install} for instructions on how to obtain and install the software). As an illustration, the new functionality in~\PSALTer{} was calibrated against a number of examples, confirming the analytic results. As a byproduct, we suggested a simplified way to obtain no-ghost conditions. This does not involve the computation of any propagator residues over massive poles, which greatly facilitates the analysis. 

\paragraph*{Further work} The techniques developed in this paper open the door to a systematic and comprehensive investigation into the particle content of parity-violating gravitational theories and in particular metric-affine gravity (see~\cite{Percacci:2020ddy} for the most comprehensive study of parity-preserving MAG), which is completely unexplored. 

\begin{acknowledgments}
This work was improved by many useful discussions with Justin Feng, Mike Hobson, Anthony Lasenby, Yun-Cherng Lin, Carlo Marzo and Claire Rigouzzo. We are grateful to Oleg Melichev, Roberto Percacci and especially Sebastian Zell for comments on the manuscript. WB is grateful for the support of Girton College, Cambridge, Marie Skłodowska-Curie Actions and the Institute of Physics of the Czech Academy of Sciences.

    This work used the DiRAC Data Intensive service~(CSD3 \href{www.csd3.cam.ac.uk}{www.csd3.cam.ac.uk}) at the University of Cambridge, managed by the University of Cambridge University Information Services on behalf of the STFC DiRAC HPC Facility~(\href{www.dirac.ac.uk}{www.dirac.ac.uk}). The DiRAC component of CSD3 at Cambridge was funded by BEIS, UKRI and STFC capital funding and STFC operations grants. DiRAC is part of the UKRI Digital Research Infrastructure. This work also used the Newton compute server, access to which was provisioned by Will Handley using an ERC grant.

\leavevmode

\paragraph*{Disclaimer} Co-funded by the European Union (Physics for Future – Grant Agreement No. 101081515). Views and opinions expressed are however those of the author(s) only and do not necessarily reflect those of the European Union or European Research Executive Agency. Neither the European Union nor the granting authority can be held responsible for them.
\end{acknowledgments}

\appendix

\section{Construction of operators}\label{ConstructionOfSpinProjectionOperators}

\paragraph*{The spatial hypersurface} In this appendix we bootstrap the construction of SPOs, including the new parity-violating SPOs, and set out the precise conventions that are implemented in \PSALTer{}. Projection by spin~$J$ and parity~$P$ is facilitated by choosing a preferred frame. We can derive this frame from the particle four-momentum~$\tensor{k}{^\mu}$, which can be chosen to be either timelike or null for massive and massless particles, respectively. In the timelike case, there is a unit vector~$\N{^\mu}\equiv\tensor{k}{^\mu}/k$, where~$k^2\equiv\tensor{k}{^\mu}\tensor{k}{_\mu}$, so that~$\N{^\mu}$ coincides with the preferred observer's four-velocity
\begin{equation}\label{UnitTimelikeComponents}
	\left[\N{^\mu}\right]^\mathrm{T}=\begin{bmatrix}1 & 0 & 0 & 0\end{bmatrix}.
\end{equation}
The usual transverse and longitudinal  projectors, parallel and perpendicular to the spatial hypersurface,  are
\begin{equation}
\label{eq:app_vector_projectors}
    \ProjPara{_\mu^\nu}\equiv\tensor*{\delta}{_\mu^\nu}-\N{_\mu}\N{^\mu} \ ,\quad \ProjPerp{_\mu^\nu}\equiv\N{_\mu}\N{^\mu} \ , 
\end{equation}
respectively.  We also introduce the overbar notation for indices, such as~$\tensor{V}{^{\overline{\mu}}}=\ProjPara{_\nu^\mu}\tensor{V}{^\nu}$, for a generic vector~$\tensor{V}{^\nu}$.

\paragraph*{Parity-preserving SPOs} The parity-preserving SPOs include both diagonal and off-diagonal SPOs, and are constructed exclusively from~$\ProjPara{_\mu^\nu}$ and~$\ProjPerp{_\mu^\nu}$ in~\cref{eq:app_vector_projectors}. They satisfy the following identities
\begin{subequations}
	\begin{gather}
		\SPODownUp{J}{P}{X}{Y}{i}{j}{\mu}{\nu}\equiv\SPOUpDown{J}{P}{Y}{X}{j}{i}{\nu}{\mu},\label{eq:app_Hermicity}
		\\
		\sum_{J,P}\sum_{\States{J}{P}{X}{i}}\SPODownUp{J}{P}{X}{X}{i}{i}{\mu}{\nu}\equiv\tensor*{\Delta}{^{\nu_X}_{\mu_X}},\label{eq:app_Completeness}
		\\	
		\SPODownUp{J}{P}{X}{Y}{i}{j}{\mu}{\nu}\SPODownUp{\Jp{}}{\Pp{}}{Y}{Z}{k}{l}{\nu}{\sigma}
		\equiv
		\tensor*{\delta}{_{jk}}
		\tensor*{\delta}{_{J\Jp{}}}
		\tensor*{\delta}{_{P\Pp{}}}
		\SPODownUp{J}{P}{X}{Z}{i}{l}{\mu}{\sigma},\label{eq:app_Orthonormality}
		\\
		P\FieldDown{X}{\mu}^*
		\SPOUpDown{J}{P}{X}{X}{i}{i}{\mu}{\nu}
		\FieldUp{X}{\nu}\geq 0,\label{eq:app_Positivity}
	\end{gather}
\end{subequations}
where~\cref{eq:app_Hermicity,eq:app_Completeness,eq:app_Orthonormality,eq:app_Positivity} encode symmetry,\footnote{Note that in~\cite{Barker:2024juc} the property of Hermicity was used instead of symmetry: this is because we now assume the parity-preserving SPOs to be real.}~completeness, orthonormality,\footnote{Note that in~\cite{Barker:2024juc} the more restrictive relation~$\SPODownUp{J}{P}{X}{X}{i}{j}{\mu}{\nu}\SPODownUp{\Jp{}}{\Pp{}}{X}{X}{k}{l}{\nu}{\sigma}=\tensor*{\delta}{_{jk}}\tensor*{\delta}{_{J\Jp{}}}\tensor*{\delta}{_{P\Pp{}}}\SPODownUp{J}{P}{X}{X}{i}{l}{\mu}{\sigma}$ was used instead of~\cref{eq:app_Orthonormality}, though the latter was in fact also true.} and positivity, respectively. 

\paragraph*{Reduced-index SPOs} In order to actually \emph{construct} these parity-preserving SPOs, we may proceed as follows. By the usual methods of Young tableaux and trace-free decomposition, the \emph{reduced-index}~$J^P$ states may be extracted manually so as to define the reduced-index SPOs
\begin{equation}\label{eq:app_ReducedIndex}
	\FieldDownState{J}{P}{X}{i}{\mu}
	\equiv
	\ReducedSPODownUp{J}{P}{X}{i}{\mu}{\nu}
	\FieldDown{X}{\nu},
\end{equation}
where~$\ParallelFieldIndices{J}{P}{\mu}$ is a reduced collection of parallel indices specific to the~$J^P$ state. The term \emph{reduced} here means that there may be fewer indices than in~$\FieldIndices{X}{\mu}$ for some or all of the fields~$X$ which contain states with this~$J^P$. For example, a high-rank tensor can contain many scalar states, however scalars do not require indices. The reduced-index states vanish upon contraction of any parallel indices~(i.e. they are trace-free), and these indices will also carry further symmetry properties. As a consequence of these constraints, each reduced-index state has only~$2J+1$ independent components, corresponding to the spin multiplicity. The reduced-index SPOs in~\cref{eq:app_ReducedIndex} are real projection operators constructed --- once again --- exclusively from~$\ProjPara{_\mu^\nu}$ and~$\ProjPerp{_\mu^\nu}$. They need not be normalised in any sense, and their definitions may vary according to conventions. Here, for example, we contrast with~\cite{Barker:2024juc}, in that we will no longer allow~$\tensor{\epsilon}{_{\mu\nu\sigma\lambda}}$ to be used in the definition of reduced-index SPOs: this is because we wish to closely track the parity of the projected states. Consequently, the reduced-index states are always tensors~(in the sense that they are never pseudotensors), and if there are~$N$ reduced indices then the parity is always~$P\equiv~(-1)^{N}$. The parity-preserving SPOs are typically more cumbersome than their reduced-index counterparts, but their properties in~\crefrange{eq:app_Hermicity}{eq:app_Positivity} make them formally useful. Within a given~$J^P$ sector they are given by
\begin{equation}\label{eq:app_Normalisation}
	\begin{aligned}
		&\SPODownUp{J}{P}{X}{Y}{i}{j}{\mu}{\nu}
	\equiv
	\Normalisation{J}{P}{X}{i}
	\Normalisation{J}{P}{Y}{j}
		\ReducedSPOUpDown{J}{P}{X}{i}{\sigma}{\mu}
		\ReducedSPODownUp{J}{P}{Y}{j}{\sigma}{\nu},
	\end{aligned}
\end{equation}
where, for any given choice of unnormalised reduced-index SPOs, the non-zero~$\Normalisation{J}{P}{X}{i}\in\mathbb{R}$ are fixed~(each up to a sign) by the requirements of~\cref{eq:app_Completeness}.\footnote{Note that in~\cite{Barker:2024juc} the use of~$\tensor{\epsilon}{_{\mu\nu\sigma\lambda}}$ in some of the reduced-index SPOs leads instead to~$\Normalisation{J}{P}{X}{i}\in\mathbb{C}$.} The construction in~\cref{eq:app_Normalisation} evidently ensures~\cref{eq:app_Hermicity}. By moving to the frame~$\N{^\mu}$ it is clear for finite fields that~$P\equiv\text{sgn}\left(\FieldDownState{J}{P}{X}{i}{\mu}^*\FieldUpState{J}{P}{X}{i}{\mu}\right)$ due to our choice of signature, and so~\cref{eq:app_Normalisation} also implies~\cref{eq:app_Positivity}. The property in~\cref{eq:app_Orthonormality} already follows from the fact that the extraction of~$J^P$ states is an irreducible decomposition. In fact, we may conclude that
\begin{equation}\label{eq:app_Normalisation2}
	\ReducedSPODownUp{J}{P}{X}{i}{\mu}{\sigma}
	\ReducedSPOUpDown{\Jp{}}{\Pp{}}{X}{j}{\nu}{\sigma}
	\equiv
	\frac{
	\tensor*{\delta}{_{ij}}
	\tensor*{\delta}{_{J\Jp{}}}
	\tensor*{\delta}{_{P\Pp{}}}
}{
	\Normalisation{J}{P}{X}{i}
	\Normalisation{\Jp{}}{\Pp{}}{X}{j}	
}
	\tensor*{\Delta}{^{\ParallelFieldIndices{\Jp{}}{\Pp{}}{\nu}}_{\ParallelFieldIndices{J}{P}{\mu}}}
	,
\end{equation}
where we use the same notation as in~\cref{Completeness}, so that~\cref{eq:app_Normalisation2,eq:app_Normalisation} together imply~\cref{eq:app_Orthonormality}.

\paragraph*{Parity-violating SPOs} We will now extend the above considerations to the case of parity-violating SPOs, comprising only off-diagonal projectors. We define 
\begin{equation}\label{eq:app_Eps}
	\Eps{_{\mu\nu\sigma}}\equiv\Eps{_{\overline{\mu\nu\sigma}}}\equiv\tensor{\epsilon}{_{\mu\nu\sigma\rho}}\N{^{\rho}},
\end{equation}
so that from~\cref{eq:app_Eps} a natural definition for~$\Pp{}\neq P$ is
\begin{equation}\label{eq:app_ParityViolating}
	\begin{aligned}
		&\PVSPODownUp{J}{P}{\Pp{}}{X}{Y}{i}{j}{\mu}{\nu}
	\equiv
	\NormalisationPV{J}\,
	\Normalisation{J}{P}{X}{i}
	\Normalisation{J}{\Pp{}}{Y}{j}
	\ReducedSPOUpDown{J}{P}{X}{i}{\sigma}{\mu}
		\Eps{_{\ParallelFieldIndices{J}{P}{\sigma}}^{\ParallelFieldIndices{J}{\Pp{}}{\rho}}}
	\ReducedSPODownUp{J}{\Pp{}}{Y}{j}{\rho}{\nu}\quad \forall J<2,
	\end{aligned}
\end{equation}
where~$\NormalisationPV{J}$ is a new normalisation constant, which will be explained in a moment. {} In practice, the convention in~\cref{eq:app_ParityViolating} works well for~$J<2$ because the~$0^+$ and~$0^-$ states have zero and three indices, respectively, while the~$1^+$ and~$1^-$ states have two and one indices respectively: these add up to the three indices of~\cref{eq:app_Eps}. The even-odd partitioning leads to a cancellation of signs, so that~\cref{eq:app_ParityViolating} extends~\cref{eq:app_Hermicity} to the more general property
\begin{equation}\label{eq:app_NewHermicity}
	\PVSPODownUp{J}{P}{\Pp{}}{X}{Y}{i}{j}{\mu}{\nu}\equiv \PVSPOUpDown{J}{\Pp{}}{P}{Y}{X}{j}{i}{\nu}{\mu}.
\end{equation}
Note that~\cref{eq:app_NewHermicity} would only imply Hermicity for~$\NormalisationPV{J}\in\mathbb{R}$. It will now be shown that the~$\NormalisationPV{J}$~(and by extension the parity-violating SPOs) are in fact imaginary, so that~\cref{eq:app_NewHermicity} instead implies Hermicity or skew-Hermicity depending on~$P$ and~$\Pp{}$.

\paragraph*{Hermicity or orthonormality} When allowing for parity violation, we must extend the orthonormality condition in~\cref{eq:app_Orthonormality} to
\begin{equation}\label{eq:app_NewOrthonormality}
	\PVSPODownUp{J}{P}{\Pp{}}{X}{Y}{i}{j}{\mu}{\nu}
	\PVSPODownUp{\Jp{}}{\Ppp{}}{\Pppp{}}{Y}{Z}{k}{l}{\nu}{\sigma}
	\equiv
	\tensor*{\delta}{_{jk}}
	\tensor*{\delta}{_{J\Jp{}}}
	\tensor*{\delta}{_{\Pp{}\Ppp{}}}
	\PVSPODownUp{J}{P}{\Pppp{}}{X}{Z}{i}{l}{\mu}{\sigma}.
\end{equation}
Note that~\cref{eq:app_NewOrthonormality} is where we expect~$\NormalisationPV{J}$ to become important: its value will not depend on the~(arbitrary) way in which the reduced-index SPOs are weighted, but rather it will be determined by the combinatoric properties of the totally antisymmetric tensor in~\cref{eq:app_Eps}. First, consider the easy cases~$P=\Pp{}=\Ppp{}\neq\Pppp{}$ or~$P\neq\Pp{}=\Ppp{}=\Pppp{}$, i.e. the products of parity-preserving and parity-violating SPOs, and vice versa. For these cases, the definitions in~\cref{eq:app_Normalisation,eq:app_Normalisation2,eq:app_ParityViolating} automatically imply~\cref{eq:app_NewOrthonormality} for any values of~$\NormalisationPV{J}$. The only other case that can arise is the product of two parity-violating SPOs, i.e.~$\Ppp{}=\Pp{}\neq\Pppp{}= P$. This case yields~(for the only non-vanishing products in which~$\Jp{}= J$ and~$\States{J}{\Pp{}}{Y}{k}=\States{J}{\Pp{}}{Y}{j}$)
\begin{align}
	\PVSPODownUp{J}{P}{\Pp{}}{X}{Y}{i}{j}{\mu}{\nu}
	&
	\PVSPODownUp{J}{\Pp}{P}{Y}{Z}{j}{l}{\nu}{\sigma}
	\equiv
	\nonumber\\
	&
	\NormalisationPV{J}^2
	\Normalisation{J}{P}{X}{i}
	\Normalisation{J}{P}{Z}{l}
	\ReducedSPOUpDown{J}{P}{X}{i}{\sigma}{\mu}
		\Eps{_{\ParallelFieldIndices{J}{P}{\sigma}}^{\ParallelFieldIndices{J}{\Pp{}}{\rho}}}
		\Eps{_{\ParallelFieldIndices{J}{\Pp{}}{\rho}}^{\ParallelFieldIndices{J}{P}{\pi}}}
	\ReducedSPODownUp{J}{P}{Z}{l}{\pi}{\sigma}\quad \forall J<2.
	\label{eq:app_NormalisationRequirement}
\end{align}
We thus see how~\cref{eq:app_NormalisationRequirement} makes it clear why~$\NormalisationPV{J}$ is a~$J$-dependent factor. By comparing~\cref{eq:app_NormalisationRequirement} with~\cref{eq:app_NewOrthonormality} the criterion for~$\NormalisationPV{J}$ to satisfy~\cref{eq:app_NewOrthonormality} is determined to be
\begin{equation}\label{eq:app_Criterion}
		\NormalisationPV{J}^2
		\Eps{_{\ParallelFieldIndices{J}{P}{\sigma}}^{\ParallelFieldIndices{J}{\Pp{}}{\rho}}}
		\Eps{_{\ParallelFieldIndices{J}{\Pp{}}{\rho}}^{\ParallelFieldIndices{J}{P}{\pi}}}
		=
		\tensor*{\Delta}{^{\ParallelFieldIndices{J}{P}{\pi}}_{\ParallelFieldIndices{J}{P}{\sigma}}}.
\end{equation}
The values of~$\NormalisationPV{J}$ thus depend on the totally antisymmetric tensor in~\cref{eq:app_Eps}. The relevant identities are
\begin{equation}\label{eq:app_NormalisationPV}
		\Eps{^{\overline{\mu\nu\sigma}}}
		\Eps{_{\overline{\mu\nu\sigma}}}= -6,
		\quad
		\Eps{_{\overline{\mu}}^{\overline{\nu\sigma}}}
		\Eps{_{\overline{\nu\sigma}}^{\overline{\lambda}}} = 
		-2\tensor*{\delta}{_{\overline{\mu}}^{\overline{\lambda}}}
		,
		\quad
		\Eps{_{\overline{\mu\nu}}^{\overline{\sigma}}}
		\Eps{_{\overline{\sigma}}^{\overline{\lambda\rho}}} = 
		-2\tensor*{\delta}{_{[\overline{\mu}}^{[\overline{\lambda}}}
		\tensor*{\delta}{_{\overline{\nu}]}^{\overline{\rho}]}}
		,
\end{equation}
and these signal a potential problem:~\cref{eq:app_Criterion,eq:app_NormalisationPV} are not consistent with~$\NormalisationPV{J}\in\mathbb{R}$. If the solutions
\begin{equation}\label{eq:app_LowSpinSols}
	\NormalisationPV{0} \equiv  i/\sqrt{6},\quad
	\NormalisationPV{1} \equiv  i/\sqrt{2},
\end{equation}
are allowed, then the symmetry condition in~\cref{eq:app_NewOrthonormality} is no longer consistent with the more basic Hermicity condition. This leads to a ``\emph{catch-22}'' whereby, if one enforces Hermicity, then the factor~$\NormalisationPV{J}^2$ in~\cref{eq:app_Criterion} becomes~$\NormalisationPV{J}^*\NormalisationPV{J}>0$ so that even imaginary solutions fail to satisfy~\cref{eq:app_Criterion,eq:app_NormalisationPV}. We thus arrive at an interesting conclusion: \textit{Hermicity and orthonormality are mutually-exclusive properties of parity-violating spin-projection operators}, as pointed out in~\cite{Karananas:2014pxa}. 
It will be argued presently that orthonormality is a more convenient property than Hermicity. For this paper, therefore, the parity-preserving SPOs will be real~(and Hermitian), whilst the parity-violating SPOs will be imaginary~(and skew-Hermitian).\footnote{In fact, (skew-)Hermicity does not have to be tied to the real or imaginary character of SPOs. In our case,~\cref{eq:app_Hermicity} implies that we are selecting a convention in \PSALTer{} where the parity-preserving SPOs are real, and our ansatz in~\cref{eq:app_ParityViolating} then forces the parity-violating SPOs to be imaginary. One can, in principle, construct an alternative to~\cref{eq:app_ParityViolating} which is not linear in~$\NormalisationPV{J}$ but rather \emph{bilinear} in some other parameter. This would lead to \emph{all} SPOs being real and orthonormal~\cite{Karananas:2014pxa}.} Accordingly,~\cref{eq:app_NewHermicity} may be supplemented by the relation
\begin{equation}\label{eq:app_ChequerHermitianIndices}
	\PVSPOUpDown{J}{P}{\Pp{}}{X}{Y}{i}{j}{\mu}{\nu}^* \equiv 
	P\Pp{}
	\PVSPOUpDown{J}{\Pp{}}{P}{Y}{X}{j}{i}{\nu}{\mu}.
\end{equation}
When the SPOs are arranged in a block structure,~\cref{eq:app_NewHermicity,eq:app_ChequerHermitianIndices} lead to matrix representations with the property of \emph{chequer}-Hermitcity. Chequer-Hermicity is discussed in detail in~\cref{ChequerHermicity}.

\paragraph*{Higher-spin cases} For~$J\geq 2$, an alternative to the convention in~\cref{eq:app_ParityViolating} must be agreed upon, so that~\cref{eq:app_NewHermicity,eq:app_NewOrthonormality,eq:app_ChequerHermitianIndices} are preserved. The current implementation in~\PSALTer{} does not support tensors above the third rank: this means that~$J\leq 3$, and since there is no third-rank~$J^P=3^+$ representation to allow for parity violation in the~$J=3$ sector, new conventions are only needed for~$J=2$. Accordingly, these conventions will be
\begin{equation}\label{eq:app_ParityViolatingTwo}
	\begin{aligned}
		&\PVSPODownUp{2}{+}{-}{X}{Y}{i}{j}{\mu}{\nu}
	\equiv
	\NormalisationPV{2}\,
	\Normalisation{2}{+}{X}{i}
	\Normalisation{2}{-}{Y}{j}
		\ReducedSPOUpDownGeneral{2}{+}{X}{i}{\overline{\sigma\pi}}{\mu}
		\Eps{_{\overline{\sigma}}^{\overline{\rho\kappa}}}
		\ReducedSPODownUpGeneral{2}{-}{Y}{j}{\overline{\rho\kappa\pi}}{\nu},
	\end{aligned}
\end{equation}
where the~\PSALTer{} conventions for the reduced-index SPOs are
\begin{equation}\label{eq:app_ReducedIndexTwoConventions}
	\ReducedSPOUpDownGeneral{2}{+}{X}{i}{\left[\overline{\sigma\pi}\right]}{\mu} \equiv 
	\ReducedSPODownUpGeneral{2}{-}{Y}{j}{\left(\overline{\rho\kappa}\right)\overline{\pi}}{\nu} \equiv 
	\Eps{^{\overline{\rho\kappa\pi}}}
	\ReducedSPODownUpGeneral{2}{-}{Y}{j}{\overline{\rho\kappa\pi}}{\nu} \equiv 0.
\end{equation}
From~\cref{eq:app_Criterion,eq:app_NormalisationPV,eq:app_ParityViolatingTwo,eq:app_ReducedIndexTwoConventions} it follows that~\cref{eq:app_LowSpinSols} is extended by
\begin{equation}\label{eq:app_HighSpinSols}
	\NormalisationPV{2} \equiv  i/\sqrt{2}.
\end{equation}
With the final determination in~\cref{eq:app_HighSpinSols}, all the parity-violating SPOs have been constructed.

\section{Explicit operator formulae}\label{ExplicitSpinProjectorOperators}

\paragraph*{Spin-projection operator tables} In this appendix we provide explicit formulae for the SPOs which are created automatically by \PSALTer{} at runtime. The production of these formulae is not part of the \PSALTer{} functionality, as it was pointed out in~\cite{Barker:2024juc} that \PSALTer{} itself makes the tabulation of SPOs redundant when presenting future spectroscopy results.\footnote{This is because the field kinematics presented in~\cref{FieldKinematicsScalarField,FieldKinematicsVectorField,FieldKinematicsTwoFormField,FieldKinematicsThreeFormField,FieldKinematicsTetradPerturbation,FieldKinematicsSpinConnection} already provides an implicit statement of all the SPOs used, and their explicit formulae may be recovered from these by an application of~\cref{eq:app_ReducedIndex,eq:app_Normalisation,eq:app_ParityViolating,eq:app_ParityViolatingTwo}.} The current paper, however, makes some significant developments in the theory and conventions of the SPOs themselves: explicit formulae may therefore be a useful companion to~\cref{ConstructionOfSpinProjectionOperators}. The SPOs corresponding to the massive two-form and `one-by-two' CSKR theory are somewhat trivial, and these we omit. We provide in~\cref{SPOMatricesZeroByThreeCSKRTheorySpin0,SPOMatricesZeroByThreeCSKRTheorySpin1} the SPOs corresponding to the analyses of `zero-by-three' CSKR theory. We also provide in~\cref{SPOMatricesEinsteinCartanTheorySpin0,SPOMatricesEinsteinCartanTheorySpin1,SPOMatricesEinsteinCartanTheorySpin2} the SPOs corresponding to the analyses of EC theory (i.e., Poincar\'e gauge theory).

\begin{table*}[!t]
	\includegraphics[width=\linewidth]{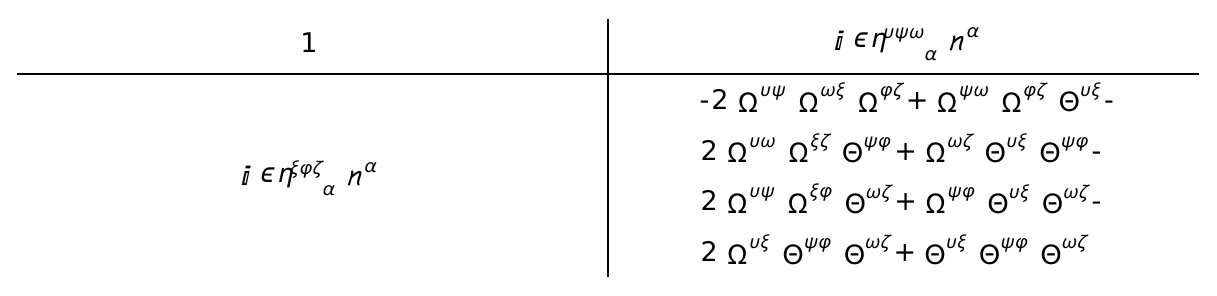}
	\caption{The matrix of spin-projection operators associated with the spin-zero sector of `zero-by-three' CSKR theory. The row-rank ordering of this matrix corresponds exactly to the first diagonal block in~\cref{ParticleSpectrographZeroByThreeCSKRTheory}. See~\cref{UnitTimelikeComponents,eq:app_vector_projectors} for definitions of quantities. Indices to be contracted with the complex conjugate fields~$\FieldDownConj{X}{\mu}$ (if any) are drawn from the set~$\left\{\xi,\varphi,\zeta\right\}$ in order; those to be contracted with~$\FieldUp{X}{\mu}$ are drawn from~$\left\{\upsilon,\psi,\omega\right\}$.}
\label{SPOMatricesZeroByThreeCSKRTheorySpin0}
\end{table*}
\begin{table*}[!t]
	\includegraphics[width=0.5\linewidth]{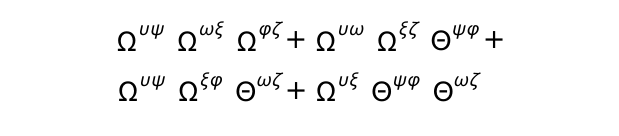}
	\caption{The matrix of spin-projection operators associated with the spin-one sector of `zero-by-three' CSKR theory. The row-rank ordering of this matrix corresponds exactly to the second diagonal block in~\cref{ParticleSpectrographZeroByThreeCSKRTheory}. See~\cref{UnitTimelikeComponents,eq:app_vector_projectors} for definitions of quantities. Indices to be contracted with the complex conjugate fields~$\FieldDownConj{X}{\mu}$ (if any) are drawn from the set~$\left\{\xi,\varphi,\zeta\right\}$ in order; those to be contracted with~$\FieldUp{X}{\mu}$ are drawn from~$\left\{\upsilon,\psi,\omega\right\}$.}
\label{SPOMatricesZeroByThreeCSKRTheorySpin1}
\end{table*}
\begin{table*}[!t]
	\includegraphics[width=\linewidth]{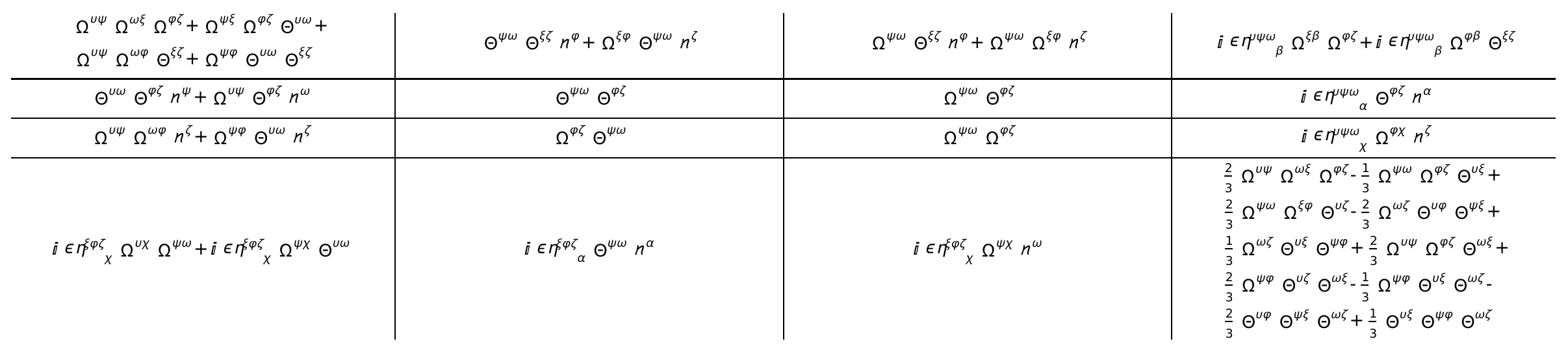}
	\caption{The matrix of spin-projection operators associated with the spin-zero sector of Poincar\'e gauge theory. The row-rank ordering of this matrix corresponds exactly to the first diagonal block in~\cref{ParticleSpectrographScalarParityViolatingPGT,ParticleSpectrographGeneralParityViolatingPGT}. See~\cref{UnitTimelikeComponents,eq:app_vector_projectors} for definitions of quantities. Indices to be contracted with the complex conjugate fields~$\FieldDownConj{X}{\mu}$ (if any) are drawn from the set~$\left\{\xi,\varphi,\zeta\right\}$ in order; those to be contracted with~$\FieldUp{X}{\mu}$ are drawn from~$\left\{\upsilon,\psi,\omega\right\}$.}
\label{SPOMatricesEinsteinCartanTheorySpin0}
\end{table*}
\begin{table*}[!t]
	\includegraphics[width=\linewidth]{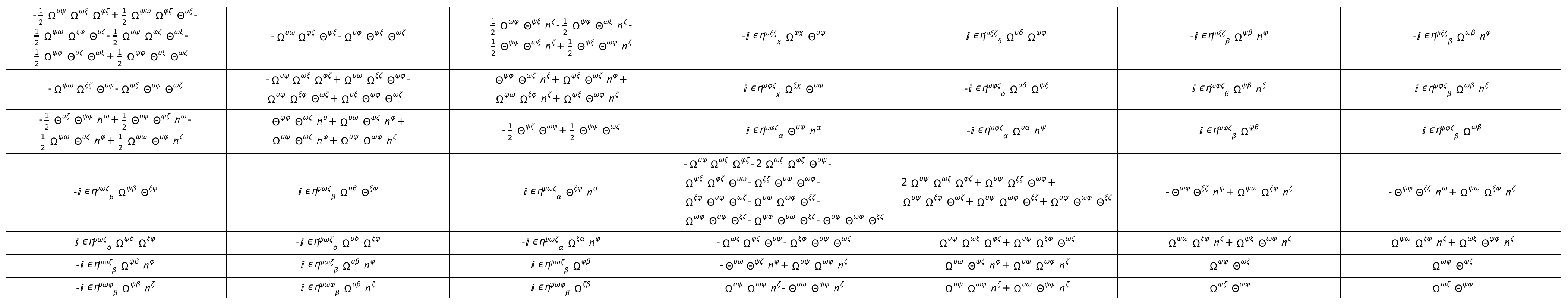}
	\caption{The matrix of spin-projection operators associated with the spin-one sector of Poincar\'e gauge theory. The row-rank ordering of this matrix corresponds exactly to the second diagonal block in~\cref{ParticleSpectrographScalarParityViolatingPGT,ParticleSpectrographGeneralParityViolatingPGT}. See~\cref{UnitTimelikeComponents,eq:app_vector_projectors} for definitions of quantities. Indices to be contracted with the complex conjugate fields~$\FieldDownConj{X}{\mu}$ (if any) are drawn from the set~$\left\{\xi,\varphi,\zeta\right\}$ in order; those to be contracted with~$\FieldUp{X}{\mu}$ are drawn from~$\left\{\upsilon,\psi,\omega\right\}$.}
\label{SPOMatricesEinsteinCartanTheorySpin1}
\end{table*}
\begin{table*}[!t]
	\includegraphics[width=\linewidth]{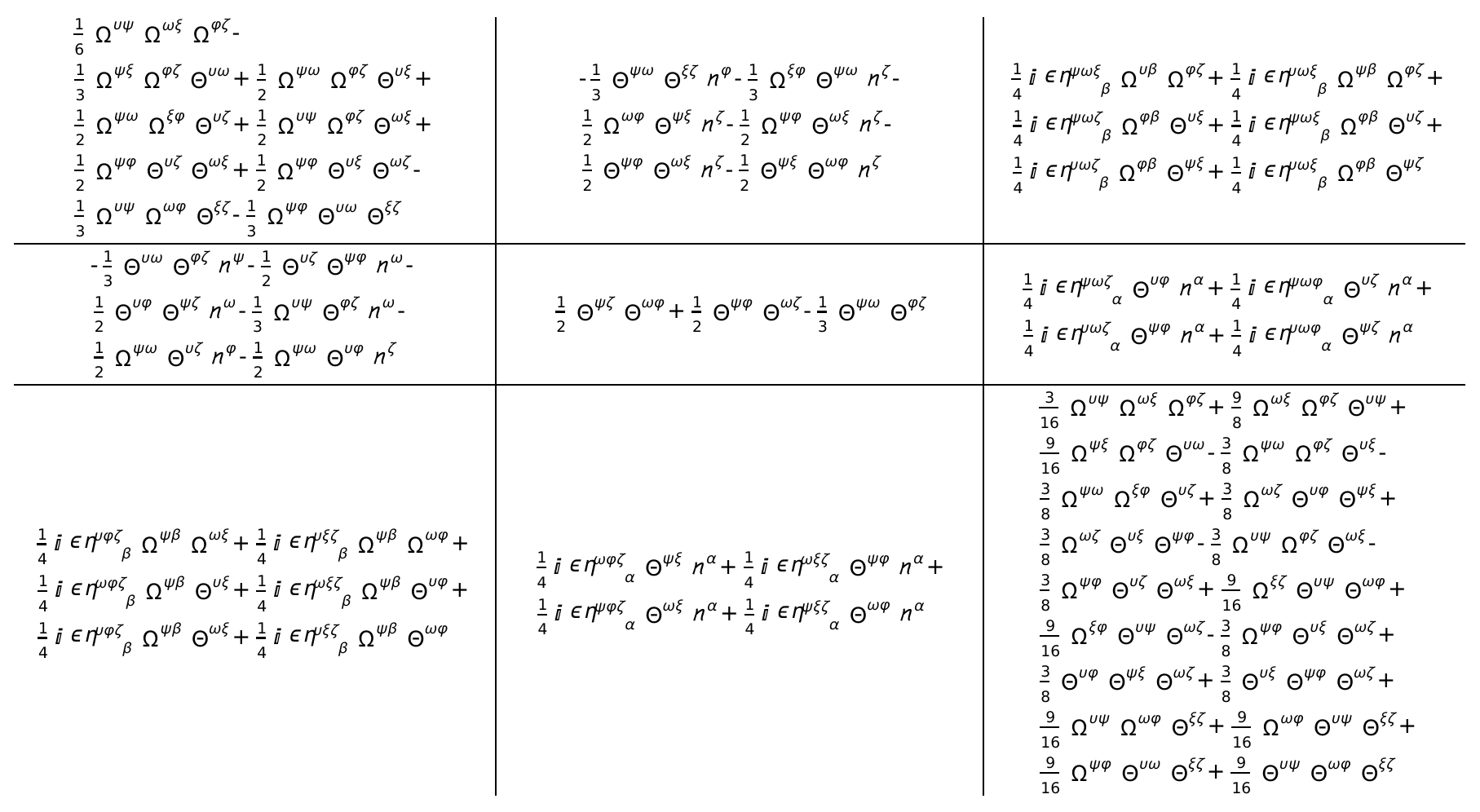}
	\caption{The matrix of spin-projection operators associated with the spin-two sector of Poincar\'e gauge theory. The row-rank ordering of this matrix corresponds exactly to the third diagonal block in~\cref{ParticleSpectrographScalarParityViolatingPGT,ParticleSpectrographGeneralParityViolatingPGT}. See~\cref{UnitTimelikeComponents,eq:app_vector_projectors} for definitions of quantities. Indices to be contracted with the complex conjugate fields~$\FieldDownConj{X}{\mu}$ (if any) are drawn from the set~$\left\{\xi,\varphi,\zeta\right\}$ in order; those to be contracted with~$\FieldUp{X}{\mu}$ are drawn from~$\left\{\upsilon,\psi,\omega\right\}$.}
\label{SPOMatricesEinsteinCartanTheorySpin2}
\end{table*}

\section{Chequer-Hermicity}\label{ChequerHermicity}

\paragraph*{A new kind of structured matrix} In this appendix we introduce the concept of \emph{chequer}-Hermicity as a generalisation of Hermicity which becomes physically relevant in the analysis of parity-violating particle spectra. A chequer-Hermitian matrix admits a~$2\times 2$ block structure in which the diagonal blocks are Hermitian, but the off-diagonal blocks are skew-Hermitian; thus, it has the structure of a chequerboard. The spectral theorem applies only to the Hermitian or skew-Hermitian limits of a chequer-Hermitian matrix. Nonetheless, we observe that chequer-Hermicity leads to some remarkably convenient properties.

\begin{definition}
	We define the \emph{chequer-Hermitian conjugate}~$\GeneralConj{}$ of an arbitrary complex square matrix~$\General{}$ as resulting from the following operation on its~$2\times 2$ block structure
	\begin{equation}\label{ChequerHermitianDef}
		\General{}\equiv
		\Chequer{\MCa{}}{\MCb{}}{\MCc{}}{\MCd{}}
		\implies
		\GeneralConj{}\equiv
		\Chequer{\MCaConj{}}{-\MCcConj{}}{-\MCbConj{}}{\MCdConj{}}
		,
	\end{equation}
	where~$\MCa{}$ and~$\MCd{}$ are square complex matrices, and~$\MCb{}$ and~$\MCc{}$ are~(possibly rectangular) complex matrices which are conformable for operations such as~$\MCb{}\MCc{}$ and~$\MCc{}\MCa{}$ etc. The dimensions of these matrices are determined by the context of the problem.
\end{definition}

\begin{remark}
	The notation in~\cref{ChequerHermitianDef} is chosen to reflect the parity indices of different kinds of SPOs. Thus~($+$) and~($-$) denote parity-preserving sectors, and~($\pm$) and~($\mp$) denote parity-violating sectors. Within this paper, parity is the context which will provide the dimensions of the~$2\times 2$ division in chequer-Hermitian conjugation.
\end{remark}

\begin{definition}
	We say~$\ChequerHermitian{}$ is \emph{chequer-Hermitian} when~$\ChequerHermitian{}\equiv\ChequerHermitianConj{}$.
\end{definition}

\begin{corollary}
	Let~$\ChequerHermitian{}$ be chequer-Hermitian, then it has the~$2\times 2$ block structure
	\begin{equation}\label{ChequerHermitian}
		\ChequerHermitian{}\equiv\ChequerHermitianConj{}
		\implies
		\ChequerHermitian{}\equiv
		\Chequer{\Ca{}}{\Cb{}}{-\CbConj{}}{\Cd{}}
		,
		\quad
		\Ca{}\equiv\CaConj{},
		\quad
		\Cd{}\equiv\CdConj{}.
	\end{equation}
\end{corollary}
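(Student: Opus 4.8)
The plan is to unpack the definition of the chequer-Hermitian conjugate in~\cref{ChequerHermitianDef}, applied to~$\ChequerHermitian{}$ itself, and then read off the self-conjugacy condition~$\ChequerHermitian{}\equiv\ChequerHermitianConj{}$ one block at a time. First I would write~$\ChequerHermitian{}$ in its generic~$2\times2$ block form, with diagonal blocks~$\Ca{}$ and~$\Cd{}$ and off-diagonal blocks~$\Cb{}$ and~$\Cc{}$, and apply~\cref{ChequerHermitianDef} with~$\General{}\mapsto\ChequerHermitian{}$ to obtain~$\ChequerHermitianConj{}$, whose four blocks are~$\CaConj{}$,~$-\CcConj{}$,~$-\CbConj{}$ and~$\CdConj{}$ respectively.

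Equating the two matrices block by block then yields four matrix relations. From the diagonal I read off~$\Ca{}\equiv\CaConj{}$ and~$\Cd{}\equiv\CdConj{}$, which are exactly the Hermiticity statements claimed for the diagonal blocks in~\cref{ChequerHermitian}; from the off-diagonal I read off~$\Cb{}\equiv-\CcConj{}$ together with~$\Cc{}\equiv-\CbConj{}$. The final step is to check that these last two equations are not independent: taking the Hermitian conjugate of~$\Cc{}\equiv-\CbConj{}$ returns~$\CcConj{}\equiv-\Cb{}$, which is precisely~$\Cb{}\equiv-\CcConj{}$. Hence a single off-diagonal block---say~$\Cb{}$---remains free, and eliminating the redundant relation via~$\Cc{}\equiv-\CbConj{}$ recovers the advertised structure in~\cref{ChequerHermitian}.

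There is no genuine obstacle here; the entire content is the bookkeeping of how the block transpose built into~\cref{ChequerHermitianDef} distributes the extra minus signs onto the off-diagonal entries. The only point warranting a line of care is the consistency just noted, namely that the pair of off-diagonal conditions imposes no hidden constraint beyond~$\Cc{}\equiv-\CbConj{}$, so that the off-diagonal data of a chequer-Hermitian matrix is genuinely parametrised by one unconstrained block. I would therefore present the argument as a short direct computation rather than invoking any deeper structural lemma.
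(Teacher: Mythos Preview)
Your proposal is correct and takes exactly the same approach as the paper, which simply records that the result ``follows immediately from~\cref{ChequerHermitianDef}''. You have merely unpacked that one-line remark into the explicit block-by-block comparison, including the harmless consistency check on the off-diagonal pair; nothing further is needed.
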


\begin{proof}
	This follows immediately from~\cref{ChequerHermitianDef}.
\end{proof}

\begin{corollary}
	The product of two chequer-Hermitian matrices is also chequer-Hermitian.
\end{corollary}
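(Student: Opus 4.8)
The plan is to reduce chequer-Hermiticity to ordinary Hermiticity by encoding the block map of \cref{ChequerHermitianDef} as conjugation by the parity signature matrix $\Sigma\equiv\mathsf{1}\oplus(-\mathsf{1})$, partitioned exactly as the $2\times2$ parity blocks. First I would observe that $\Sigma=\Sigma^\dagger=\Sigma^{-1}$ and that $\Sigma\,\General{}^\dagger\,\Sigma$ leaves the two diagonal blocks as $\MCaConj{},\MCdConj{}$ while sending the off-diagonal blocks to $-\MCcConj{},-\MCbConj{}$; comparing with \cref{ChequerHermitianDef} this is precisely $\GeneralConj{}=\Sigma\,\General{}^\dagger\,\Sigma$. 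Consequently the defining relation $\ChequerHermitian{}=\ChequerHermitianConj{}$ of \cref{ChequerHermitian} is equivalent to the single statement that $\Sigma\ChequerHermitian{}$ is ordinary-Hermitian, which I would adopt as the working characterisation throughout.

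With this in hand I would take two chequer-Hermitian matrices $\ChequerHermitian{}_1,\ChequerHermitian{}_2$, each written in the form of \cref{ChequerHermitian} with Hermitian diagonal blocks $\mathsf{A}_n=\mathsf{A}_n^\dagger$ (top-left) and $\mathsf{D}_n=\mathsf{D}_n^\dagger$ (bottom-right) and skew-paired off-diagonal blocks $\mathsf{B}_n$ (top-right) and $-\mathsf{B}_n^\dagger$ (bottom-left) for $n=1,2$, and verify directly that $\ChequerHermitian{}_1\ChequerHermitian{}_2$ again has this shape. Concretely there are three independent conditions to check on the product: Hermiticity of its top-left block, Hermiticity of its bottom-right block, and the skew relation binding its bottom-left block to minus the conjugate-transpose of its top-right block. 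Equivalently, in the signature language I would test whether $\Sigma\,\ChequerHermitian{}_1\ChequerHermitian{}_2$ is Hermitian, which collapses the three block checks into one.

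The hard part will be exactly this verification, because conjugation by $\Sigma$ is an anti-involution: applying the characterisation above gives $(\ChequerHermitian{}_1\ChequerHermitian{}_2)^\ddagger=\Sigma(\ChequerHermitian{}_1\ChequerHermitian{}_2)^\dagger\Sigma=(\Sigma\ChequerHermitian{}_2^\dagger\Sigma)(\Sigma\ChequerHermitian{}_1^\dagger\Sigma)=\ChequerHermitian{}_2^\ddagger\ChequerHermitian{}_1^\ddagger=\ChequerHermitian{}_2\ChequerHermitian{}_1$, so the conjugate of the product is the product in \emph{reversed} order. The crux is therefore to show that the chequer block structure — diagonal Hermitian together with off-diagonal skew — is rigid enough to force each diagonal block of $\ChequerHermitian{}_1\ChequerHermitian{}_2$ to emerge Hermitian and its off-diagonal blocks to satisfy the skew relation, notwithstanding this reordering. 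I expect this to be the decisive step, and I would attack it by expanding the four product blocks explicitly and tracking how the daggers distribute, since that is where the interplay between the Hermitian diagonal blocks $\mathsf{A}_n,\mathsf{D}_n$ and the skew-paired off-diagonal blocks $\mathsf{B}_n,-\mathsf{B}_n^\dagger$ must do all the work; the orthonormality of the underlying SPO basis in \cref{NewOrthonormality}, which guarantees that the coefficient-matrix product faithfully represents the operator composition, is what licenses passing between the operator and matrix formulations of the claim.
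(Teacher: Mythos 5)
Your reduction of chequer-Hermiticity to ordinary Hermiticity is correct: with $\Sigma\equiv\mathsf{1}\oplus(-\mathsf{1})$ one indeed has $\GeneralConj{}=\Sigma\cdot\General{}^\dagger\cdot\Sigma$, so that $\ChequerHermitian{}$ is chequer-Hermitian iff $\Sigma\cdot\ChequerHermitian{}$ is Hermitian, and your anti-involution identity $\left(\ChequerHermitian{}_1\cdot\ChequerHermitian{}_2\right)^\ddagger=\ChequerHermitian{}_2\cdot\ChequerHermitian{}_1$ is also correct. The genuine gap is the step you postpone to the end: the hoped-for ``rigidity'' of the chequer block structure does not exist, and the block-by-block expansion you plan would fail. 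In fact your own identity already settles the matter: $\ChequerHermitian{}_1\cdot\ChequerHermitian{}_2$ is chequer-Hermitian precisely when it equals $\ChequerHermitian{}_2\cdot\ChequerHermitian{}_1$, i.e.\ precisely when the factors commute --- the exact translate, under your $\Sigma$-dictionary, of the classical fact that a product of two Hermitian matrices is Hermitian iff they commute. A minimal counterexample with $1\times 1$ blocks is
\begin{equation*}
	\ChequerHermitian{}_1=\begin{pmatrix}1&0\\0&0\end{pmatrix},\quad
	\ChequerHermitian{}_2=\begin{pmatrix}0&i\\i&0\end{pmatrix},\quad
	\ChequerHermitian{}_1\cdot\ChequerHermitian{}_2=\begin{pmatrix}0&i\\0&0\end{pmatrix},
\end{equation*}
where both factors satisfy \cref{ChequerHermitian} (real Hermitian diagonal blocks; off-diagonal pair $\Cb{}=i$ with $-\CbConj{}=i$), but the product does not: its lower-left block vanishes while $-(i)^\dagger=i\neq 0$.

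Comparing with the paper does not rescue the step: the paper's entire proof is ``this follows immediately from \cref{ChequerHermitian}'', i.e.\ exactly the block verification you intended to carry out, asserted without the commutativity hypothesis that your computation shows is needed. So your attempt is not missing an idea that the paper supplies; carried out honestly, it exposes that the corollary as stated is false for arbitrary pairs and should be read (or repaired) with the hypothesis $\ChequerHermitian{}_1\cdot\ChequerHermitian{}_2=\ChequerHermitian{}_2\cdot\ChequerHermitian{}_1$, under which your third paragraph is already a complete one-line proof. That commuting case covers the natural downstream uses --- powers of a single chequer-Hermitian matrix and, combined with the inverse corollary of \cref{ChequerHermicity}, rational functions of one such matrix --- but no amount of dagger-tracking through the four product blocks will establish the unrestricted statement, so you should abandon that planned expansion rather than search for the cancellation it presumes. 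Your closing remark about SPO orthonormality in \cref{NewOrthonormality} is harmless but tangential: the corollary is a pure matrix statement, independent of the operator representation.
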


\begin{proof}
	This follows immediately from~\cref{ChequerHermitian}.
\end{proof}

\begin{corollary}
	Let~$\ChequerHermitian{}$ be chequer-Hermitian and invertible, then~$\ChequerHermitian{}^{-1}$ is also chequer-Hermitian.
\end{corollary}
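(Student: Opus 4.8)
The plan is to recognise the chequer-Hermitian conjugate $\ddagger$ as the ordinary Hermitian conjugate $\dagger$ dressed by a fixed involutory signature matrix, and then to reduce the statement to the elementary fact that the Hermitian conjugate commutes with inversion. First I would introduce the block signature matrix $\Sigma\equiv\mathrm{diag}(\mathsf{1},-\mathsf{1})$, which acts as $+\mathsf{1}$ on the parity-even ($P=+1$) block and $-\mathsf{1}$ on the parity-odd ($P=-1$) block, conformably with the $2\times 2$ division of \cref{ChequerHermitianDef}. It is real and diagonal, so $\Sigma=\Sigma^\dagger=\Sigma^{-1}$ and $\Sigma^2=\mathsf{1}$. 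A one-line block computation then establishes the key structural identity
\begin{equation}
\GeneralConj{}=\Sigma\,\General{}^\dagger\,\Sigma ,
\end{equation}
where $\dagger$ is the usual Hermitian conjugate: conjugating $\General{}^\dagger$ by $\Sigma$ flips the sign of precisely the off-diagonal blocks, turning $\MCcConj{}$ and $\MCbConj{}$ into $-\MCcConj{}$ and $-\MCbConj{}$ while fixing the diagonal blocks $\MCaConj{}$ and $\MCdConj{}$, which is exactly \cref{ChequerHermitianDef}.

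With this identity in hand the result is purely formal. From $\Sigma^2=\mathsf{1}$ it follows at once that $\ddagger$ is an involution, $(\General{}^\ddagger)^\ddagger=\General{}$, and that it reverses products, since $\dagger$ does and the intermediate $\Sigma$ factors telescope; in particular $\mathsf{1}^\ddagger=\mathsf{1}$. The conclusion then comes either from these two properties or by direct substitution. Applying $\ddagger$ to $\ChequerHermitian{}\,\ChequerHermitian{}^{-1}=\mathsf{1}$ and using product reversal gives $(\ChequerHermitian{}^{-1})^\ddagger\,\ChequerHermitianConj{}=\mathsf{1}$; invoking chequer-Hermicity $\ChequerHermitianConj{}=\ChequerHermitian{}$ and right-multiplying by $\ChequerHermitian{}^{-1}$ yields $(\ChequerHermitian{}^{-1})^\ddagger=\ChequerHermitian{}^{-1}$, the claim. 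Equivalently, one computes directly
\begin{equation}
(\ChequerHermitian{}^{-1})^\ddagger=\Sigma\,(\ChequerHermitian{}^{-1})^\dagger\,\Sigma=\Sigma\,(\ChequerHermitian{}^\dagger)^{-1}\,\Sigma=\big(\Sigma\,\ChequerHermitian{}^\dagger\,\Sigma\big)^{-1}=\big(\ChequerHermitianConj{}\big)^{-1}=\ChequerHermitian{}^{-1},
\end{equation}
where the third equality uses $\Sigma^{-1}=\Sigma$ and the last uses chequer-Hermicity.

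I do not expect any genuine obstacle here: invertibility of $\ChequerHermitian{}$ immediately gives invertibility of $\ChequerHermitian{}^\dagger$ and hence of $\Sigma\,\ChequerHermitian{}^\dagger\,\Sigma$, so every inverse written above exists, and the remainder is algebra. The only point requiring care is the sign bookkeeping in verifying the structural identity $\GeneralConj{}=\Sigma\,\General{}^\dagger\,\Sigma$ against \cref{ChequerHermitianDef} — specifically that conjugation by $\Sigma$ leaves the diagonal blocks untouched while introducing exactly one minus sign on each off-diagonal block. Once that is checked, the corollary follows as a special case of the standard argument that Hermitian conjugation preserves the class of invertible self-conjugate matrices, transported through the fixed involution $\Sigma$.
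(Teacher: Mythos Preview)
Your proof is correct and cleaner than the paper's, though it takes a genuinely different route. The paper argues by decomposing $\ChequerHermitian{}^{-1}$ into its block-diagonal and block-off-diagonal pieces, each split further into Hermitian and skew-Hermitian parts; it then expands $\ChequerHermitian{}^{-1}\ChequerHermitian{}=\mathsf{1}$, takes the Hermitian conjugate, and infers that a particular sign-flipped recombination of these pieces is a right inverse of $\ChequerHermitian{}$. Uniqueness of the inverse then forces the ``wrong-parity'' pieces (Hermitian off-diagonal and skew-Hermitian diagonal) to vanish. Your approach instead encodes the chequer-Hermitian conjugate once and for all as $\GeneralConj{}=\Sigma\,\General{}^\dagger\,\Sigma$ with $\Sigma=\mathrm{diag}(\mathsf{1},-\mathsf{1})$, and then inherits the result from the standard compatibility of $\dagger$ with inversion together with $\Sigma^{-1}=\Sigma$. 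This is more conceptual: it makes transparent that $\ddagger$ is an antiautomorphic involution, from which closure under inversion (and indeed under products, already stated as a separate corollary in the paper) are immediate one-liners. The paper's decomposition argument is more hands-on and has the minor advantage of not requiring the reader to first verify the $\Sigma$-conjugation identity, but your route is shorter and exposes the underlying structure more clearly.
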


\begin{proof}
	The proof follows immediately from the well-known formula for the inverse of a block matrix. Alternatively, we can infer the result as follows. Let~$\ChequerHermitian{}\equiv\HermitianDiagonal{}+\SkewHermitianOffDiagonal{}$ be decomposed into the diagonal~($d$) and off-diagonal~($o$) block parts which must be respectively Hermitian~($h$) and skew-Hermitian~($s$) by the property~$\ChequerHermitian{}\equiv\ChequerHermitianConj{}$. Without assuming chequer-Hermicity of the inverse, we have~$\ChequerHermitian{}^{-1}\equiv\InverseHermitianDiagonal{}+\InverseHermitianOffDiagonal{}+\InverseSkewHermitianDiagonal{}+\InverseSkewHermitianOffDiagonal{}$. By letting~$\ChequerHermitian{}^{-1}$ act as a left inverse and taking the Hermitian conjugate we have
\begin{subequations}
	\begin{align}
		&\InverseHermitianDiagonal{}\cdot\HermitianDiagonal{}
			+\InverseHermitianOffDiagonal{}\cdot\SkewHermitianOffDiagonal{}
			+\InverseSkewHermitianDiagonal{}\cdot\HermitianDiagonal{}
			+\InverseSkewHermitianOffDiagonal{}\cdot\SkewHermitianOffDiagonal{}
		\nonumber\\	&\hspace{120pt}
			\equiv
			\HermitianDiagonal{}\cdot\InverseHermitianDiagonal{}
			+\SkewHermitianOffDiagonal{}\cdot\InverseSkewHermitianOffDiagonal{}
			-\HermitianDiagonal{}\cdot\InverseSkewHermitianDiagonal{}
			-\SkewHermitianOffDiagonal{}\cdot\InverseHermitianOffDiagonal{}
			\equiv\mathsf{1},
			\label{InverseExpansion1}
			\\
			&\InverseHermitianOffDiagonal{}\cdot\HermitianDiagonal{}
			+\InverseHermitianDiagonal{}\cdot\SkewHermitianOffDiagonal{}
			+\InverseSkewHermitianOffDiagonal{}\cdot\HermitianDiagonal{}
			+\InverseSkewHermitianDiagonal{}\cdot\SkewHermitianOffDiagonal{}
		\nonumber\\	&\hspace{120pt}
			\equiv
			\HermitianDiagonal{}\cdot\InverseSkewHermitianOffDiagonal{}
			+\SkewHermitianOffDiagonal{}\cdot\InverseHermitianDiagonal{}
			-\HermitianDiagonal{}\cdot\InverseHermitianOffDiagonal{}
			-\SkewHermitianOffDiagonal{}\cdot\InverseSkewHermitianDiagonal{}
			\equiv\mathsf{0},
			\label{InverseExpansion2}
	\end{align}
\end{subequations}
	and from~\cref{InverseExpansion1,InverseExpansion2} we deduce that~$\InverseHermitianDiagonal{}+\InverseSkewHermitianOffDiagonal{}-\InverseHermitianOffDiagonal{}-\InverseSkewHermitianDiagonal{}$ is the right inverse of~$\ChequerHermitian{}$. The uniqueness of the inverse for square~$\ChequerHermitian{}$ implies that~$\InverseHermitianOffDiagonal{}\equiv\InverseSkewHermitianDiagonal{}\equiv\mathsf{0}$ so that~$\ChequerHermitian{}^{-1}\equiv\left(\ChequerHermitian{}^{-1}\right)^{\ddagger}$ as required.
\end{proof}

\begin{theorem}
	Let~$\ChequerHermitian{}$ be chequer-Hermitian and singular, with an orthonormal set of complex right null eigenvectors~$\left\{\RightNullVector{i}\right\}$, then an orthonormal set of left null eigenvectors~$\left\{\LeftNullVector{i}\right\}$ may be constructed according to
	\begin{equation}\label{RightToLeft}
		\begin{aligned}
		&\ChequerHermitian{}\cdot\RightNullVector{i}\equiv\mathsf{0},
		\quad
		\RightNullVectorConj{i}\cdot\RightNullVector{j}\equiv\tensor*{\delta}{_{ij}},
		\quad
		\RightNullVector{i}\equiv
		\ChequerVector{\PlusNullVector{i}}{\MinusNullVector{i}}
		\\&\hspace{120pt}
		\implies
		\LeftNullVectorConj{i}\cdot\ChequerHermitian{}\equiv\mathsf{0},
		\quad
		\LeftNullVectorConj{i}\cdot\LeftNullVector{j}\equiv\tensor*{\delta}{_{ij}},
		\quad
		\LeftNullVectorConj{i}\equiv
		\ChequerVectorConj{\PlusNullVectorConj{i}}{-\MinusNullVectorConj{i}}
		.
		\end{aligned}
	\end{equation}
\end{theorem}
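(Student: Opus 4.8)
The plan is to reduce chequer-Hermitian conjugation to ordinary Hermitian conjugation by dressing the latter with the block-diagonal parity involution
\begin{equation}
	\mathsf{P}\equiv\NormalMatrix{\mathsf{1}}{\mathsf{0}}{\mathsf{0}}{-\mathsf{1}},\qquad \mathsf{P}=\mathsf{P}^\dagger=\mathsf{P}^{-1},\qquad \mathsf{P}^2=\mathsf{1},
\end{equation}
whose diagonal identity blocks are sized to match the~$P=1$ and~$P=-1$ sectors. Reading off the block pattern in~\cref{ChequerHermitianDef}, I would first verify the single structural identity
\begin{equation}\label{eq:KeyInvolution}
	\GeneralConj{}\equiv\mathsf{P}\,\General{}^\dagger\,\mathsf{P},
\end{equation}
i.e. that conjugating the ordinary Hermitian adjoint by~$\mathsf{P}$ flips the sign of precisely the off-diagonal blocks, which is the entire content of~\cref{ChequerHermitianDef}. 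Chequer-Hermicity~$\ChequerHermitian{}\equiv\ChequerHermitianConj{}$ then reads~$\ChequerHermitian{}^\dagger\equiv\mathsf{P}\,\ChequerHermitian{}\,\mathsf{P}$. I would also record here that the proposed left vectors in~\cref{RightToLeft} are nothing but~$\LeftNullVector{i}\equiv\mathsf{P}\,\RightNullVector{i}$, so that~$\LeftNullVectorConj{i}\equiv\RightNullVectorConj{i}\,\mathsf{P}$ reproduces exactly the row-vector sign pattern quoted in the theorem.

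With~\cref{eq:KeyInvolution} in hand, the left-null property follows in three lines. Taking the Hermitian conjugate of~$\ChequerHermitian{}\cdot\RightNullVector{i}\equiv\mathsf{0}$ gives~$\RightNullVectorConj{i}\cdot\ChequerHermitian{}^\dagger\equiv\mathsf{0}$; substituting~$\ChequerHermitian{}^\dagger\equiv\mathsf{P}\,\ChequerHermitian{}\,\mathsf{P}$ and right-multiplying by~$\mathsf{P}$ while using~$\mathsf{P}^2\equiv\mathsf{1}$ yields~$\RightNullVectorConj{i}\,\mathsf{P}\cdot\ChequerHermitian{}\equiv\LeftNullVectorConj{i}\cdot\ChequerHermitian{}\equiv\mathsf{0}$, as required. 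Orthonormality is then immediate, since~$\mathsf{P}$ is unitary: I would compute~$\LeftNullVectorConj{i}\cdot\LeftNullVector{j}\equiv\RightNullVectorConj{i}\,\mathsf{P}^2\,\RightNullVector{j}\equiv\RightNullVectorConj{i}\cdot\RightNullVector{j}\equiv\tensor*{\delta}{_{ij}}$, so the left vectors inherit orthonormality directly from the right ones.

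I do not anticipate a genuine obstacle; the whole argument rests on spotting the involution~$\mathsf{P}$ and the identity~\cref{eq:KeyInvolution}, after which both claims are one-liners. The only point deserving care is the bookkeeping behind~\cref{eq:KeyInvolution}: confirming that the sign flip of~\cref{ChequerHermitianDef} is exactly conjugation by~$\mathsf{P}$, and that~$\mathsf{P}$ is conformable with the~(possibly unequal) dimensions of the~$\Ca{}$ and~$\Cd{}$ blocks. Finally, to justify the word \emph{set}, I would remark that~$\mathsf{P}$ is a bijection and that for the square matrix~$\ChequerHermitian{}$ the left and right nullities coincide, so the~$\left\{\LeftNullVector{i}\right\}$ constructed in this way form a complete orthonormal basis of the left kernel rather than merely a subset.
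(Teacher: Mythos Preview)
Your proposal is correct. The paper's own proof is a single sentence --- ``This follows immediately from~\cref{ChequerHermitian}'' --- i.e.\ it leaves the reader to verify the two claims by direct block computation using the structure~$\ChequerHermitian{}=\Chequer{\Ca{}}{\Cb{}}{-\CbConj{}}{\Cd{}}$ with~$\Ca{}=\CaConj{}$ and~$\Cd{}=\CdConj{}$. Your route packages that same block computation into the parity involution~$\mathsf{P}$ and the identity~$\GeneralConj{}=\mathsf{P}\,\General{}^\dagger\,\mathsf{P}$, which turns both the left-null property and orthonormality into one-liners. This is not a genuinely different idea, but it is a cleaner organisation: the involution makes transparent \emph{why} the sign flip on the lower block works, and it immediately gives the orthonormality for free via~$\mathsf{P}^2=\mathsf{1}$, whereas a bare block check would require writing out~$\PlusNullVectorConj{i}\PlusNullVector{j}+\MinusNullVectorConj{i}\MinusNullVector{j}$ separately. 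The same~$\mathsf{P}$ in fact appears later in the paper (in~\cref{NoGhostCriterionAppendix}) for exactly this purpose of trading chequer-Hermicity for ordinary Hermicity, so your instinct aligns with the authors' broader toolkit even if they did not invoke it here.
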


\begin{proof}
	This follows immediately from~\cref{ChequerHermitian}.
\end{proof}

\section{Moore--Penrose pseudoinversion}\label{MoorePenrose}

\paragraph*{The natural choice of pseudoinverse} In this appendix we introduce the Moore--Penrose pseudoinverse of a general complex square matrix, and provide formulae for the pseudoinverse of Hermitian and chequer-Hermitian matrices. We also show that the pseudoinverse of a chequer-Hermitian matrix can be computed from the null eigenvectors of the matrix. The very simple formulae that follow cement the status of the Moore--Penrose pseudoinverse as the natural choice of pseudoinverse for implementation in particle spectroscopy.

\begin{definition}
	We define the unique \emph{Moore--Penrose pseudoinverse}~$\GeneralMP{}$ of an arbitrary complex square matrix~$\General{}$ to have the following four properties:
		\begin{equation}
			\General{}\cdot\GeneralMP{}\cdot\General{}\equiv\General{},
			\quad
			\GeneralMP{}\cdot\General{}\cdot\GeneralMP{}\equiv\GeneralMP{},
			\quad
			\General{}\cdot\GeneralMP{}\equiv\left(\General{}\cdot\GeneralMP{}\right)^\dagger,
			\quad
			\GeneralMP{}\cdot\General{}\equiv\left(\GeneralMP{}\cdot\General{}\right)^\dagger.
			\label{MP1}
		\end{equation}
\end{definition}

\begin{corollary}
	Let~$\General{}$ be an arbitrary complex square matrix, then
	\begin{equation}\label{MPFormula}
		\GeneralMP{}\equiv\left(\General{}^\dagger\cdot\General{}\right)^{+}\cdot\General{}^\dagger
		\equiv\General{}^\dagger\cdot\left(\General{}\cdot\General{}^\dagger\right)^{+}.
	\end{equation}
\end{corollary}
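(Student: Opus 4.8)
The plan is to verify that the candidate expression $(\General{}^\dagger\General{})^+\General{}^\dagger$ obeys the four defining properties in~\cref{MP1}, and then to appeal to the uniqueness of the pseudoinverse asserted in the Definition in order to conclude that it equals~$\GeneralMP{}$. Throughout I would write $\Hermitian{}\equiv\General{}^\dagger\General{}$, which is Hermitian, with pseudoinverse~$\HermitianMP{}$, and set $X\equiv\HermitianMP{}\General{}^\dagger$. The second equality in the statement will then follow from the first by a duality argument.

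First I would assemble two elementary ingredients. The kernels coincide, $\ker\Hermitian{}=\ker\General{}$: if $\Hermitian{}x=\mathsf{0}$ then $x^\dagger\General{}^\dagger\General{}x=\lVert\General{}x\rVert^2=0$, forcing $\General{}x=\mathsf{0}$, while the reverse inclusion is immediate. Next, the pseudoinverse of a Hermitian matrix is itself Hermitian: taking the $\dagger$-conjugate of each of the four conditions in~\cref{MP1} written for~$\Hermitian{}$ shows that $(\HermitianMP{})^\dagger$ also satisfies those four conditions with respect to $\Hermitian{}^\dagger=\Hermitian{}$, whence uniqueness forces $(\HermitianMP{})^\dagger=\HermitianMP{}$.

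The central step is the identity $\General{}\,\HermitianMP{}\Hermitian{}=\General{}$, which I would establish purely algebraically: the first property in~\cref{MP1} applied to~$\Hermitian{}$ gives $\Hermitian{}\big(\mathsf{1}-\HermitianMP{}\Hermitian{}\big)=\Hermitian{}-\Hermitian{}\HermitianMP{}\Hermitian{}=\mathsf{0}$, so every column of $\mathsf{1}-\HermitianMP{}\Hermitian{}$ lies in $\ker\Hermitian{}=\ker\General{}$, and therefore $\General{}\big(\mathsf{1}-\HermitianMP{}\Hermitian{}\big)=\mathsf{0}$. With this in hand the four conditions for~$X$ follow by bookkeeping: the first reads $\General{}X\General{}=\General{}\HermitianMP{}\Hermitian{}=\General{}$; the second reads $X\General{}X=\HermitianMP{}\Hermitian{}\HermitianMP{}\General{}^\dagger=\HermitianMP{}\General{}^\dagger=X$, using $\HermitianMP{}\Hermitian{}\HermitianMP{}=\HermitianMP{}$; the third holds because $\General{}X=\General{}\HermitianMP{}\General{}^\dagger$ is manifestly Hermitian once $\HermitianMP{}$ is known to be Hermitian; and the fourth holds because $X\General{}=\HermitianMP{}\General{}^\dagger\General{}=\HermitianMP{}\Hermitian{}$ is Hermitian directly by the fourth relation in~\cref{MP1} for~$\Hermitian{}$. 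Uniqueness then yields $\GeneralMP{}=(\General{}^\dagger\General{})^+\General{}^\dagger$. For the second equality I would apply this result to $\General{}^\dagger$ in place of~$\General{}$, obtaining $(\General{}^\dagger)^+=(\General{}\General{}^\dagger)^+\General{}$, invoke the standard relation $(\General{}^\dagger)^+=(\GeneralMP{})^\dagger$, and take one further $\dagger$-conjugate, using that $(\General{}\General{}^\dagger)^+$ is Hermitian.

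The step I expect to be the crux --- rather than a mechanical application of the axioms --- is the identity $\General{}\HermitianMP{}\Hermitian{}=\General{}$, i.e.\ that $\HermitianMP{}\Hermitian{}$ acts as the identity on the row space of~$\General{}$; once this is secured from $\ker\Hermitian{}=\ker\General{}$, the remaining verifications are routine manipulations of the four defining relations together with the Hermiticity of~$\HermitianMP{}$.
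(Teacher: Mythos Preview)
Your proposal is correct and follows exactly the route the paper indicates: the paper's proof reads, in its entirety, ``This (well-known) formula is verified by substituting~\cref{MPFormula} into~\cref{MP1}.'' You have simply supplied the details of that substitution, including the two auxiliary facts (that $\ker(\General{}^\dagger\General{})=\ker\General{}$ and that the pseudoinverse of a Hermitian matrix is Hermitian) needed to make the verification go through, and your duality argument for the second equality is clean.
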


\begin{proof}
	This~(well-known) formula is verified by substituting~\cref{MPFormula} into~\cref{MP1}.
\end{proof}

\begin{remark}
	\cref{MPFormula} is useful because it allows~$\GeneralMP{}$ to be computed if a general formula is known for the Moore--Penrose pseudoinverse of Hermitian matrices such as~$\General{}^\dagger\cdot\General{}$ or~$\General{}\cdot\General{}^\dagger$.
\end{remark}

\begin{corollary}
	Let~$\Hermitian{}$ be Hermitian and singular, with an orthonormal set of complex null eigenvectors~$\left\{\RightNullVector{i}\right\}$, then~$\HermitianMP{}$ is given by
	\begin{equation}\label{MPHermitian}
		\HermitianMP{}\equiv
			\bigg(\mathsf{1}-\sum_i \RightNullVector{i}\cdot\RightNullVectorConj{i}\bigg)\cdot
			\bigg(\Hermitian{}+\sum_j \RightNullVector{j}\cdot\RightNullVectorConj{j}\bigg)^{-1}\cdot
			\bigg(\mathsf{1}-\sum_k \RightNullVector{k}\cdot\RightNullVectorConj{k}\bigg).
	\end{equation}
\end{corollary}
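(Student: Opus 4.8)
The plan is to recognise the right-hand side of~\cref{MPHermitian} as a projected inverse. Write~$\mathsf{P}\equiv\sum_i\RightNullVector{i}\cdot\RightNullVectorConj{i}$ for the operator appearing in both outer factors and in the middle shift. Since the~$\RightNullVector{i}$ are orthonormal and span~$\ker\Hermitian{}$, this~$\mathsf{P}$ is precisely the orthogonal projector onto the kernel, so that~$\mathsf{P}\equiv\mathsf{P}^\dagger\equiv\mathsf{P}^2$. Because the~$\RightNullVector{i}$ are null eigenvectors and~$\Hermitian{}$ is Hermitian, one has~$\Hermitian{}\cdot\mathsf{P}\equiv\mathsf{P}\cdot\Hermitian{}\equiv\mathsf{0}$, and consequently~$\Hermitian{}\cdot(\mathsf{1}-\mathsf{P})\equiv(\mathsf{1}-\mathsf{P})\cdot\Hermitian{}\equiv\Hermitian{}$. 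The claim is then exactly that~$\HermitianMP{}\equiv(\mathsf{1}-\mathsf{P})\cdot\mathsf{N}^{-1}\cdot(\mathsf{1}-\mathsf{P})$ with~$\mathsf{N}\equiv\Hermitian{}+\mathsf{P}$, which I would verify directly against the four defining properties in~\cref{MP1}.

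First I would show that~$\mathsf{N}$ is genuinely invertible, so that the formula is well-posed; this is the crux of the argument. Suppose~$\mathsf{N}\cdot x\equiv\mathsf{0}$. Acting on the left with~$\mathsf{P}$ and using~$\mathsf{P}\cdot\Hermitian{}\equiv\mathsf{0}$ together with~$\mathsf{P}^2\equiv\mathsf{P}$ collapses this to~$\mathsf{P}\cdot x\equiv\mathsf{0}$, whereupon~$\Hermitian{}\cdot x\equiv\mathsf{0}$ forces~$x\in\ker\Hermitian{}$ and therefore~$x\equiv\mathsf{P}\cdot x\equiv\mathsf{0}$. Hence~$\mathsf{N}$ is injective, and being square it is invertible. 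Next I would record the intertwining relations that render all subsequent manipulations mechanical:
\begin{equation}
	\mathsf{N}\cdot\mathsf{P}\equiv\mathsf{P}\cdot\mathsf{N}\equiv\mathsf{P},\qquad
	(\mathsf{1}-\mathsf{P})\cdot\mathsf{N}\equiv\mathsf{N}\cdot(\mathsf{1}-\mathsf{P})\equiv\Hermitian{},
\end{equation}
which follow at once from~$\mathsf{P}^2\equiv\mathsf{P}$ and~$\Hermitian{}\cdot\mathsf{P}\equiv\mathsf{P}\cdot\Hermitian{}\equiv\mathsf{0}$. Multiplying the first pair by~$\mathsf{N}^{-1}$ gives~$\mathsf{N}^{-1}\cdot\mathsf{P}\equiv\mathsf{P}\cdot\mathsf{N}^{-1}\equiv\mathsf{P}$, and the second pair yields~$\Hermitian{}\cdot\mathsf{N}^{-1}\equiv\mathsf{N}^{-1}\cdot\Hermitian{}\equiv\mathsf{1}-\mathsf{P}$.

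Finally I would verify the four axioms. The single key computation is that both products collapse to the kernel complement,~$\Hermitian{}\cdot\HermitianMP{}\equiv\HermitianMP{}\cdot\Hermitian{}\equiv\mathsf{1}-\mathsf{P}$, obtained by absorbing the outer projectors through~$\Hermitian{}\cdot(\mathsf{1}-\mathsf{P})\equiv\Hermitian{}$ and then applying~$\Hermitian{}\cdot\mathsf{N}^{-1}\equiv\mathsf{1}-\mathsf{P}$. Since~$\mathsf{1}-\mathsf{P}$ is Hermitian, this settles the two symmetry conditions~$\Hermitian{}\cdot\HermitianMP{}\equiv(\Hermitian{}\cdot\HermitianMP{})^\dagger$ and~$\HermitianMP{}\cdot\Hermitian{}\equiv(\HermitianMP{}\cdot\Hermitian{})^\dagger$. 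The remaining two are then immediate:~$\Hermitian{}\cdot\HermitianMP{}\cdot\Hermitian{}\equiv(\mathsf{1}-\mathsf{P})\cdot\Hermitian{}\equiv\Hermitian{}$, and~$\HermitianMP{}\cdot\Hermitian{}\cdot\HermitianMP{}\equiv(\mathsf{1}-\mathsf{P})\cdot\HermitianMP{}\equiv\HermitianMP{}$ using idempotency of~$\mathsf{1}-\mathsf{P}$. By uniqueness of the Moore--Penrose pseudoinverse, this establishes~\cref{MPHermitian}. The only genuinely non-routine step is the invertibility of~$\mathsf{N}$; everything else is bookkeeping with the projector identities.
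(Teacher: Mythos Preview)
Your proof is correct and follows precisely the approach the paper indicates: direct verification of the four Moore--Penrose axioms in~\cref{MP1}. The paper's own proof is a single sentence (``This formula is verified by substituting~\ldots~into~\cref{MP1}''), so you have simply supplied the details --- including the invertibility of~$\mathsf{N}$ and the projector intertwining relations --- that the paper leaves to the reader.
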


\begin{proof}
	This formula is verified by substituting~\cref{MPChequerHermitian} into~\cref{MP1}.
\end{proof}

\begin{remark}
	\cref{MPHermitian} is useful because it allows~$\HermitianMP{}$ to be computed from the null eigenvectors of~$\Hermitian{}$.
\end{remark}

\begin{corollary}
	Let~$\Hermitian{}$ be Hermitian and singular, then~$\HermitianMP{}$ is also Hermitian.
\end{corollary}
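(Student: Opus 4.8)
The plan is to lean on the explicit formula for $\HermitianMP{}$ just established in the preceding corollary, \cref{MPHermitian}, and take its Hermitian conjugate. Writing $\mathsf{P}\equiv\sum_i\RightNullVector{i}\cdot\RightNullVectorConj{i}$ for the sum of outer products appearing there, the first observation is that $\mathsf{P}$ is Hermitian (it is in fact the orthogonal projector onto the null space spanned by the orthonormal set $\left\{\RightNullVector{i}\right\}$), so that $\mathsf{1}-\mathsf{P}$ is Hermitian as well. The second observation is that the central factor $\Hermitian{}+\mathsf{P}$ is a sum of two Hermitian matrices, hence Hermitian, and --- being invertible by construction, since it acts as the identity on the null space and as the invertible $\Hermitian{}$ on its orthogonal complement --- it has a Hermitian inverse.

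With these facts in hand, \cref{MPHermitian} exhibits $\HermitianMP{}$ as the product $(\mathsf{1}-\mathsf{P})\cdot(\Hermitian{}+\mathsf{P})^{-1}\cdot(\mathsf{1}-\mathsf{P})$ of three Hermitian factors in which the two outer ones coincide. Taking the Hermitian conjugate reverses the order of the factors and conjugates each individually; because every factor is Hermitian and the arrangement is palindromic, the product is returned unchanged, giving $\big(\HermitianMP{}\big)^\dagger\equiv\HermitianMP{}$ as required. I anticipate no real obstacle in this route; the only thing to check carefully is the Hermiticity of the middle factor's inverse, which follows from the already-established fact that the inverse of an invertible Hermitian matrix is Hermitian.

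An alternative that dispenses with the singularity hypothesis altogether is to verify directly that $\big(\HermitianMP{}\big)^\dagger$ satisfies the four defining relations \cref{MP1} with $\Hermitian{}$ in place of $\General{}$. Conjugating each of those four relations and using $\Hermitian{}^\dagger\equiv\Hermitian{}$, one finds that the first two properties are mapped into themselves while the third and fourth are interchanged, so that $\big(\HermitianMP{}\big)^\dagger$ is itself a Moore--Penrose pseudoinverse of $\Hermitian{}$; the uniqueness asserted in \cref{MP1} then forces $\big(\HermitianMP{}\big)^\dagger\equiv\HermitianMP{}$. Here the only subtlety is to confirm that conjugation genuinely preserves the full set of four axioms, rather than yielding the pseudoinverse of $\Hermitian{}^\dagger$, and this is immediate precisely because $\Hermitian{}$ is Hermitian.
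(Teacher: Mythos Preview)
Your first approach is correct and is essentially the paper's own proof: the paper simply says the result ``follows immediately from~\cref{MPHermitian}'', and the content of that claim is exactly what you spell out --- each factor in the palindromic product is Hermitian, so taking the conjugate transpose returns the same expression. Your alternative via the uniqueness of the Moore--Penrose pseudoinverse is a genuinely different and slightly more general route (it does not require the explicit formula and would work even without the singularity hypothesis), but the paper does not take it.
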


\begin{proof}
	This follows immediately from~\cref{MPHermitian}.
\end{proof}

\begin{theorem}
	Let~$\ChequerHermitian{}$ be chequer-Hermitian and singular, with orthonormal sets of complex right and left null eigenvectors~$\left\{\RightNullVector{i}\right\}$ and~$\left\{\LeftNullVector{i}\right\}$ respectively, then the Moore--Penrose pseudoinverse~$\ChequerHermitianMP{}$ is given by
	\begin{equation}\label{MPChequerHermitian}
		\ChequerHermitianMP{}\equiv
			\bigg(\mathsf{1}-\sum_i \RightNullVector{i}\cdot\RightNullVectorConj{i}\bigg)\cdot
			\bigg(\ChequerHermitian{}+\sum_j \LeftNullVector{j}\cdot\RightNullVectorConj{j}\bigg)^{-1}\cdot
			\bigg(\mathsf{1}-\sum_k \LeftNullVector{k}\cdot\LeftNullVectorConj{k}\bigg).
	\end{equation}
\end{theorem}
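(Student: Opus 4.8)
The plan is to reduce the chequer-Hermitian case to the already-established Hermitian formula in~\cref{MPHermitian} by conjugating with the \emph{parity involution}~$\mathsf{D}\equiv\mathrm{diag}(\mathsf{1},-\mathsf{1})$, which acts as the identity on the positive-parity block and as minus the identity on the negative-parity block. First I would observe that the chequer-Hermitian conjugation in~\cref{ChequerHermitianDef} is nothing but~$\GeneralConj{}\equiv\mathsf{D}\,\General{}^\dagger\,\mathsf{D}$, since~$\mathsf{D}$ flips the sign of precisely the off-diagonal blocks. Because~$\mathsf{D}^2\equiv\mathsf{1}$ and~$\mathsf{D}^\dagger\equiv\mathsf{D}$, the defining relation~$\ChequerHermitian{}\equiv\ChequerHermitianConj{}$ is then equivalent to the statement that~$\Hermitian{}\equiv\mathsf{D}\,\ChequerHermitian{}$ is an ordinary Hermitian matrix, and hence also to~$\ChequerHermitian{}\equiv\mathsf{D}\,\Hermitian{}$.

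Next I would pin down the null-vector dictionary. Since~$\mathsf{D}$ is invertible,~$\ChequerHermitian{}$ and~$\Hermitian{}$ share the same kernel, so the orthonormal right null eigenvectors~$\left\{\RightNullVector{i}\right\}$ of~$\ChequerHermitian{}$ are exactly the orthonormal null eigenvectors of the Hermitian matrix~$\Hermitian{}$, to which~\cref{MPHermitian} applies. Moreover,~\cref{RightToLeft} furnishes the clean relation~$\LeftNullVector{i}\equiv\mathsf{D}\,\RightNullVector{i}$ (equivalently~$\RightNullVectorConj{i}\mathsf{D}\equiv\LeftNullVectorConj{i}$); I take the left eigenvectors in the theorem to be precisely those so furnished, consistent with their shared index~$i$. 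This relation is what will convert the right projectors in~\cref{MPHermitian} into the left projectors demanded by~\cref{MPChequerHermitian}. I would also record the auxiliary identity~$\left(\mathsf{D}\,\mathsf{A}\right)^{+}\equiv\mathsf{A}^{+}\mathsf{D}$, valid because~$\mathsf{D}$ is unitary; this follows by checking the four properties in~\cref{MP1} directly, using only~$\mathsf{D}^\dagger\mathsf{D}\equiv\mathsf{1}$ together with~$\mathsf{A}\mathsf{A}^{+}\mathsf{A}\equiv\mathsf{A}$ and the Hermicity of~$\mathsf{A}\mathsf{A}^{+}$.

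With these in hand the computation is mechanical. From~$\ChequerHermitian{}\equiv\mathsf{D}\,\Hermitian{}$ and the unitary identity I obtain~$\ChequerHermitianMP{}\equiv\HermitianMP{}\,\mathsf{D}$, and I then substitute the Hermitian formula~\cref{MPHermitian} for~$\HermitianMP{}$. The~$\mathsf{D}$ is pushed through the three factors using~$\mathsf{D}\,\RightNullVector{j}\equiv\LeftNullVector{j}$ and~$\RightNullVectorConj{j}\,\mathsf{D}\equiv\LeftNullVectorConj{j}$: extracting~$\mathsf{D}$ on the left of~$\Hermitian{}+\sum_j\RightNullVector{j}\RightNullVectorConj{j}$ turns the middle factor into~$\big(\ChequerHermitian{}+\sum_j\LeftNullVector{j}\RightNullVectorConj{j}\big)^{-1}\mathsf{D}$, while the leading factor~$\mathsf{1}-\sum_i\RightNullVector{i}\RightNullVectorConj{i}$ is untouched, and the trailing~$\mathsf{D}\big(\mathsf{1}-\sum_k\RightNullVector{k}\RightNullVectorConj{k}\big)\mathsf{D}$ collapses to~$\mathsf{1}-\sum_k\LeftNullVector{k}\LeftNullVectorConj{k}$ by~$\mathsf{D}^2\equiv\mathsf{1}$, reproducing~\cref{MPChequerHermitian} verbatim.

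The conceptual crux is simply recognising the involution structure; the only genuinely delicate point in execution is the bookkeeping of the last step, where one must extract~$\mathsf{D}$ from the regularised matrix on the \emph{correct} side so that the surviving factors reorganise into the asymmetric right/mixed/left projector pattern of~\cref{MPChequerHermitian} rather than a mismatched one. Since everything then reduces to the already-proven Hermitian corollary, no independent verification of the four Moore--Penrose conditions is needed for the chequer-Hermitian case; a direct check of~\cref{MP1} for~\cref{MPChequerHermitian} is of course possible but considerably more laborious.
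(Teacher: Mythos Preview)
Your argument is correct and complete; the reduction via the parity involution~$\mathsf{D}\equiv\mathrm{diag}(\mathsf{1},-\mathsf{1})$ works exactly as you describe, and each intermediate identity checks out.

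The paper, however, takes a different line. Its proof is essentially a one-liner: either substitute~\cref{MPChequerHermitian} directly into the four Moore--Penrose conditions~\cref{MP1}, or invoke the general factorisation~$\GeneralMP{}\equiv\left(\General{}^\dagger\General{}\right)^{+}\General{}^\dagger$ from~\cref{MPFormula}, observe that~$\ChequerHermitian{}^\dagger\ChequerHermitian{}$ is Hermitian and shares the kernel of~$\ChequerHermitian{}$, and feed it to~\cref{MPHermitian}. Your route is structurally different: rather than passing through~$\ChequerHermitian{}^\dagger\ChequerHermitian{}$, you recognise that chequer-Hermicity is precisely Hermicity twisted by the unitary involution~$\mathsf{D}$, so that~$\ChequerHermitian{}\equiv\mathsf{D}\Hermitian{}$ with~$\Hermitian{}$ genuinely Hermitian, and then transport~\cref{MPHermitian} along this conjugation using the unitary-factor identity~$(\mathsf{D}\mathsf{A})^{+}\equiv\mathsf{A}^{+}\mathsf{D}$. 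This buys you a more transparent explanation of \emph{why} the left null vectors~$\LeftNullVector{i}\equiv\mathsf{D}\RightNullVector{i}$ appear exactly where they do in~\cref{MPChequerHermitian}, and makes manifest that the chequer-Hermitian formula is the Hermitian formula in disguise. The paper's route is more brute-force but has the virtue of not relying on any structural observation beyond what is already written down; your route is cleaner but depends on spotting the involution and on tacitly identifying the theorem's~$\left\{\LeftNullVector{i}\right\}$ with those constructed in~\cref{RightToLeft}, which you correctly flag.
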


\begin{proof}
	This formula is verified by substituting~\cref{MPChequerHermitian} into~\cref{MP1}. Alternatively, it may be deduced directly from~\cref{MPFormula} by noting that~$\ChequerHermitian{}^\dagger\cdot\ChequerHermitian{}$ and~$\ChequerHermitian{}\cdot\ChequerHermitian{}^\dagger$ are Hermitian and singular, and hence pseudoinvertible via the formula in~\cref{MPHermitian}.
\end{proof}

\begin{remark}
	\cref{MPChequerHermitian,RightToLeft} are useful because they allow~$\ChequerHermitianMP{}$ to be computed from the~(right) null eigenvectors of~$\ChequerHermitian{}$. Note that~\cref{MPChequerHermitian} is consistent with~(and is a minimal modification of) the formula in~\cref{MPHermitian} for the Moore--Penrose pseudoinverse of a Hermitian matrix.
\end{remark}

\begin{corollary}
	Let~$\ChequerHermitian{}$ be chequer-Hermitian and singular, then the Moore--Penrose pseudoinverse~$\ChequerHermitianMP{}$ is also chequer-Hermitian.
\end{corollary}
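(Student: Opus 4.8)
The plan is to recast chequer-Hermitian conjugation as ordinary Hermitian conjugation dressed by a fixed involution, and then invoke only the four defining Moore--Penrose properties in~\cref{MP1} together with the uniqueness of the pseudoinverse. First I would observe that the block operation in~\cref{ChequerHermitianDef} is precisely $\GeneralConj{} = \Sigma\,\General{}^\dagger\,\Sigma$, where $\Sigma$ is the block-diagonal signature matrix $\Sigma \equiv \NormalMatrix{\mathsf{1}}{\mathsf{0}}{\mathsf{0}}{-\mathsf{1}}$, the two diagonal blocks being the identities on the parity-even and parity-odd sectors. Since $\Sigma^\dagger = \Sigma$ and $\Sigma^2 = \mathsf{1}$, this $\Sigma$ is a Hermitian, unitary involution. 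In this language the hypothesis that $\ChequerHermitian{}$ is chequer-Hermitian, $\ChequerHermitian{} \equiv \ChequerHermitianConj{}$, reads $\ChequerHermitian{} = \Sigma\,\ChequerHermitian{}^\dagger\,\Sigma$, equivalently $\ChequerHermitian{}^\dagger = \Sigma\,\ChequerHermitian{}\,\Sigma$, and $\ChequerHermitian{}\,\Sigma = \Sigma\,\ChequerHermitian{}^\dagger$.

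Next I would define the candidate $\mathsf{N} \equiv \Sigma\,(\ChequerHermitianMP{})^\dagger\,\Sigma$ --- that is, the chequer-Hermitian conjugate of $\ChequerHermitianMP{}$ --- and verify that it satisfies all four conditions of~\cref{MP1} with respect to $\ChequerHermitian{}$. Each check is a one-line manipulation obtained by inserting $\Sigma^2=\mathsf{1}$ and using $\Sigma\,\ChequerHermitian{}\,\Sigma = \ChequerHermitian{}^\dagger$: the product $\ChequerHermitian{}\,\mathsf{N}\,\ChequerHermitian{}$ becomes $\Sigma\,(\ChequerHermitian{}\,\ChequerHermitianMP{}\,\ChequerHermitian{})^\dagger\,\Sigma$, which collapses to $\Sigma\,\ChequerHermitian{}^\dagger\,\Sigma = \ChequerHermitian{}$ by the first Moore--Penrose identity; likewise $\mathsf{N}\,\ChequerHermitian{}\,\mathsf{N}$ collapses to $\mathsf{N}$ by the second. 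The two symmetry conditions reduce to $\ChequerHermitian{}\,\mathsf{N} = \Sigma\,(\ChequerHermitianMP{}\,\ChequerHermitian{})\,\Sigma$ and $\mathsf{N}\,\ChequerHermitian{} = \Sigma\,(\ChequerHermitian{}\,\ChequerHermitianMP{})\,\Sigma$, each of which is manifestly Hermitian because $\ChequerHermitianMP{}\,\ChequerHermitian{}$ and $\ChequerHermitian{}\,\ChequerHermitianMP{}$ are Hermitian by the last two parts of~\cref{MP1} and conjugation by the Hermitian $\Sigma$ preserves Hermicity. By uniqueness of the pseudoinverse I then conclude $\mathsf{N} = \ChequerHermitianMP{}$, i.e. $\ChequerHermitianMP{} = \Sigma\,(\ChequerHermitianMP{})^\dagger\,\Sigma = (\ChequerHermitianMP{})^\ddagger$, which is exactly the claim.

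There is essentially no obstacle once the rewriting $\GeneralConj{} = \Sigma\,\General{}^\dagger\,\Sigma$ is in hand: the statement is the pseudoinverse analogue of the already-established fact that a chequer-Hermitian \emph{invertible} matrix has a chequer-Hermitian inverse, and it parallels the standard identity $(\General{}^+)^\dagger = (\General{}^\dagger)^+$. The only mild point of care is the bookkeeping of the block sizes of $\Sigma$ so that all conjugations act on conformable partitions, which is automatic here because $\ChequerHermitian{}$ is square with a fixed parity splitting. As an alternative more in keeping with the constructive spirit of~\cref{MoorePenrose}, one could instead substitute the explicit formula~\cref{MPChequerHermitian} and exploit the right-to-left null-vector correspondence~\cref{RightToLeft}, whose built-in sign flip in the parity-odd block is precisely what promotes the Hermitian-looking expression into a chequer-Hermitian one; I expect the $\Sigma$-conjugation argument to be the cleaner of the two.
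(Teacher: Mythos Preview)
Your proof is correct, but it follows a genuinely different route from the paper. The paper's argument is a one-liner: it simply reads off chequer-Hermicity from the explicit formula~\cref{MPChequerHermitian}, using the right-to-left correspondence~\cref{RightToLeft} (which in your language is just~$\LeftNullVector{i}=\Sigma\,\RightNullVector{i}$) to see that~$\WaveOperatorFullJ{J}=\ChequerHermitian{}+\Sigma\sum_j\RightNullVector{j}\RightNullVectorConj{j}$ is chequer-Hermitian, whence its inverse is chequer-Hermitian by the earlier corollary, and the outer factors~$\VimilarityJ{J}$ and~$\SimilarityJ{J}=\Sigma\,\VimilarityJ{J}\,\Sigma$ conspire to preserve this. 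Your approach instead bypasses the explicit formula entirely: you recognise chequer-Hermitian conjugation as~$\GeneralConj{}=\Sigma\,\General{}^\dagger\,\Sigma$ for the Hermitian unitary involution~$\Sigma$, and then verify directly from the four axioms in~\cref{MP1} that~$\Sigma\,(\ChequerHermitianMP{})^\dagger\,\Sigma$ satisfies those same axioms, invoking uniqueness to conclude. This is cleaner and more general --- it is essentially the statement~$(\Sigma\,\General{}\,\Sigma)^+=\Sigma\,\GeneralMP{}\,\Sigma$ for any unitary~$\Sigma$, specialised to~$\General{}=\ChequerHermitian{}^\dagger$ --- and it would stand independently of~\cref{MPChequerHermitian}. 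The paper's route, by contrast, is more natural in context because the formula has just been established and the whole appendix is oriented around it. You correctly anticipated this alternative in your final paragraph.
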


\begin{proof}
	This follows immediately from~\cref{MPChequerHermitian}.
\end{proof}

\section{Operator coefficient matrices}\label{OperatorCoefficientMatrices}

\paragraph*{Physicality equals chequer-Hermicity} In this appendix we show that the operator coefficient matrices used in~\cref{TheoreticalDevelopment,NoGhostCriterionAppendix} have a chequer-Hermitian structure. Despite this fact, we particularly emphasise that \emph{the operators themselves are always Hermitian}~\cite{Karananas:2014pxa} --- this is in line with the basic requirements of physicality. The apparent discrepancy arises only because of our discussion in~\cref{ConstructionOfSpinProjectionOperators}, in which a chequer-Hermitian but orthonormal basis of SPOs was selected. To begin, notice how~\cref{ToMatrices1} implies that the block-consituents~$\WaveOperatorJ{J}$ of the wave operator coefficient matrix~$\WaveOperator{}$ defined in~\cref{BuildWaveOperator} are structured matrices. To see this, we substitute~\cref{ToMatrices1} into~\cref{MomentumRepresentation_suppressed} so that the quadratic theory (with sources suppressed) becomes 
\begin{equation}\label{ActionExpanded}
	S
	=
	\frac{1}{(2\pi)^4}\int\mathrm{d}^4k
	\sum_X
	\Bigg[
	\FieldDownConj{X}{\mu}
		\sum_Y
		\sum_{J,P,\Pp{}}\sum_{\States{J}{P}{X}{i},\States{J}{\Pp{}}{Y}{j}}\tensor{\left[\WaveOperatorJ{J}\right]}{_{\States{J}{P}{X}{i}\States{J}{\Pp{}}{Y}{j}}}\PVSPOUpDown{J}{P}{\Pp{}}{X}{Y}{i}{j}{\mu}{\nu}\FieldUp{Y}{\nu}
	\Bigg].
\end{equation}
Since~$S$ must be real, it is possible to take the complex conjugate of~\cref{ActionExpanded} and use~\cref{NewHermicity,ChequerHermitianIndices} to show 
\begin{equation}\label{ActionExpandedConjugate}
	S
	=
	\frac{1}{(2\pi)^4}\int\mathrm{d}^4k
	\sum_X
	\Bigg[
		\FieldUp{X}{\mu}
		\sum_Y
		\sum_{J,P,\Pp{}}\sum_{\States{J}{P}{X}{i},\States{J}{\Pp{}}{Y}{j}}
		\tensor{\left[\WaveOperatorJConj{J}\right]}{_{\States{J}{P}{X}{i}\States{J}{\Pp{}}{Y}{j}}}
		\PVSPOUpDown{J}{\Pp{}}{P}{Y}{X}{j}{i}{\nu}{\mu}
		\FieldDownConj{Y}{\nu}
	\Bigg].
\end{equation}
By comparing~\cref{ActionExpanded,ActionExpandedConjugate} it follows that
\begin{equation}
\tensor{\left[\WaveOperatorJ{J}\right]}{_{\States{J}{P}{X}{i}\States{J}{\Pp{}}{Y}{j}}}\PVSPOUpDown{J}{P}{\Pp{}}{X}{Y}{i}{j}{\mu}{\nu} = \tensor{\left[\WaveOperatorJConj{J}\right]}{_{\States{J}{P}{X}{i}\States{J}{\Pp{}}{Y}{j}}}
		\PVSPOUpDown{J}{\Pp{}}{P}{X}{Y}{j}{i}{\mu}{\nu}  \ ,
\end{equation}
where we use the notation defined in~\cref{ChequerHermitianDef}. This means that~$\WaveOperatorJ{J}$ and~$\WaveOperatorJConj{J}$ serve equally well as the coefficient matrix representation of the wave operator. Moving forwards, therefore, it is always safe to assume
\begin{equation}\label{ChequerHermitianWaveOperator}
	\WaveOperatorJ{J}= \WaveOperatorJConj{J}\quad \forall J \iff \WaveOperator{} = \WaveOperatorConj{},
\end{equation}
where the implication in~\cref{ChequerHermitianWaveOperator} follows from the block structure in~\cref{BuildWaveOperator}. This result follows immediately from the completely general claim that the wave operator itself must be Hermitian
\begin{equation}\label{WaveOperatorHermitian}
	\WaveOperatorTensorUpDown{X}{Y}{\mu}{\nu}^*\equiv\WaveOperatorTensorDownUp{Y}{X}{\nu}{\mu},
\end{equation}
since, if~\cref{WaveOperatorHermitian} is taken to be a convincing starting point, then one need only substitute~\cref{ToMatrices1} to arrive at~\cref{ChequerHermitianWaveOperator}. In summary, only the chequer-Hermitian part of the coefficient matrix~$\WaveOperatorJ{J}$ contributes to the physics. Equivalently, even if~$\WaveOperatorJ{J}$ is not explicitly chequer-Hermitian at its point of construction, it suffices to work only with its chequer-Hermitian part.\footnote{And indeed, the this operation is always equivalent to modifying the theory by the addition of boundary terms.}

\paragraph*{Propagator coefficient matrix} By this point, our conclusion in~\cref{ChequerHermitianWaveOperator} and the mathematical results of~\cref{ChequerHermicity,MoorePenrose} allow us to derive an explicit and highly compact formula the propagator coefficient matrix. Let~$\NullVectorsPV{J}{a}$ be labels for the null vectors of the chequer-Hermitian wave operator coefficient matrix block~$\WaveOperatorJ{J}$, so that~$\big\{\RightNullVectorPV{J}{a}\big\}$ and~$\big\{\LeftNullVectorPV{J}{a}\big\}$ are orthonormal sets of right and left null vectors respectively~(note that the latter can be deduced from the former via~\cref{RightToLeft}). For almost all models of physical relevance, it is the case that these vectors are purely functions of~$k$, being independent of the Lagrangian coupling coefficients.\footnote{Note that this is not always the case: there are instances of higher-spin Fronsdal-type models which have `parametric' gauge symmetries, in which the null vectors are smoothly parameterised by the Lagrangian coupling coefficients.} From~\cref{MPChequerHermitian} it then follows that
\begin{equation}\label{MoorePenroseJPractice}
	\begin{gathered}
	\MoorePenroseJ{J}\equiv\VimilarityJ{J}\cdot\WaveOperatorFullJInv{J}\cdot\SimilarityJ{J},
	\\
	\VimilarityJ{J}\equiv\mathsf{1}-\sum_{\NullVectorsPV{J}{a}}\RightNullVectorPV{J}{a}\cdot\RightNullVectorPVConj{J}{a},
	\quad
	\WaveOperatorFullJ{J}\equiv\WaveOperatorJ{J}+\sum_{\NullVectorsPV{J}{a}}\LeftNullVectorPV{J}{a}\cdot\RightNullVectorPVConj{J}{a},
	\quad
	\SimilarityJ{J}\equiv\mathsf{1}-\sum_{\NullVectorsPV{J}{a}}\LeftNullVectorPV{J}{a}\cdot\LeftNullVectorPVConj{J}{a},
	\end{gathered}
\end{equation}
where~$\WaveOperatorFullJ{J}$ is also chequer-Hermitian. 

\section{No-ghost criterion}\label{NoGhostCriterionAppendix}

\paragraph*{Restoring Hermicity} In this appendix we obtain our central result in~\cref{NoGhostCriterionMixed}, which is the most delicate change to the algorithm induced by parity violation. From the definition of the saturated propagator~\cref{eq:saturated_propagator_SPOs}, the no-ghost criterion can be expressed in terms of the coefficient matrix as
\begin{equation}\label{NoGhostCriterion}
	\NewRes{k^2}{\PVSquareMass{J}{s}}\Bigg(
	\sum_{X,Y}\sum_{P,\Pp{}}\sum_{\States{J}{P}{X}{i},\States{J}{\Pp{}}{Y}{j}}\tensor{\left[\MoorePenroseJ{J}\right]}{_{\States{J}{P}{X}{i}\States{J}{\Pp{}}{Y}{j}}}\SourceDownConj{X}{\mu}\PVSPOUpDown{J}{P}{\Pp{}}{X}{Y}{i}{j}{\mu}{\nu}\SourceUp{Y}{\nu}\Bigg)\geq 0\quad \forall J,\PVMasses{J}{s}.
\end{equation}
The source currents~$\SourceUp{X}{\mu}$ are arbitrary, and the SPOs remain finite in the rest frame of the massive~$\PVMasses{J}{s}$-particle. We can therefore write~\cref{NoGhostCriterion} in a more compact form as
\begin{equation}\label{NoGhostCriterionCompact}
	\tr\left(\MoorePenroseSJ{s}{J}\cdot\LimitPartialSourceMatrix{s}{J}\right)\geq 0\quad \forall J,\quad \forall \PVMasses{J}{s},
	\quad
	\MoorePenroseSJ{s}{J}\equiv\NewRes{k^2}{\PVSquareMass{J}{s}}\left(\MoorePenroseJ{J}\right),
	\quad	
	\LimitPartialSourceMatrix{s}{J}\equiv\NewLim{k^2}{\PVSquareMass{J}{s}}\left(\PartialSourceMatrix{J}\right),
\end{equation}
where the source matrix is defined as~$\tensor{\left[\PartialSourceMatrix{J}\right]}{_{\States{J}{P}{X}{i}\States{J}{\Pp{}}{Y}{j}}}\equiv\SourceDownConj{X}{\mu}\PVSPOUpDown{J}{P}{\Pp{}}{X}{Y}{i}{j}{\mu}{\nu}\SourceUp{Y}{\nu}$. In the~$2\times 2$ block form provided by the parity indices~(see~\cref{ChequerHermicity}), the relevant chequer-Hermitian matrices are notated firstly in~\cref{eq:saturated_propagator_SPOs} --- where~$\MoorePenroseJMP{J}\equiv-\left(\MoorePenroseJPM{J}\right)^\dagger$ by the condition~$\MoorePenroseJ{J}\equiv\left(\MoorePenroseJ{J}\right)^\ddagger$ --- and secondly as
\begin{equation}
	\PartialSourceMatrix{J}\equiv
	\Chequer{\SourceDownConj{X}{\mu}\PVSPOUpDown{J}{+}{+}{X}{Y}{i}{j}{\mu}{\nu}\SourceUp{Y}{\nu}}{\SourceDownConj{X}{\mu}\PVSPOUpDown{J}{+}{-}{X}{Y}{i}{j}{\mu}{\nu}\SourceUp{Y}{\nu}}{\SourceDownConj{X}{\mu}\PVSPOUpDown{J}{-}{+}{X}{Y}{i}{j}{\mu}{\nu}\SourceUp{Y}{\nu}}{\SourceDownConj{X}{\mu}\PVSPOUpDown{J}{-}{-}{X}{Y}{i}{j}{\mu}{\nu}\SourceUp{Y}{\nu}}.
\end{equation}
At this point it is convenient to trade the orthonormality of the SPOs in exchange for the Hermicity of the propagator coefficient matrix by defining\footnote{Note that this merely constitutes a change of basis for the coefficient matrix: as emphasised in~\cref{OperatorCoefficientMatrices} the underlying \emph{operators} are always Hermitian.}
\begin{equation}
	\HermitianMoorePenroseJ{J}
	\equiv
	\Chequer{\mathsf{1}}{\mathsf{0}}{\mathsf{0}}{-\mathsf{1}}\cdot\MoorePenroseJ{J},
	\quad
	\ConstrainedPartialSourceMatrix{J}\equiv
	\PartialSourceMatrix{J}\cdot\Chequer{\mathsf{1}}{\mathsf{0}}{\mathsf{0}}{-\mathsf{1}},
\end{equation}
so that~$\HermitianMoorePenroseJ{J}\equiv\left(\HermitianMoorePenroseJ{J}\right)^\dagger$ and without loss of generality the no-ghost condition in~\cref{NoGhostCriterionCompact} becomes
\begin{equation}\label{NoGhostCriterionMoreCompact}
	\tr\left(\ResHermitianMoorePenroseJ{s}{J}\cdot\LimitConstrainedPartialSourceMatrix{s}{J}\right)\geq 0\quad \forall J,\quad \forall \PVMasses{J}{s},
	\quad
	\ResHermitianMoorePenroseJ{s}{J}\equiv\NewRes{k^2}{\PVSquareMass{J}{s}}\left(\HermitianMoorePenroseJ{J}\right),
	\quad	
	\LimitConstrainedPartialSourceMatrix{s}{J}\equiv\NewLim{k^2}{\PVSquareMass{J}{s}}\left(\ConstrainedPartialSourceMatrix{J}\right).
\end{equation}
If~$\HermitianMoorePenroseJ{J}$ has a simple pole as~$k^2\mapsto\PVSquareMass{J}{s}$ then by Hermicity~$\ResHermitianMoorePenroseJ{s}{J}$ must have one real non-zero eigenvalue~$\PartialEigenvalue{s}{J}$ which depends exclusively on the Lagrangian coupling coefficients\footnote{See also similar arguments in~\cite{Lin:2018awc}.}. The corresponding~(normalised) eigenvector can be thought of as the direct sum of vectors\footnote{We do not yet assume that~$\PartialEigenvector{s}{J}{+}$ or~$\PartialEigenvector{s}{J}{-}$ are individually eigenvectors.} belonging to parity-even and parity-odd sub-spaces, so that
\begin{equation}\label{ClearDivision}
	\begin{gathered}
	\ResHermitianMoorePenroseJ{s}{J}
	\equiv
	\PartialEigenvalue{s}{J}
	\Eigenvector{s}{J}\cdot\EigenvectorConj{s}{J}
	\equiv
	\PartialEigenvalue{s}{J}
	\Chequer{\PartialEigenvector{s}{J}{+}\cdot\PartialEigenvectorConj{s}{J}{+}}{\PartialEigenvector{s}{J}{+}\cdot\PartialEigenvectorConj{s}{J}{-}}{\PartialEigenvector{s}{J}{-}\cdot\PartialEigenvectorConj{s}{J}{+}}{\PartialEigenvector{s}{J}{-}\cdot\PartialEigenvectorConj{s}{J}{-}},
	\\
	\Eigenvector{s}{J}\equiv\ChequerVector{\PartialEigenvector{s}{J}{+}}{\PartialEigenvector{s}{J}{-}},
	\quad
	\EigenvectorConj{s}{J}\cdot\Eigenvector{s}{J}\equiv 1.
	\end{gathered}
\end{equation}
\cref{ClearDivision} indicates that there will be two kinds of scenarios. If~$\PartialEigenvalue{s}{J}$ is non-zero only within one diagonal block, then the massive~$\PVMasses{J}{s}$-particle may be associated with the corresponding parity of that block. Otherwise, parity is not a quantum number of the~$\PVMasses{J}{s}$-particle state.

\paragraph*{Parity-indefinite particles} We first consider the general case where~$\PartialEigenvector{s}{J}{+}$ and~$\PartialEigenvector{s}{J}{-}$ are simultaneously non-vanishing. The no-ghost criterion in~\cref{NoGhostCriterionMoreCompact} takes the component form
\begin{equation}\label{ExplicitProduct}
	\PartialEigenvalue{s}{J}\sum_{X,Y}\sum_{\States{J}{P}{X}{i},\States{J}{\Pp{}}{Y}{j}}
	\Pp{}
	\tensor{\left[\EigenvectorConj{s}{J}\right]}{_{\States{J}{P}{X}{i}}}
	\NewLim{k^2}{\PVSquareMass{J}{s}}
	\left(
	\SourceDownConj{X}{\mu}\PVSPOUpDown{J}{P}{\Pp{}}{X}{Y}{i}{j}{\mu}{\nu}\SourceUp{Y}{\nu}
	\right)
	\tensor{\left[\Eigenvector{s}{J}\right]}{_{\States{J}{\Pp{}}{Y}{j}}}
	\geq 0\quad \forall J,\quad \forall \PVMasses{J}{s}.
\end{equation}
Recalling once again that~$\SourceUp{X}{\mu}$ is arbitrary, we can equivalently parameterise it by arbitrary~$\SourceTildeUp{X}{\mu}$ such that
\begin{subequations}
\begin{align}
	\SourceUp{X}{\mu}
	&\equiv
	\sum_Y
	\sum_{J,P,\Pp{}}
	\sum_{\States{J}{P}{X}{i}\States{J}{\Pp{}}{Y}{j}}
	\PVSPOUpDown{J}{P}{\Pp{}}{X}{Y}{i}{j}{\mu}{\nu}
	\SourceTildeUp{Y}{\nu}
	\tensor{\left[\TotalEigenvector{s}{J}{\Ppp{}}{Z}{k}\right]}{_{\States{J}{P}{X}{i}\States{J}{\Pp{}}{Y}{j}}},
	\label{ExplicitNotation1}
	\\
	\SourceDownConj{X}{\mu}
	&\equiv
	\sum_Y
	\sum_{J,P,\Pp{}}
	\sum_{\States{J}{P}{X}{i}\States{J}{\Pp{}}{Y}{j}}
	\tensor{\left[\TotalEigenvectorConj{s}{J}{\Ppp{}}{Z}{k}\right]}{_{\States{J}{P}{Y}{i}\States{J}{\Pp{}}{X}{j}}}
	\SourceTildeDownConj{Y}{\nu}
	\PVSPOUpDown{J}{P}{\Pp{}}{Y}{X}{i}{j}{\nu}{\mu},
	\label{ExplicitNotation2}
\end{align}
\end{subequations}
where~$\TotalEigenvectorConj{s}{J}{\Ppp{}}{Z}{k}$ is any complete~(i.e. full-rank) row-matrix of orthonormal basis vectors, of which we choose the vector at position label~$\States{J}{\Ppp{}}{Z}{k}$ to be~$\EigenvectorConj{s}{J}$. Note that~\cref{ExplicitNotation2} follows from~\cref{ExplicitNotation1} due to the chequer-Hermitian property in~\cref{ChequerHermitianIndices}. When~\cref{ExplicitNotation1,ExplicitNotation2} are substituted into~\cref{ExplicitProduct} we obtain
\begin{equation}\label{ExpandedCriterion}
	\PartialEigenvalue{s}{J}
	\NewLim{k^2}{\PVSquareMass{J}{s}}
	\left(
	\Ppp{}
	\SourceTildeDownConj{Z}{\mu}
	\PVSPOUpDown{J}{\Ppp{}}{\Ppp{}}{Z}{Z}{k}{k}{\mu}{\nu}\SourceTildeUp{Z}{\nu}\right)
	\geq 0\quad \forall J,\quad \forall \PVMasses{J}{s}.
\end{equation}
Due to the positivity property of the SPOs in~\cref{eq:app_Positivity}, it follows that~\cref{ExpandedCriterion} implies the simple result~$\PartialEigenvalue{s}{J}>0$ for all the states~$\PVMasses{J}{s}$ across all~$J$. The most economical way to determine the eigenvalue is by taking the invariant trace of the residue matrix, so that~\cref{ExpandedCriterion} reduces simply to~\cref{NoGhostCriterionMixed}. As already stated, the mixed-parity scenario may be easily detected by simply inspecting the block-structure of the residue matrix~$\MoorePenroseSJ{s}{J}$. Once this is done, the formula in~\cref{NoGhostCriterionMixed} may be implemented without needing to compute the eigenvector~$\Eigenvector{s}{J}$ or performing any other operations.

\paragraph*{Parity-definite particles} The only other contingency that can arise is one where~$\PartialEigenvector{s}{J}{P}$ is non-vanishing, but~$\PartialEigenvector{s}{J}{\Pp{}}$ is vanishing for~$\Pp{}\neq P$. In this case, the considerations that led to~\cref{NoGhostCriterionMixed} still hold. As before,~$P$ can be determined easily by inspection of the block structure, at which point is it more sensible to denote the various masses using the label~$\Masses{J}{P}{s}$ rather than~$\PVMasses{J}{s}$. The no-ghost criterion in~\cref{NoGhostCriterionMixed} then becomes
\begin{equation}\label{NoGhostCriterion2}
	\NewRes{k^2}{\SquareMass{J}{P}{s}}\left(P\tr \MoorePenroseJP{J}{P}\right)>0\quad \forall J,\ P,\quad \forall \Masses{J}{P}{s}.
\end{equation}
Of course,~\cref{NoGhostCriterion2} is also the no-ghost criterion in cases where~$\MoorePenroseJ{J}$ is already block-diagonal before its pole residues are computed, such as happens without any parity violation: it was obtained already in~\cite{Lin:2018awc,Barker:2024juc}.

\section{Analytic calibration}\label{app:cross-checks}

\paragraph*{Parity-violating massive two-form} The action for a two-form with a parity-odd mass term is given in~\cref{eq:massive_2form_parity-violating}. This can be brought into the following `first-order' form 
\begin{equation}\label{eq:massive_2form_parity-violating_X}
	S = \int\diff^4x \left[-\tensor{\epsilon}{^{\m\n\rho\s}}\PD{_\rho} \TwoFormField{_{\m\n}}\tensor{X}{_\s} +\f{3}{2\a} \tensor{X}{_\m}\tensor{X}{^\m}  +\g \tensor{\epsilon}{^{\m\n\rho\s}}\TwoFormField{_{\m\n}}\TwoFormField{_{\rho\s}}\right] \ ,
\end{equation}
where~$X_\m$ is an auxiliary four-vector field; its equation of motion is
\begin{equation}\label{eq:app_X_pure_2form}
	\tensor{X}{_\s} = -\f{\a}{3}\tensor{\epsilon}{_{\m\n\rho\s}}\PD{^\rho} \TwoFormField{^{\m\n}} \ ,
\end{equation}
and when~\cref{eq:app_X_pure_2form} is plugged into~\cref{eq:massive_2form_parity-violating_X}, we find~\cref{eq:massive_2form_parity-violating}. On the other hand, the equations of motion for the two-form give
\begin{equation}
	\TwoFormField{_{\m\n}} = -\f{1}{\g} \tensor{F}{_{\m\n}} \ ,
\end{equation}
with~$\tensor{F}{_{\m\n}}\equiv \PD{_\m} \tensor{X}{_\n} - \PD{_\n}\tensor{X}{_\m}$. Therefore,~\cref{eq:app_X_pure_2form} yields
\begin{equation}\label{eq:app_X=0}
	\tensor{X}{_\m} = 0 \ , 
\end{equation}
so there are no propagating d.o.f. Equivalently, taking~\cref{eq:massive_2form_parity-violating_X} on-shell gives
\begin{equation}
	S = \int\diff^4x \left[ -\f{2}{\g} \tensor{F}{^{\m\n}}\tensor{\widetilde F}{_{\m\n}} + \f{3}{2\a} \tensor{X}{_\m}\tensor{X}{^\m} \right] \ , 
\end{equation}
with~$\tensor{\widetilde F}{_{\m\n}} \equiv \f 1 2 \tensor{\epsilon}{_{\m\n\rho\s}} \tensor{F}{^{\rho\s}}$. Notice that the vector appears without a kinetic term --- remember,~$\tensor{F}{^{\m\n}}\tensor{\widetilde F}{_{\m\n}}$ is a total derivative --- and so its equations of motion are as in~\cref{eq:app_X=0}.

\paragraph*{Parity-indefinite massive two-form} There is no difficulty in working out analytically the dynamics of a massive two-form with both parity-even and parity-odd mass terms. In terms of~$\tensor{X}{_\m}$,~\cref{eq:gen_two-form} becomes
\begin{equation}\label{eq:massive_2form_full_action_X}
	S =\int\diff^4x \left[-\tensor{\epsilon}{^{\m\n\rho\s}}\PD{_\rho} \TwoFormField{_{\m\n}}\tensor{X}{_\s} +\f{3}{2\a} \tensor{X}{_\m}\tensor{X}{^\m} +\b\TwoFormField{_{\mu\nu}}\TwoFormField{^{\mu\nu}} +\g \tensor{\epsilon}{^{\m\n\rho\s}}\TwoFormField{_{\m\n}}\TwoFormField{_{\rho\s}} \right] \ .
\end{equation}
The equations of motion read
\begin{equation}
	\b\TwoFormField{_{\m\n}} +\g\tensor{\epsilon}{_{\m\n\rho\s}}\TwoFormField{^{\rho\s}} +2\tensor{\widetilde F}{_{\m\n}} = 0 \ ,
\end{equation}
from which we find
\begin{equation}
	\TwoFormField{_{\m\n}} = -\f{1}{\b^2+4\g^2} \left(\g \tensor{F}{_{\m\n}} + \f{\b}{2}\tensor{\widetilde F}{_{\m\n}}\right) \ ,
\end{equation}
and~\cref{eq:massive_2form_full_action_X} becomes (after dropping full derivatives)
\begin{equation}
	S = \int \diff^4x\left[\f{\b}{4(\b^2+4\g^2)} \tensor{F}{_{\m\n}}\tensor{F}{^{\m\n}} + \f{3}{2a} \tensor{X}{_\m}\tensor{X}{^\m}\right] \ .
\end{equation}
In full accordance with the \PSALTer{} result, we see that the theory propagates a healthy massive spin-one field with square mass~$- 3(\b^2+4\g^2)/\a\b$, provided that~$\a >0$ and~$\b <0$.

\paragraph*{One-by-two CSKR theory} Instead of the usual Higgs mechanism, there exists yet another way to induce a mass for a vector field, ``topologically.'' This requires that it couple to a massless two-form, the latter eventually playing the role of the St\"uckelberg field. Let us make this maximally explicit, by considering~\cref{OneByTwoCSKRTheory}. As we did in the pure two-form case, we rewrite the model by using a vector~$\tensor{X}{_\m}$ as
\begin{equation}
	S =\int\diff^4x \left[\alpha\PD{_{[\mu}}\VectorField{_{\nu]}}\PD{^{[\mu}}\VectorField{^{\nu]}}-\tensor{\epsilon}{^{\m\n\rho\s}}\PD{_\rho} \TwoFormField{_{\m\n}}\tensor{X}{_\s} +\f{3}{2\b} \tensor{X}{_\m}\tensor{X}{^\m} +\gamma\tensor{\epsilon}{^{\mu\nu\rho\sigma}}\TwoFormField{_{\mu\nu}}\PD{_{[\rho}}\VectorField{_{\sigma]}}\right] \ .
\end{equation}
The equations of motion for~$\VectorField{_\m}$,~$\tensor{X}{_\m}$ and~$\TwoFormField{_{\m\n}}$ give
\begin{subequations}
\begin{gather}
	2\a\PD{^\n}\PD{_{[\m}}\VectorField{_{\n]}} -\g \tensor{\epsilon}{_{\m\n\rho\s}}\PD{^\s}\TwoFormField{^{\n\rho}}= 0 \ ,\label{eq:eom_B}\\
	\tensor{X}{_\m} = -\f{\b}{3}\tensor{\epsilon}{_{\m\n\rho\s}}\PD{^\s}\TwoFormField{^{\n\rho}} \ ,\label{eq:eom_X}\\
	\PD{_{[\m}}\tensor{X}{_{\n]}} = \g \PD{_{[\m}}\VectorField{_{\n]}} \ ,\label{eq:eom_B2}
\end{gather}
\end{subequations}
respectively; note that~\cref{eq:eom_B2} dictates that 
\begin{equation}
\label{eq:XBchi}
	\tensor{X}{_\m} = \g (\VectorField{_\m} -\PD{_\m} \chi) \ ,
\end{equation}
with~$\chi$ a scalar. Combining appropriately~\cref{eq:eom_B,eq:eom_X,eq:XBchi}, we find
\begin{equation}
	2\a\PD{^\n}\PD{_{[\m}}\VectorField{_{\n]}} +\f{3\g^2}{\b}(\VectorField{_\m} -\PD{_\m}\chi)= 0 \ ,
\end{equation}
which is the equation of motion for a massive spin-one field, with square mass~$-3\g^2/\a\b$, and the consistency conditions on the coefficients are~$\a<0$ and~$\b>0$. 

\paragraph*{Zero-by-three CSKR} For the model of Eq.~\cref{ZeroByThreeCSKRTheory}, 
 the equations of motion for the three-form and the scalar are
\begin{subequations}
\begin{gather}
	\PD{^\s}\left( \b \PD{_{[\m}}\ThreeFormField{_{\n\rho\s]}} +\f \g 2 \tensor{\epsilon}{_{\m\n\rho\s}}\phi \right) = 0 \ , \\ 
	\a \PD{^\alpha}\PD{_{\alpha}} \phi +\f{\g}{2}\tensor{\epsilon}{^{\m\n\rho\s}}\PD{_\s}\ThreeFormField{_{\m\n\rho}} = 0 \ ,
\end{gather}
\end{subequations}
respectively. We see that 
\begin{equation}
	\PD{_{[\m}}\ThreeFormField{_{\n\rho\s]}} = - \f{\g}{2\b} \tensor{\epsilon}{_{\m\n\rho\s}}\phi \ ,
\end{equation}
and 
\begin{equation}
	\a\PD{^\alpha}\PD{_\alpha}\phi -\f{6\g^2}{\b}\phi = 0 \ ,
\end{equation}
meaning that the theory propagates a spin-zero particle with square mass~$-6\g^2/\a\b$ which is healthy as long as~$\a>0$ and~$\b<0$. Once again, the explicit computation is in full agreement with the \PSALTer{} result.

\paragraph*{Parity-indefinite Einstein--Cartan gravity} When it comes to studying such models `by hand', it is useful to transition from the gauge picture with variables~$\tensor{e}{^\a_{\SmashAcute\m}}$ and~$\tensor{\omega}{^{\a\b}_{\SmashAcute\m}}$, to the affine one with variables the metric~$\tensor{g}{_{\SmashAcute\m\SmashAcute\n}}$ and (affine) connection~$\Con{^{\SmashAcute\m}_{\SmashAcute\n\SmashAcute\rho}}$; the latter are related to the former as
\begin{equation}
	\tensor{g}{_{\SmashAcute\m\SmashAcute\n}} = \tensor{e}{^\a_{\SmashAcute\m}}\tensor{e}{^\b_{\SmashAcute\n}}\tensor{\eta}{_{\a\b}} \ ,\quad \Con{^{\SmashAcute\m}_{\SmashAcute\n\SmashAcute\rho}} = \tensor{e}{_\a^{\SmashAcute\m}}\left(\PD{_{\SmashAcute\n}} \tensor{e}{^\a_{\SmashAcute\rho}}+\tensor{\omega}{^\a_{\SmashAcute\m\b}}\tensor{e}{^\b_{\SmashAcute\rho}} \right) \ . 
\end{equation}
From the above it can be shown that the affine torsion and curvature tensors read
\begin{equation}
	\ECT{^{\SmashAcute\m}_{\SmashAcute\n\SmashAcute\rho}} = \Con{^{\SmashAcute\m}_{\SmashAcute\n\SmashAcute\rho}} - \Con{^{\SmashAcute\m}_{\SmashAcute\rho\SmashAcute\n}} \ ,\quad \ECR{^{\SmashAcute\rho}_{\SmashAcute\s\SmashAcute\m\SmashAcute\n}} = \PD{_{\SmashAcute\m}}\Con{^{\SmashAcute\rho}_{\SmashAcute\n\SmashAcute\s}}-\PD{_{\SmashAcute\n}}\Con{^{\SmashAcute\rho}_{\SmashAcute\m\SmashAcute\s}} + \Con{^{\SmashAcute\rho}_{\SmashAcute\m\SmashAcute\lambda}}\Con{^{\SmashAcute\lambda}_{\SmashAcute\n\SmashAcute\s}} - \Con{^{\SmashAcute\rho}_{\SmashAcute\n\SmashAcute\lambda}}\Con{^{\SmashAcute\lambda}_{\SmashAcute\m\SmashAcute\s}} \ .
\end{equation}
For what follows, we also introduce the vector~$\tensor{v}{_{\SmashAcute\m}}$, pseudovector~$\tensor{a}{_{\SmashAcute\m}}$ and reduced torsion tensor~$\tensor{\tau}{_{\SmashAcute\m\SmashAcute\n\SmashAcute\rho}}$ defined as~\cite{Karananas:2021zkl,Karananas:2024xja}
\begin{equation}
\label{eq:app_torsion_irreps}
	\begin{gathered}
	\tensor{v}{^{\SmashAcute\m}} = \ECT{^{\SmashAcute\n}_{\SmashAcute\m\SmashAcute\n}} \ ,\quad \tensor{a}{^{\SmashAcute\m}} = \tensor{E}{^{\SmashAcute\m\SmashAcute\n\SmashAcute\rho\SmashAcute\s}}\ECT{_{\SmashAcute\n\SmashAcute\rho\SmashAcute\s}} \ ,\\ 
		\tensor{\tau}{_{\SmashAcute\m\SmashAcute\n\SmashAcute\rho}} =\f 2 3 \ECT{_{\SmashAcute\m\SmashAcute\n\SmashAcute\rho}} +\f 1 3 \left(\tensor{g}{_{\SmashAcute\m\SmashAcute\n}}\tensor{v}{_{\SmashAcute\rho}}-\tensor{g}{_{\SmashAcute\rho\SmashAcute\m}}\tensor{v}{_{\SmashAcute\n}}\right) - \f 1 3 \left(\ECT{_{\SmashAcute\n\SmashAcute\rho\SmashAcute\m}}-\ECT{_{\SmashAcute\rho\SmashAcute\n\SmashAcute\m}}\right) \ , 
	\end{gathered}
\end{equation}
with~$\tensor{\tau}{^{\SmashAcute\n}_{\SmashAcute\n\SmashAcute\m}}=\tensor{\tau}{^{\SmashAcute\n}_{\SmashAcute\m\SmashAcute\n}}=\tensor{E}{^{\SmashAcute\m\SmashAcute\n\SmashAcute\rho\SmashAcute\s}}\tensor{\tau}{_{\SmashAcute\n\SmashAcute\rho\SmashAcute\s}}=0$, and~$\tensor{E}{^{\SmashAcute\m\SmashAcute\n\SmashAcute\rho\SmashAcute\s}}=\tensor{\epsilon}{^{\SmashAcute\m\SmashAcute\n\SmashAcute\rho\SmashAcute\s}}/\sqrt{g}$, and~$g=-\det(\tensor{g}{_{\SmashAcute\m\SmashAcute\n}})$. The scalar~$\ECR{}$ and pseudoscalar~$\Holst{}$ curvatures in the affine basis read
\begin{equation}
	\ECR{} = \tensor{g}{^{\SmashAcute\s\SmashAcute\n}}\tensor*{\delta}{^{\SmashAcute\m}_{\SmashAcute\rho}} \ECR{^{\SmashAcute\rho}_{\SmashAcute\s\SmashAcute\m\SmashAcute\n}} \ ,\quad \Holst{}= \tensor{E}{^{\SmashAcute\rho\SmashAcute\s\SmashAcute\m\SmashAcute\n}}\ECR{_{\SmashAcute\rho\SmashAcute\s\SmashAcute\m\SmashAcute\n}} \ . 
\end{equation}
The most general parity-indefinite action that comprises all invariants which are at most quadratic in torsion and the scalar and pseudoscalar curvatures was given in the main text, see~\cref{eq:EC_parity_indefinite_general}. In terms of~\cref{eq:app_torsion_irreps}, it reads
\begin{align}
	S = \int \diff^4x \sqrt{g} \Bigg[c_1\ECR{} &+c_2 \Holst{} + c_3 \ECR{}^2 +c_4\ECR{}\Holst{} + c_5\Holst{}^2 \nonumber\\
	&+ \f{C_{vv}}{3} \tensor{v}{_{\SmashAcute\m}} \tensor{v}{^{\SmashAcute\m}} -\f{C_{aa}}{24}\tensor{a}{_{\SmashAcute\m}} \tensor{a}{^{\SmashAcute\m}} +\f{C_{\tau\tau}}{2}\tensor{\tau}{_{\SmashAcute\m\SmashAcute\n\SmashAcute\rho}}\tensor{\tau}{^{\SmashAcute\m\SmashAcute\n\SmashAcute\rho}}\nonumber\\
	&\qquad\quad+\f{2C_{va}}{3}\tensor{a}{_{\SmashAcute\m}} \tensor{v}{^{\SmashAcute\m}}  +\f{\tilde C_{\tau\tau}}{2}\tensor{E}{^{\SmashAcute\m\SmashAcute\n\SmashAcute\rho\SmashAcute\s}}\tensor{\tau}{_{\SmashAcute\lambda\SmashAcute\m\SmashAcute\n}}\tensor{\tau}{^{\SmashAcute\lambda}_{\SmashAcute\rho\SmashAcute\s}}\Bigg] \ ,
\end{align}
where
\begin{align}
\label{eq:app_notation_conv-1}
&C_{vv} =2c_6-c_7+3c_8  \ ,\quad C_{aa} =  4(c_6+c_7) \ ,\quad C_{va} =2c_9-c_{10} \ ,
\end{align}
and~$C_{\tau\tau}$ and~$\tilde C_{\tau\tau}$ depend on~$c_6,c_7$ and~$c_9,c_{10}$, respectively --- the explicit relations can be easily worked out but are completely irrelevant for the following. We can  get rid of the quadratic-in-curvature terms by introducing two auxiliary fields~$\chi$ and~$\phi$
\begin{align}
	S = \int \diff^4x\sqrt{g}\Bigg[(c_1+\chi) \ECR{} &+\left(c_2+q\chi+\phi\right) \Holst{} - \f{ \chi^2}{4c_3} - \f{c_3\phi^2 }{4c_3c_5-c_4^2}\nonumber\\
	&\quad+\f{C_{vv}}{3} \tensor{v}{_{\SmashAcute\m}}\tensor{v}{^{\SmashAcute\m}} -\f{C_{aa}}{24}\tensor{a}{_{\SmashAcute\m}} \tensor{a}{^{\SmashAcute\m}} -\f{C_{\tau\tau}}{2}\tensor{\tau}{_{\SmashAcute\m\SmashAcute\n\SmashAcute\rho}}\tensor{\tau}{^{\SmashAcute\m\SmashAcute\n\SmashAcute\rho}}\nonumber\\
	&\qquad\qquad+\f{2C_{va}}{3}\tensor{a}{_{\SmashAcute\m}}\tensor{v}{^{\SmashAcute\m}} +\f{\tilde C_{\tau\tau}}{2}\tensor{E}{^{\SmashAcute\m\SmashAcute\n\SmashAcute\rho\SmashAcute\s}}\tensor{\tau}{_{\SmashAcute\lambda\SmashAcute\m\SmashAcute\n}}\tensor{\tau}{^\lambda_{\SmashAcute\rho\SmashAcute\s}}
\Bigg] \ ,
\end{align}
with
\be
q=\f{c_4}{2c_3} \ . 
\ee
The next step consists in plugging into the above the standard decomposition of the scalar~$\ECR{}$ and pseudoscalar~$\Holst{}$ curvatures in Riemannian (denoted with a~`$~\mathring{}~$' on top) and post-Riemannian contributions
\begin{equation}
	\begin{aligned}
		\ECR{} & = \tensor{\mathring R}{} +2 \tensor{\mathring \nabla}{_{\SmashAcute\m}} \tensor{v}{^{\SmashAcute\m}} -\f 2 3 \tensor{v}{_{\SmashAcute\m}} \tensor{v}{^{\SmashAcute\m}} + \f{1}{24}\tensor{a}{_{\SmashAcute\m}} \tensor{a}{^{\SmashAcute\m}} +\f 1 2 \tensor{\tau}{_{\SmashAcute\m\SmashAcute\n\SmashAcute\rho}}\tensor{\tau}{^{\SmashAcute\m\SmashAcute\n\SmashAcute\rho}} \ ,\\ 
		\Holst{} & = -\tensor{\mathring \nabla}{_{\SmashAcute\m}} \tensor{a}{^{\SmashAcute\m}} +\f 2 3 \tensor{a}{_{\SmashAcute\m}} \tensor{v}{^{\SmashAcute\m}} +\f 1 2 \tensor{E}{^{\SmashAcute\m\SmashAcute\n\SmashAcute\rho\SmashAcute\s}}\tensor{\tau}{_{\SmashAcute\lambda\SmashAcute\m\SmashAcute\n}}\tensor{\tau}{^{\SmashAcute\lambda}_{\SmashAcute\rho\SmashAcute\s}} \ ,
	\end{aligned}
\end{equation}
to obtain
\begin{align}
\label{eq:S_eff}
	S =  \int \diff^4x \sqrt{g}\Bigg[&(c_1+\chi)\tensor{\mathring R}{} -\f{\chi^2}{4c_3} -\f{c_3\phi^2 }{4c_3c_5-c_4^2}- 2v^{\SmashAcute\m} \p_{\SmashAcute\m} \chi 
	\nonumber\\ 
    &\quad+a^{\SmashAcute\m}\left(q\p_{\SmashAcute\m} \chi +\p_{\SmashAcute\m} \phi\right)  -\f{2}{3}\left(\f{\Upsilon_1}{2}+\chi\right)v_{\SmashAcute\m} v^{\SmashAcute\m} 
	\nonumber\\
    &\qquad+\f{1}{24}\left(\Upsilon_4+\chi\right)\tensor{a}{_{\SmashAcute\m}} \tensor{a}{^{\SmashAcute\m}} +\f{c_{\tau\tau}+\chi}{2}\tensor{\tau}{_{\SmashAcute\m\SmashAcute\n\SmashAcute\rho}}\tensor{\tau}{^{\SmashAcute\m\SmashAcute\n\SmashAcute\rho}}
	\nonumber\\
    &\quad\qquad-\f{2}{3}\left(\Upsilon_2 -q\chi-\phi\right)\tensor{a}{_{\SmashAcute\m}} \tensor{v}{^{\SmashAcute\m}} + \f{\tilde c_{\tau\tau}+q\chi+\phi}{2}\tensor{E}{^{\SmashAcute\m\SmashAcute\n\SmashAcute\rho\SmashAcute\s}}\tensor{\tau}{_{\SmashAcute\lambda\SmashAcute\m\SmashAcute\n}}\tensor{\tau}{^{\SmashAcute\lambda}_{\SmashAcute\rho\SmashAcute\s}}\Bigg] \ ,
\end{align}
where 
\be
\Upsilon_1 = 2c_1-2c_6+c_7+3c_8\ ,\quad \Upsilon_2 = c_{10}-2c_9-c_2 \ ,\quad \Upsilon_4 = c_1-4(c_6+c_7) \ ,
\ee
have already appeared in the main text (see~\cref{ParticleSpectrographScalarParityViolatingPGT}), and we also introduced 
\be
c_{\tau\tau}=C_{\tau\tau}+ c_1 \ ,\quad \tilde c_{\tau\tau} = \tilde C_{\tau\tau}+ c_2 \ .
\ee
One notices that torsion appears algebraically in the action~\cref{eq:S_eff} and can thus be integrated out via the corresponding equations of motion. Variation of the above wrt to~$\tau$ dictates that the reduced tensor vanish on-shell
\begin{equation}
\label{eq:eom_tau}
	\tensor{\tau}{_{\SmashAcute\m\SmashAcute\n\SmashAcute\rho}} = 0 \ ,
\end{equation}
while for the vector and pseudovector, we find
\begin{subequations}
\begin{align}
\label{eq:eom_v}
	&\tensor{v}{_{\SmashAcute\m}} =- 3 \f{\left(\Upsilon_4-4q\Upsilon_2+(1+4q^2)\chi+4q\phi\right)\p_{\SmashAcute\m}\chi -4(\Upsilon_2-q\chi-\phi)\PD{_{\SmashAcute\m}}\phi}{D} \ ,\\
\label{eq:eom_a}
	&\tensor{a}{_{\SmashAcute\m}} = -12\f{\left(q\Upsilon_1+2\Upsilon_2-2\phi\right)\p_{\SmashAcute\m}\chi+\left(\Upsilon_1+2\chi\right)\PD{_{\SmashAcute\m}}\phi}{D} \ ,
\end{align}
\end{subequations}
and to keep the expressions short we introduced
\begin{equation}
\label{eq:demoninator}
D=  \Upsilon_1 \Upsilon_4 + 8\Upsilon_2^2+(\Upsilon_1-16q\Upsilon_2+2\Upsilon_4)\chi +2(1+4q^2)\chi^2-16\phi(\Upsilon_2-q\chi)+8\phi^2 \ .
\end{equation}
Plugging~\crefrange{eq:eom_tau}{eq:eom_a} into~\cref{eq:S_eff}, we obtain 
\begin{equation}
\label{eq:S_eff_2}
	S =\int\diff^4x \sqrt{g} \left[(c_1+\chi)\tensor{\mathring R}{} + \PD{_{\SmashAcute\m}} \upphi^\text{T} \cdot\mathsf{g}\cdot\PD{^{\SmashAcute\m}}\upphi - \f{\chi^2}{4c_3} - \f{c_3\phi^2 }{4c_3c_5-c_4^2} \right] \ ,
\end{equation}
where~$\upphi^\text{T}=\left[\chi,\phi\right]$, and 
\begin{equation}
	\Kinetic{} = \f{3}{D} \NormalMatrix{\Kinetic{_{\chi\chi}} }{ \Kinetic{_{\chi \phi}} }{ \Kinetic{_{\chi\phi}} }{ \Kinetic{_{\phi\phi}} },
\end{equation}
is the metric of the kinetic manifold whose components are  
\begin{equation}
	\begin{aligned}
	&\Kinetic{_{\chi\chi}} = \Upsilon_4-2q(q\Upsilon_1+4\Upsilon_2)+(1+4q^2)\chi+8 q\phi,\quad \Kinetic{_{\chi\phi}} = -2\left(q\Upsilon_1+2\Upsilon_2-2\phi\right),
	\\
	&\Kinetic{_{\phi\phi}} = -2(\Upsilon_1+2\chi).
	\end{aligned}
\end{equation}
We can eliminate the nonminimal coupling of~$\chi$ to gravity via a Weyl rescaling of~$\tensor{g}{_{\m\n}}$
\begin{equation}
\label{eq:Weyl_transformation}
	\tensor{g}{_{\m\n}} \mapsto \Omega^{-2} \tensor{g}{_{\m\n}} \ ,\quad \Omega^2 = \f{c_1+\chi}{c_1} \ .
\end{equation}
This results into the following Einstein-frame action
\begin{equation}
\label{eq:actionPGT_weyl_transformed}
	S =  \int \diff^4x\sqrt{g}\left[ c_1 \tensor{\mathring R}{} + \PD{_{\SmashAcute\m}} \upphi^\text{T}\cdot\NewKinetic{}\cdot\PD{^{\SmashAcute\m}} \upphi -V \right] \ ,
\end{equation}
where the Weyl-transformed field-space metric~$\NewKinetic{}$ is given by
\begin{equation}\label{eq:field_space_metric_EF}
	\NewKinetic{} = \f{3}{D\Omega^2} \NormalMatrix{ \Kinetic{_{\chi\chi}} -\f{D}{2c_1\Omega^2} }{ \Kinetic{_{\chi \phi}} }{ \Kinetic{_{\chi\phi}} }{ \Kinetic{_{\phi\phi}} },
\end{equation}
and the potential reads
\be
V = \f{c_1^2}{(c_1+\chi)^2}\left(\f{\chi^2}{4c_3} +\f{c_3\phi^2 }{4c_3c_5-c_4^2}\right) \ .
\ee
Since gravity is canonical, we can immediately conclude that 
\be
c_1<0 \ ,
\ee
as expected. Turning now to the scalar sector, we first consider the potential, that is extremized for 
\be
\label{eq:pot_minimum}
\chi=\phi = 0 \ .
\ee
For its Hessian to be positive-definite when evaluated on the extremum, we find
\be
\left(c_3>0\right) \wedge \left( 4c_3 c_5 -c_4^2 > 0\right) \ ,
\ee
from which it follows that~$c_5>0$. These are exactly the no-ghost conditions we derived with the SPOs in the main text. Finally, we consider the kinetic terms of the fields.\footnote{Although not needed for the considerations here, note that there is no difficulty in diagonalizing the kinetic terms of the scalars and also making one of them ($\chi$) canonical; this is achieved by introducing 
\begin{equation}
\Phi = \log\left[\f{c_4 \Upsilon_1+4c_3(\Upsilon_2-\phi)}{2c_3(\Upsilon_1+2\chi)}\right]\ ,\quad X = 2\sqrt{3c_1}\tan^{-1}\left[\sqrt{\f{\Upsilon_1+2\chi}{2c_1-\Upsilon_1}}\right] \ .
\end{equation}
} We evaluate the field-space metric~\cref{eq:field_space_metric_EF} on~\cref{eq:pot_minimum}, and then require that its determinant and trace be positive. This reproduces the constraints of~\cref{eq:no_tachyon_cond_1,eq:no_tachyon_cond_2}. 

\section{Spectral calibration}\label{app:PGT_comparison}

\paragraph*{General EC/Poincar\'e gravity} In this appendix we perform the most sophisticated possible calibration for the \PSALTer{} implementation. In this procedure we compare the software output with the results of~\cite{Karananas:2014pxa}, where the most general parity-violating theory up to quadratic order in curvature and torsion (see also~\cite{Diakonov:2011fs,Baekler:2011jt}) was already studied in the SPO formalism. In~\cite{Karananas:2014pxa}, however, different Lagrangian coupling coefficients were used relative to those introduced in~\cref{sec:EC_gravity}. Specifically, the~$\Holst{}$ and the~$\ECR{}^2$ operators were not included, the first being related to~$\epsilon^{\m\n\rho\s} \ECT{_{\m\n\lambda}}\ECT{_{\rho\s}^{\lambda}}$ (see also~\cref{foot:holst}), and the latter to the squares of the Ricci and Riemann tensors via the Gauss--Bonnet identity. To facilitate the comparison, we now utilize the parametrization of~\cite{Karananas:2014pxa}. Accordingly, the action in~\cref{eq:EC_parity_indefinite_general} is extended and reparametrized as
\begin{align}
	S=\int & \mathrm{d}^4x \,e\bigg[
		-\lambda\mathscr{R}
	+\frac{1}{6}\left(2r_1+r_2\right)\tensor{\mathscr{R}}{_{\a\b\g\delta}}\tensor{\mathscr{R}}{^{\a\b\g\delta}}
	+\frac{2}{3}\left(r_1-r_2\right)\tensor{\mathscr{R}}{_{\a\b\g\delta}}\tensor{\mathscr{R}}{^{\a\g\b\delta}}
	\nonumber\\
	&
	+\frac{1}{6}\left(2r_1+r_2-6r_3\right)\tensor{\mathscr{R}}{_{\a\b\g\delta}}\tensor{\mathscr{R}}{^{\g\delta\a\b}}
	+\left(r_4+r_5\right)\tensor{\mathscr{R}}{_{\a\b}}\tensor{\mathscr{R}}{^{\a\b}}
	+\left(r_4-r_5\right)\tensor{\mathscr{R}}{_{\a\b}}\tensor{\mathscr{R}}{^{\b\a}}
	\nonumber\\
	&+\frac{1}{6}\left(r_6-r_8\right)\mathscr{R}\tensor{\tilde{\mathscr{R}}}{}
	-\frac{1}{8}\left(r_7+r_8\right)\tensor{\epsilon}{^{\a\b\m\n}}\tensor{\mathscr{R}}{_{\a\b\rho\s}}\tensor{\mathscr{R}}{_{\m\n}^{\s\rho}}
	+\frac{1}{4}\left(r_7-r_8\right)\tensor{\epsilon}{^{\a\b\m\n}}\tensor{\mathscr{R}}{_{\a\b\rho\s }}\tensor{\mathscr{R}}{^{\rho\s}_{\m\n}}
	\nonumber\\
	&
		+\frac{1}{12}\left(4t_1+t_2+3\lambda\right)\tensor{\mathscr{T}}{_{\a\b\g}}\tensor{\mathscr{T}}{^{\a\b\g}}-\frac{1}{3}\left(t_1-2t_3+3\lambda\right)\tensor{\mathscr{T}}{_{\a}}\tensor{\mathscr{T}}{^{\a}}
	\nonumber\\
	&-\frac{1}{6}\left(2t_1-t_2+3\lambda\right)\tensor{\mathscr{T}}{_{\a\b\g}}\tensor{\mathscr{T}}{^{\b\g\a}}-\frac{1}{12}\left(t_4+4t_5\right)\tensor{\epsilon}{^{\a\b\m\n}}\tensor{\mathscr{T}}{_{\rho\a\b}}\tensor{\mathscr{T}}{^{\rho}_{\m\n}}\nonumber\\
	&+\frac{1}{3}\left(t_4-2t_5\right)\tensor{\epsilon}{^{\a\b\m\n}}\tensor{\mathscr{T}}{_{\a\b\rho}}\tensor{\mathscr{T}}{_{\m\n}^{\rho}}
	\bigg] \ , \label{PGTVersion}
\end{align}
where~$\lambda, r_1,\ldots,r_8$ and~$t_1,\ldots,t_5$ are constants \lstinline!kLambda!, \lstinline!kR1!, through to \lstinline!kT5!.

\paragraph*{Results of the calibration} The quadratic part of~\cref{PGTVersion} is inevitably a very long expression:
\lstinputlisting{GeneralParityViolatingPGT.tex}
The output is shown in~\cref{ParticleSpectrographGeneralParityViolatingPGT}. Apart from polynomial factors in~$k^2$ whose couplings are numerical,\footnote{These factors are artefacts of Moore--Penrose gauge fixing, and do not imply the presence of massive poles.} each determinant is quadratic in~$k^2$ with couplings that depend on the coefficients in~\cref{PGTVersion}. The roots of these quadratic equations are the masses of the two non-graviton particles in each spin sector.  Importantly, the mass expressions are identical to the ones presented in~\cite{Karananas:2014pxa}, and so are the no-tachyon conditions~\cite{Blagojevic:2018dpz} that follow by requiring that these be (real and) positive. This is the first non-trivial sanity-check that the code passes successfully.  It should be noted that there are in fact two differences, attributed to choices of convention, between the matrix elements in~\cref{ParticleSpectrographGeneralParityViolatingPGT} and those in~\cite{Karananas:2014pxa}; nevertheless, neither affects the physics and the coefficient matrices are perfectly consistent with each other. The first difference is that all the off-diagonal~(parity-violating) blocks differ by a factor of~$i$. This is because~\PSALTer{} assumes the convention whereby the parity-violating spin-projection operators are symmetric and imaginary, whereas~\cite{Karananas:2014pxa} takes the same operators to be real, but skew-symmetric. As we showed in~\cref{SpinProjection}, the requirement for physicality is actually that these blocks have a \emph{skew-Hermitian} structure, and so either of these conventions is valid. The second difference is that the degeneracy of the spin-one matrix in~\cref{ParticleSpectrographGeneralParityViolatingPGT} is visible in the form of \emph{two} repeated rows and \emph{one} row of zeros~(and likewise for columns). In~\cite{Karananas:2014pxa}, on the other hand, there are \emph{three} repeated rows and columns~(similar matrices appear in~\cite{Lin:2019ugq}. The actual difference in this case is due to the direct decomposition of the negative parity spin-one modes in~\cref{FieldKinematicsTetradPerturbation}. These modes are linear combinations of the modes used in~\cite{Karananas:2014pxa,Lin:2019ugq}, which are obtained after first breaking the tetrad perturbation into symmetric and antisymmetric parts. The second non-trivial cross-check for the validity of the results obtained by \PSALTer{} is provided by deriving the no-ghost conditions --- with our simplified method of~\cref{MassiveSpectrum},  this can be done almost trivially by inspection of the matrices in~Fig.~\ref{ParticleSpectrographGeneralParityViolatingPGT}. We obtain
\begin{align}
	J&=0:\quad r_2< 0,~2r_2(r_1-r_3+2r_4) +r_6^2 < 0; \\
	J&=1:\quad r_1+r_4+r_5<0,~(r_1+r_4+r_5)(2r_3+r_5) +r_7^2<0; \\ 
	J&=2:\quad r_1< 0,~r_1 (2r_1-2r_3+r_4) + r_8^2<0; 
\end{align}
which are identical to the findings  of~\cite{Karananas:2014pxa,Blagojevic:2018dpz}. As well known, the above constraints for the spin-one and spin-two sectors are contradicting each other~\cite{Blagojevic:2018dpz,Karananas:2016ltn}.

\begin{figure}[htbp]
	\includegraphics[width=\linewidth]{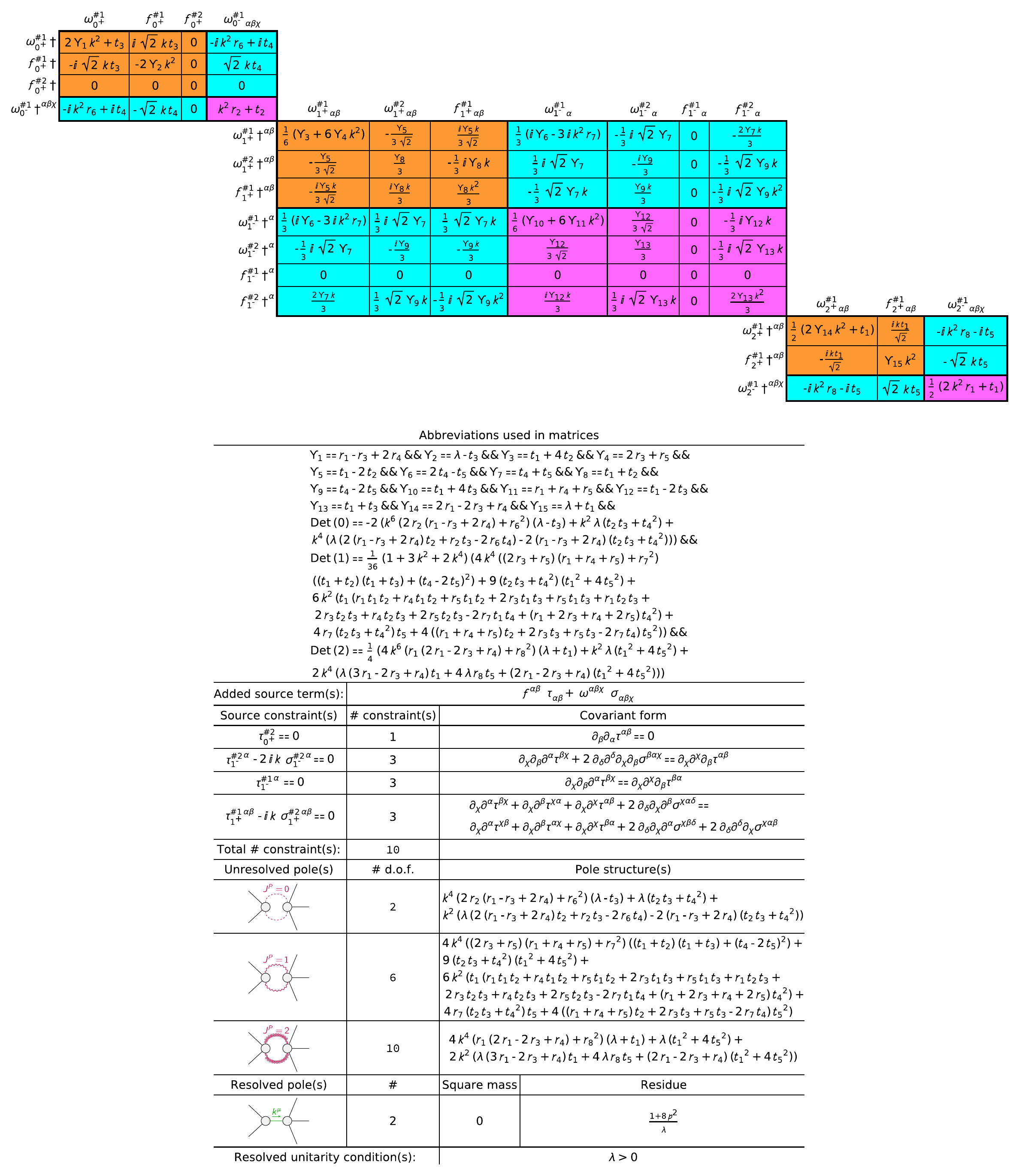}
	\caption{Partial particle spectrum of the most general parity-violating PGT. Due to the square masses of the new species being irrational functions of the Lagrangian coupling coefficients, \PSALTer{} does not yet attempt to evaluate the massive no-ghost criteria. The elements of the wave operator matrices are fully consistent with those in~\cite{Karananas:2014pxa}. The determinants are quadratic in~$k^2$, and the (generally massive) poles defined by their roots are also consistent with the mass formulae in~\cite{Karananas:2014pxa}. We note the appearance of ten gauge generators: precisely this number is expected due to the Poincar\'e gauge symmetry. All quantities are defined in~\cref{FieldKinematicsTetradPerturbation,FieldKinematicsSpinConnection}.}
\label{ParticleSpectrographGeneralParityViolatingPGT}
\end{figure}

\section{Sources and installation}\label{Install}

\paragraph*{Obtaining the package} In this appendix we provide the updated structure of the \PSALTer{} source files. As before, the \PSALTer{} package should only be installed after the \xAct{} suite of packages has been installed. For information about \xAct{}, see \href{http://www.xact.es/}{\texttt{xact.es}}. The actual \PSALTer{} package is available at the \GitHub{} repository \href{https://github.com/wevbarker/PSALTer}{\texttt{github.com/wevbarker/PSALTer}}, along with installation instructions for various operating systems, including \Windows{} and \Mac{}. Here we demonstrate a \Linux{} installation.\footnote{The syntax highlighting for \Bash{} differs from that used for the \WolframLanguage{} in~\cref{SymbolicImplementation}.} One can use \Bash{} to download \PSALTer{} into the home directory as follows:
\begin{lstlisting}[language=Special]
[user@system ~]$\dollar$ git clone https://github.com/wevbarker/PSALTer
\end{lstlisting}
\paragraph*{Structure of the package} The package contains~$\SI{1e4}{}$ source lines of code distributed in a modular design across plaintext \WolframLanguage{} files with \texttt{.m} or \texttt{.wl} extensions (there are also some graphics files). There are no binaries, and the software does not need to be compiled. The latest directory tree, which has been heavily restructured as compared to the initial release in~\cite{Barker:2024juc}, is as follows:
\lstinputlisting[breaklines=true,style=ascii-tree,language=Special]{SourceTree.tex}
\paragraph*{Installing the package} To make the installation, the sources should simply be copied alongside the other \xAct{} sources. If the installation of \xAct{} is global, one can use:
\begin{lstlisting}[language=Special]
[user@system ~]$\dollar$ cd PSALTer/xAct
[user@system xAct]$\dollar$ sudo cp -r PSALTer /usr/share/Mathematica/Applications/xAct/
\end{lstlisting}
Or, for a local installation of \xAct{}, one may use:
\begin{lstlisting}[language=Special]
[user@system xAct]$\dollar$ cp -r PSALTer ~/.Mathematica/Applications/xAct/
\end{lstlisting}
In the latest version of \PSALTer{}, the additional dependencies \Inkscape{} and \RectanglePacking{} have been removed. It may also be helpful to run \PSALTer{} with a stable internet connection, since some of the functions used may need to be imported from the online Wolfram Function Repository --- this process should happen automatically. All the details provided in this appendix may change with future versions of \PSALTer{}. Up-to-date installation instructions will be maintained at \href{https://github.com/wevbarker/PSALTer}{\texttt{github.com/wevbarker/PSALTer}}.

\addcontentsline{toc}{section}{References}
\bibliography{Manuscript,NotINSPIRE}

\begin{thebibliography}{94}%
\makeatletter
\providecommand \@ifxundefined [1]{%
 \@ifx{#1\undefined}
}%
\providecommand \@ifnum [1]{%
 \ifnum #1\expandafter \@firstoftwo
 \else \expandafter \@secondoftwo
 \fi
}%
\providecommand \@ifx [1]{%
 \ifx #1\expandafter \@firstoftwo
 \else \expandafter \@secondoftwo
 \fi
}%
\providecommand \natexlab [1]{#1}%
\providecommand \enquote  [1]{``#1''}%
\providecommand \bibnamefont  [1]{#1}%
\providecommand \bibfnamefont [1]{#1}%
\providecommand \citenamefont [1]{#1}%
\providecommand \href@noop [0]{\@secondoftwo}%
\providecommand \href [0]{\begingroup \@sanitize@url \@href}%
\providecommand \@href[1]{\@@startlink{#1}\@@href}%
\providecommand \@@href[1]{\endgroup#1\@@endlink}%
\providecommand \@sanitize@url [0]{\catcode `\\12\catcode `\$12\catcode
  `\&12\catcode `\#12\catcode `\^12\catcode `\_12\catcode `\%12\relax}%
\providecommand \@@startlink[1]{}%
\providecommand \@@endlink[0]{}%
\providecommand \url  [0]{\begingroup\@sanitize@url \@url }%
\providecommand \@url [1]{\endgroup\@href {#1}{\urlprefix }}%
\providecommand \urlprefix  [0]{URL }%
\providecommand \Eprint [0]{\href }%
\providecommand \doibase [0]{https://doi.org/}%
\providecommand \selectlanguage [0]{\@gobble}%
\providecommand \bibinfo  [0]{\@secondoftwo}%
\providecommand \bibfield  [0]{\@secondoftwo}%
\providecommand \translation [1]{[#1]}%
\providecommand \BibitemOpen [0]{}%
\providecommand \bibitemStop [0]{}%
\providecommand \bibitemNoStop [0]{.\EOS\space}%
\providecommand \EOS [0]{\spacefactor3000\relax}%
\providecommand \BibitemShut  [1]{\csname bibitem#1\endcsname}%
\let\auto@bib@innerbib\@empty
\bibitem [{\citenamefont {'t~Hooft}\ and\ \citenamefont
  {Veltman}(1974)}]{tHooft:1974toh}%
  \BibitemOpen
  \bibfield  {author} {\bibinfo {author} {\bibfnamefont {G.}~\bibnamefont
  {'t~Hooft}}\ and\ \bibinfo {author} {\bibfnamefont {M.~J.~G.}\ \bibnamefont
  {Veltman}},\ }\bibfield  {title} {\bibinfo {title} {{One loop divergencies in
  the theory of gravitation}},\ }\href@noop {} {\bibfield  {journal} {\bibinfo
  {journal} {Ann. Inst. H. Poincare Phys. Theor. A}\ }\textbf {\bibinfo
  {volume} {20}},\ \bibinfo {pages} {69} (\bibinfo {year} {1974})}\BibitemShut
  {NoStop}%
\bibitem [{\citenamefont {'t~Hooft}(1973)}]{tHooft:1973bhk}%
  \BibitemOpen
  \bibfield  {author} {\bibinfo {author} {\bibfnamefont {G.}~\bibnamefont
  {'t~Hooft}},\ }\bibfield  {title} {\bibinfo {title} {{An algorithm for the
  poles at dimension four in the dimensional regularization procedure}},\
  }\href {https://doi.org/10.1016/0550-3213(73)90263-0} {\bibfield  {journal}
  {\bibinfo  {journal} {Nucl. Phys. B}\ }\textbf {\bibinfo {volume} {62}},\
  \bibinfo {pages} {444} (\bibinfo {year} {1973})}\BibitemShut {NoStop}%
\bibitem [{\citenamefont {Deser}\ and\ \citenamefont {van
  Nieuwenhuizen}(1974)}]{Deser:1974cy}%
  \BibitemOpen
  \bibfield  {author} {\bibinfo {author} {\bibfnamefont {S.}~\bibnamefont
  {Deser}}\ and\ \bibinfo {author} {\bibfnamefont {P.}~\bibnamefont {van
  Nieuwenhuizen}},\ }\bibfield  {title} {\bibinfo {title}
  {{Nonrenormalizability of the Quantized Dirac-Einstein System}},\ }\href
  {https://doi.org/10.1103/PhysRevD.10.411} {\bibfield  {journal} {\bibinfo
  {journal} {Phys. Rev. D}\ }\textbf {\bibinfo {volume} {10}},\ \bibinfo
  {pages} {411} (\bibinfo {year} {1974})}\BibitemShut {NoStop}%
\bibitem [{\citenamefont {Deser}\ \emph {et~al.}(1974)\citenamefont {Deser},
  \citenamefont {Tsao},\ and\ \citenamefont {van
  Nieuwenhuizen}}]{Deser:1974nb}%
  \BibitemOpen
  \bibfield  {author} {\bibinfo {author} {\bibfnamefont {S.}~\bibnamefont
  {Deser}}, \bibinfo {author} {\bibfnamefont {H.-S.}\ \bibnamefont {Tsao}},\
  and\ \bibinfo {author} {\bibfnamefont {P.}~\bibnamefont {van
  Nieuwenhuizen}},\ }\bibfield  {title} {\bibinfo {title}
  {{Nonrenormalizability of Einstein Yang-Mills Interactions at the One Loop
  Level}},\ }\href {https://doi.org/10.1016/0370-2693(74)90268-8} {\bibfield
  {journal} {\bibinfo  {journal} {Phys. Lett. B}\ }\textbf {\bibinfo {volume}
  {50}},\ \bibinfo {pages} {491} (\bibinfo {year} {1974})}\BibitemShut
  {NoStop}%
\bibitem [{\citenamefont {Goroff}\ and\ \citenamefont
  {Sagnotti}(1986)}]{Goroff:1985th}%
  \BibitemOpen
  \bibfield  {author} {\bibinfo {author} {\bibfnamefont {M.~H.}\ \bibnamefont
  {Goroff}}\ and\ \bibinfo {author} {\bibfnamefont {A.}~\bibnamefont
  {Sagnotti}},\ }\bibfield  {title} {\bibinfo {title} {{The Ultraviolet
  Behavior of Einstein Gravity}},\ }\href
  {https://doi.org/10.1016/0550-3213(86)90193-8} {\bibfield  {journal}
  {\bibinfo  {journal} {Nucl. Phys. B}\ }\textbf {\bibinfo {volume} {266}},\
  \bibinfo {pages} {709} (\bibinfo {year} {1986})}\BibitemShut {NoStop}%
\bibitem [{\citenamefont {Salvio}(2022)}]{Salvio:2022suk}%
  \BibitemOpen
  \bibfield  {author} {\bibinfo {author} {\bibfnamefont {A.}~\bibnamefont
  {Salvio}},\ }\bibfield  {title} {\bibinfo {title} {{Inflating and reheating
  the Universe with an independent affine connection}},\ }\href
  {https://doi.org/10.1103/PhysRevD.106.103510} {\bibfield  {journal} {\bibinfo
   {journal} {Phys. Rev. D}\ }\textbf {\bibinfo {volume} {106}},\ \bibinfo
  {pages} {103510} (\bibinfo {year} {2022})},\ \Eprint
  {https://arxiv.org/abs/2207.08830} {arXiv:2207.08830 [hep-ph]} \BibitemShut
  {NoStop}%
\bibitem [{\citenamefont {Di~Marco}\ \emph {et~al.}(2024)\citenamefont
  {Di~Marco}, \citenamefont {Orazi},\ and\ \citenamefont
  {Pradisi}}]{DiMarco:2023ncs}%
  \BibitemOpen
  \bibfield  {author} {\bibinfo {author} {\bibfnamefont {A.}~\bibnamefont
  {Di~Marco}}, \bibinfo {author} {\bibfnamefont {E.}~\bibnamefont {Orazi}},\
  and\ \bibinfo {author} {\bibfnamefont {G.}~\bibnamefont {Pradisi}},\
  }\bibfield  {title} {\bibinfo {title} {{Einstein\textendash{}Cartan
  pseudoscalaron inflation}},\ }\href
  {https://doi.org/10.1140/epjc/s10052-024-12482-6} {\bibfield  {journal}
  {\bibinfo  {journal} {Eur. Phys. J. C}\ }\textbf {\bibinfo {volume} {84}},\
  \bibinfo {pages} {146} (\bibinfo {year} {2024})},\ \Eprint
  {https://arxiv.org/abs/2309.11345} {arXiv:2309.11345 [hep-th]} \BibitemShut
  {NoStop}%
\bibitem [{\citenamefont {Karananas}(2025{\natexlab{a}})}]{Karananas:2025xcv}%
  \BibitemOpen
  \bibfield  {author} {\bibinfo {author} {\bibfnamefont {G.~K.}\ \bibnamefont
  {Karananas}},\ }\bibfield  {title} {\bibinfo {title} {{Geometrical origin of
  inflation in Weyl-invariant Einstein-Cartan gravity}},\ }\href
  {https://doi.org/10.1016/j.physletb.2025.139343} {\bibfield  {journal}
  {\bibinfo  {journal} {Phys. Lett. B}\ }\textbf {\bibinfo {volume} {862}},\
  \bibinfo {pages} {139343} (\bibinfo {year} {2025}{\natexlab{a}})},\ \Eprint
  {https://arxiv.org/abs/2501.16416} {arXiv:2501.16416 [gr-qc]} \BibitemShut
  {NoStop}%
\bibitem [{\citenamefont {Freidel}\ \emph {et~al.}(2005)\citenamefont
  {Freidel}, \citenamefont {Minic},\ and\ \citenamefont
  {Takeuchi}}]{Freidel:2005sn}%
  \BibitemOpen
  \bibfield  {author} {\bibinfo {author} {\bibfnamefont {L.}~\bibnamefont
  {Freidel}}, \bibinfo {author} {\bibfnamefont {D.}~\bibnamefont {Minic}},\
  and\ \bibinfo {author} {\bibfnamefont {T.}~\bibnamefont {Takeuchi}},\
  }\bibfield  {title} {\bibinfo {title} {{Quantum gravity, torsion, parity
  violation and all that}},\ }\href
  {https://doi.org/10.1103/PhysRevD.72.104002} {\bibfield  {journal} {\bibinfo
  {journal} {Phys. Rev. D}\ }\textbf {\bibinfo {volume} {72}},\ \bibinfo
  {pages} {104002} (\bibinfo {year} {2005})},\ \Eprint
  {https://arxiv.org/abs/hep-th/0507253} {arXiv:hep-th/0507253} \BibitemShut
  {NoStop}%
\bibitem [{\citenamefont {Alexandrov}(2008)}]{Alexandrov:2008iy}%
  \BibitemOpen
  \bibfield  {author} {\bibinfo {author} {\bibfnamefont {S.}~\bibnamefont
  {Alexandrov}},\ }\bibfield  {title} {\bibinfo {title} {{Immirzi parameter and
  fermions with non-minimal coupling}},\ }\href
  {https://doi.org/10.1088/0264-9381/25/14/145012} {\bibfield  {journal}
  {\bibinfo  {journal} {Class. Quant. Grav.}\ }\textbf {\bibinfo {volume}
  {25}},\ \bibinfo {pages} {145012} (\bibinfo {year} {2008})},\ \Eprint
  {https://arxiv.org/abs/0802.1221} {arXiv:0802.1221 [gr-qc]} \BibitemShut
  {NoStop}%
\bibitem [{\citenamefont {Magueijo}\ \emph {et~al.}(2013)\citenamefont
  {Magueijo}, \citenamefont {Zlosnik},\ and\ \citenamefont
  {Kibble}}]{Magueijo:2012ug}%
  \BibitemOpen
  \bibfield  {author} {\bibinfo {author} {\bibfnamefont {J.~a.}\ \bibnamefont
  {Magueijo}}, \bibinfo {author} {\bibfnamefont {T.~G.}\ \bibnamefont
  {Zlosnik}},\ and\ \bibinfo {author} {\bibfnamefont {T.~W.~B.}\ \bibnamefont
  {Kibble}},\ }\bibfield  {title} {\bibinfo {title} {{Cosmology with a spin}},\
  }\href {https://doi.org/10.1103/PhysRevD.87.063504} {\bibfield  {journal}
  {\bibinfo  {journal} {Phys. Rev. D}\ }\textbf {\bibinfo {volume} {87}},\
  \bibinfo {pages} {063504} (\bibinfo {year} {2013})},\ \Eprint
  {https://arxiv.org/abs/1212.0585} {arXiv:1212.0585 [astro-ph.CO]}
  \BibitemShut {NoStop}%
\bibitem [{\citenamefont {Shaposhnikov}\ \emph
  {et~al.}(2021{\natexlab{a}})\citenamefont {Shaposhnikov}, \citenamefont
  {Shkerin}, \citenamefont {Timiryasov},\ and\ \citenamefont
  {Zell}}]{Shaposhnikov:2020gts}%
  \BibitemOpen
  \bibfield  {author} {\bibinfo {author} {\bibfnamefont {M.}~\bibnamefont
  {Shaposhnikov}}, \bibinfo {author} {\bibfnamefont {A.}~\bibnamefont
  {Shkerin}}, \bibinfo {author} {\bibfnamefont {I.}~\bibnamefont
  {Timiryasov}},\ and\ \bibinfo {author} {\bibfnamefont {S.}~\bibnamefont
  {Zell}},\ }\bibfield  {title} {\bibinfo {title} {{Higgs inflation in
  Einstein-Cartan gravity}},\ }\href
  {https://doi.org/10.1088/1475-7516/2021/10/E01} {\bibfield  {journal}
  {\bibinfo  {journal} {JCAP}\ }\textbf {\bibinfo {volume} {02}},\ \bibinfo
  {pages} {008}},\ \bibinfo {note} {[Erratum: JCAP 10, E01 (2021)]},\ \Eprint
  {https://arxiv.org/abs/2007.14978} {arXiv:2007.14978 [hep-ph]} \BibitemShut
  {NoStop}%
\bibitem [{\citenamefont {Shaposhnikov}\ \emph {et~al.}(2020)\citenamefont
  {Shaposhnikov}, \citenamefont {Shkerin}, \citenamefont {Timiryasov},\ and\
  \citenamefont {Zell}}]{Shaposhnikov:2020frq}%
  \BibitemOpen
  \bibfield  {author} {\bibinfo {author} {\bibfnamefont {M.}~\bibnamefont
  {Shaposhnikov}}, \bibinfo {author} {\bibfnamefont {A.}~\bibnamefont
  {Shkerin}}, \bibinfo {author} {\bibfnamefont {I.}~\bibnamefont
  {Timiryasov}},\ and\ \bibinfo {author} {\bibfnamefont {S.}~\bibnamefont
  {Zell}},\ }\bibfield  {title} {\bibinfo {title} {{Einstein-Cartan gravity,
  matter, and scale-invariant generalization~}},\ }\href
  {https://doi.org/10.1007/JHEP08(2021)162} {\bibfield  {journal} {\bibinfo
  {journal} {JHEP}\ }\textbf {\bibinfo {volume} {10}},\ \bibinfo {pages}
  {177}},\ \Eprint {https://arxiv.org/abs/2007.16158} {arXiv:2007.16158
  [hep-th]} \BibitemShut {NoStop}%
\bibitem [{\citenamefont {Shaposhnikov}\ \emph
  {et~al.}(2021{\natexlab{b}})\citenamefont {Shaposhnikov}, \citenamefont
  {Shkerin}, \citenamefont {Timiryasov},\ and\ \citenamefont
  {Zell}}]{Shaposhnikov:2020aen}%
  \BibitemOpen
  \bibfield  {author} {\bibinfo {author} {\bibfnamefont {M.}~\bibnamefont
  {Shaposhnikov}}, \bibinfo {author} {\bibfnamefont {A.}~\bibnamefont
  {Shkerin}}, \bibinfo {author} {\bibfnamefont {I.}~\bibnamefont
  {Timiryasov}},\ and\ \bibinfo {author} {\bibfnamefont {S.}~\bibnamefont
  {Zell}},\ }\bibfield  {title} {\bibinfo {title} {{Einstein-Cartan Portal to
  Dark Matter}},\ }\href {https://doi.org/10.1103/PhysRevLett.127.169901}
  {\bibfield  {journal} {\bibinfo  {journal} {Phys. Rev. Lett.}\ }\textbf
  {\bibinfo {volume} {126}},\ \bibinfo {pages} {161301} (\bibinfo {year}
  {2021}{\natexlab{b}})},\ \bibinfo {note} {[Erratum: Phys.Rev.Lett. 127,
  169901 (2021)]},\ \Eprint {https://arxiv.org/abs/2008.11686}
  {arXiv:2008.11686 [hep-ph]} \BibitemShut {NoStop}%
\bibitem [{\citenamefont {Karananas}\ \emph {et~al.}(2021)\citenamefont
  {Karananas}, \citenamefont {Shaposhnikov}, \citenamefont {Shkerin},\ and\
  \citenamefont {Zell}}]{Karananas:2021zkl}%
  \BibitemOpen
  \bibfield  {author} {\bibinfo {author} {\bibfnamefont {G.~K.}\ \bibnamefont
  {Karananas}}, \bibinfo {author} {\bibfnamefont {M.}~\bibnamefont
  {Shaposhnikov}}, \bibinfo {author} {\bibfnamefont {A.}~\bibnamefont
  {Shkerin}},\ and\ \bibinfo {author} {\bibfnamefont {S.}~\bibnamefont
  {Zell}},\ }\bibfield  {title} {\bibinfo {title} {{Matter matters in
  Einstein-Cartan gravity}},\ }\href
  {https://doi.org/10.1103/PhysRevD.104.064036} {\bibfield  {journal} {\bibinfo
   {journal} {Phys. Rev. D}\ }\textbf {\bibinfo {volume} {104}},\ \bibinfo
  {pages} {064036} (\bibinfo {year} {2021})},\ \Eprint
  {https://arxiv.org/abs/2106.13811} {arXiv:2106.13811 [hep-th]} \BibitemShut
  {NoStop}%
\bibitem [{\citenamefont {Karananas}\ \emph {et~al.}(2024)\citenamefont
  {Karananas}, \citenamefont {Shaposhnikov},\ and\ \citenamefont
  {Zell}}]{Karananas:2024xja}%
  \BibitemOpen
  \bibfield  {author} {\bibinfo {author} {\bibfnamefont {G.~K.}\ \bibnamefont
  {Karananas}}, \bibinfo {author} {\bibfnamefont {M.}~\bibnamefont
  {Shaposhnikov}},\ and\ \bibinfo {author} {\bibfnamefont {S.}~\bibnamefont
  {Zell}},\ }\bibfield  {title} {\bibinfo {title} {{Weyl-invariant
  Einstein-Cartan gravity: unifying the strong CP and hierarchy puzzles}},\
  }\href {https://doi.org/10.1007/JHEP11(2024)146} {\bibfield  {journal}
  {\bibinfo  {journal} {JHEP}\ }\textbf {\bibinfo {volume} {11}},\ \bibinfo
  {pages} {146}},\ \Eprint {https://arxiv.org/abs/2406.11956} {arXiv:2406.11956
  [hep-th]} \BibitemShut {NoStop}%
\bibitem [{\citenamefont {Chatzistavrakidis}\ \emph {et~al.}(2020)\citenamefont
  {Chatzistavrakidis}, \citenamefont {Karagiannis},\ and\ \citenamefont
  {Schupp}}]{Chatzistavrakidis:2020wum}%
  \BibitemOpen
  \bibfield  {author} {\bibinfo {author} {\bibfnamefont {A.}~\bibnamefont
  {Chatzistavrakidis}}, \bibinfo {author} {\bibfnamefont {G.}~\bibnamefont
  {Karagiannis}},\ and\ \bibinfo {author} {\bibfnamefont {P.}~\bibnamefont
  {Schupp}},\ }\bibfield  {title} {\bibinfo {title} {{Torsion-induced
  gravitational $\theta$ term and gravitoelectromagnetism}},\ }\href
  {https://doi.org/10.1140/epjc/s10052-020-08600-9} {\bibfield  {journal}
  {\bibinfo  {journal} {Eur. Phys. J. C}\ }\textbf {\bibinfo {volume} {80}},\
  \bibinfo {pages} {1034} (\bibinfo {year} {2020})},\ \Eprint
  {https://arxiv.org/abs/2007.06632} {arXiv:2007.06632 [gr-qc]} \BibitemShut
  {NoStop}%
\bibitem [{\citenamefont {Chatzistavrakidis}\ \emph {et~al.}(2022)\citenamefont
  {Chatzistavrakidis}, \citenamefont {Karagiannis}, \citenamefont {Manolakos},\
  and\ \citenamefont {Schupp}}]{Chatzistavrakidis:2021oyp}%
  \BibitemOpen
  \bibfield  {author} {\bibinfo {author} {\bibfnamefont {A.}~\bibnamefont
  {Chatzistavrakidis}}, \bibinfo {author} {\bibfnamefont {G.}~\bibnamefont
  {Karagiannis}}, \bibinfo {author} {\bibfnamefont {G.}~\bibnamefont
  {Manolakos}},\ and\ \bibinfo {author} {\bibfnamefont {P.}~\bibnamefont
  {Schupp}},\ }\bibfield  {title} {\bibinfo {title} {{Axion gravitodynamics,
  Lense-Thirring effect, and gravitational waves}},\ }\href
  {https://doi.org/10.1103/PhysRevD.105.104029} {\bibfield  {journal} {\bibinfo
   {journal} {Phys. Rev. D}\ }\textbf {\bibinfo {volume} {105}},\ \bibinfo
  {pages} {104029} (\bibinfo {year} {2022})},\ \Eprint
  {https://arxiv.org/abs/2111.04388} {arXiv:2111.04388 [gr-qc]} \BibitemShut
  {NoStop}%
\bibitem [{\citenamefont {Kuhfuss}\ and\ \citenamefont
  {Nitsch}(1986)}]{Kuhfuss:1986rb}%
  \BibitemOpen
  \bibfield  {author} {\bibinfo {author} {\bibfnamefont {R.}~\bibnamefont
  {Kuhfuss}}\ and\ \bibinfo {author} {\bibfnamefont {J.}~\bibnamefont
  {Nitsch}},\ }\bibfield  {title} {\bibinfo {title} {{Propagating Modes in
  Gauge Field Theories of Gravity}},\ }\href
  {https://doi.org/10.1007/BF00763447} {\bibfield  {journal} {\bibinfo
  {journal} {Gen. Rel. Grav.}\ }\textbf {\bibinfo {volume} {18}},\ \bibinfo
  {pages} {1207} (\bibinfo {year} {1986})}\BibitemShut {NoStop}%
\bibitem [{\citenamefont {Diakonov}\ \emph {et~al.}(2011)\citenamefont
  {Diakonov}, \citenamefont {Tumanov},\ and\ \citenamefont
  {Vladimirov}}]{Diakonov:2011fs}%
  \BibitemOpen
  \bibfield  {author} {\bibinfo {author} {\bibfnamefont {D.}~\bibnamefont
  {Diakonov}}, \bibinfo {author} {\bibfnamefont {A.~G.}\ \bibnamefont
  {Tumanov}},\ and\ \bibinfo {author} {\bibfnamefont {A.~A.}\ \bibnamefont
  {Vladimirov}},\ }\bibfield  {title} {\bibinfo {title} {{Low-energy General
  Relativity with torsion: A Systematic derivative expansion}},\ }\href
  {https://doi.org/10.1103/PhysRevD.84.124042} {\bibfield  {journal} {\bibinfo
  {journal} {Phys. Rev. D}\ }\textbf {\bibinfo {volume} {84}},\ \bibinfo
  {pages} {124042} (\bibinfo {year} {2011})},\ \Eprint
  {https://arxiv.org/abs/1104.2432} {arXiv:1104.2432 [hep-th]} \BibitemShut
  {NoStop}%
\bibitem [{\citenamefont {Baekler}\ \emph {et~al.}(2011)\citenamefont
  {Baekler}, \citenamefont {Hehl},\ and\ \citenamefont
  {Nester}}]{Baekler:2010fr}%
  \BibitemOpen
  \bibfield  {author} {\bibinfo {author} {\bibfnamefont {P.}~\bibnamefont
  {Baekler}}, \bibinfo {author} {\bibfnamefont {F.~W.}\ \bibnamefont {Hehl}},\
  and\ \bibinfo {author} {\bibfnamefont {J.~M.}\ \bibnamefont {Nester}},\
  }\bibfield  {title} {\bibinfo {title} {{Poincare gauge theory of gravity:
  Friedman cosmology with even and odd parity modes. Analytic part}},\ }\href
  {https://doi.org/10.1103/PhysRevD.83.024001} {\bibfield  {journal} {\bibinfo
  {journal} {Phys. Rev. D}\ }\textbf {\bibinfo {volume} {83}},\ \bibinfo
  {pages} {024001} (\bibinfo {year} {2011})},\ \Eprint
  {https://arxiv.org/abs/1009.5112} {arXiv:1009.5112 [gr-qc]} \BibitemShut
  {NoStop}%
\bibitem [{\citenamefont {Baekler}\ and\ \citenamefont
  {Hehl}(2011)}]{Baekler:2011jt}%
  \BibitemOpen
  \bibfield  {author} {\bibinfo {author} {\bibfnamefont {P.}~\bibnamefont
  {Baekler}}\ and\ \bibinfo {author} {\bibfnamefont {F.~W.}\ \bibnamefont
  {Hehl}},\ }\bibfield  {title} {\bibinfo {title} {{Beyond Einstein-Cartan
  gravity: Quadratic torsion and curvature invariants with even and odd parity
  including all boundary terms}},\ }\href
  {https://doi.org/10.1088/0264-9381/28/21/215017} {\bibfield  {journal}
  {\bibinfo  {journal} {Class. Quant. Grav.}\ }\textbf {\bibinfo {volume}
  {28}},\ \bibinfo {pages} {215017} (\bibinfo {year} {2011})},\ \Eprint
  {https://arxiv.org/abs/1105.3504} {arXiv:1105.3504 [gr-qc]} \BibitemShut
  {NoStop}%
\bibitem [{\citenamefont {Obukhov}\ and\ \citenamefont
  {Hehl}(2012)}]{Obukhov:2012je}%
  \BibitemOpen
  \bibfield  {author} {\bibinfo {author} {\bibfnamefont {Y.~N.}\ \bibnamefont
  {Obukhov}}\ and\ \bibinfo {author} {\bibfnamefont {F.~W.}\ \bibnamefont
  {Hehl}},\ }\bibfield  {title} {\bibinfo {title} {{Extended Einstein-Cartan
  theory a la Diakonov: the field equations}},\ }\href
  {https://doi.org/10.1016/j.physletb.2012.06.005} {\bibfield  {journal}
  {\bibinfo  {journal} {Phys. Lett. B}\ }\textbf {\bibinfo {volume} {713}},\
  \bibinfo {pages} {321} (\bibinfo {year} {2012})},\ \Eprint
  {https://arxiv.org/abs/1202.6045} {arXiv:1202.6045 [gr-qc]} \BibitemShut
  {NoStop}%
\bibitem [{\citenamefont {Karananas}(2015)}]{Karananas:2014pxa}%
  \BibitemOpen
  \bibfield  {author} {\bibinfo {author} {\bibfnamefont {G.~K.}\ \bibnamefont
  {Karananas}},\ }\bibfield  {title} {\bibinfo {title} {{The particle spectrum
  of parity-violating Poincar\'e gravitational theory}},\ }\href
  {https://doi.org/10.1088/0264-9381/32/5/055012} {\bibfield  {journal}
  {\bibinfo  {journal} {Class. Quant. Grav.}\ }\textbf {\bibinfo {volume}
  {32}},\ \bibinfo {pages} {055012} (\bibinfo {year} {2015})},\ \Eprint
  {https://arxiv.org/abs/1411.5613} {arXiv:1411.5613 [gr-qc]} \BibitemShut
  {NoStop}%
\bibitem [{\citenamefont {Obukhov}(2017)}]{Obukhov:2017pxa}%
  \BibitemOpen
  \bibfield  {author} {\bibinfo {author} {\bibfnamefont {Y.~N.}\ \bibnamefont
  {Obukhov}},\ }\bibfield  {title} {\bibinfo {title} {{Gravitational waves in
  Poincar\'e gauge gravity theory}},\ }\href
  {https://doi.org/10.1103/PhysRevD.95.084028} {\bibfield  {journal} {\bibinfo
  {journal} {Phys. Rev. D}\ }\textbf {\bibinfo {volume} {95}},\ \bibinfo
  {pages} {084028} (\bibinfo {year} {2017})},\ \Eprint
  {https://arxiv.org/abs/1702.05185} {arXiv:1702.05185 [gr-qc]} \BibitemShut
  {NoStop}%
\bibitem [{\citenamefont {Blagojevi\'c}\ and\ \citenamefont
  {Cvetkovi\'c}(2018)}]{Blagojevic:2018dpz}%
  \BibitemOpen
  \bibfield  {author} {\bibinfo {author} {\bibfnamefont {M.}~\bibnamefont
  {Blagojevi\'c}}\ and\ \bibinfo {author} {\bibfnamefont {B.}~\bibnamefont
  {Cvetkovi\'c}},\ }\bibfield  {title} {\bibinfo {title} {{General Poincar\'e
  gauge theory: Hamiltonian structure and particle spectrum}},\ }\href
  {https://doi.org/10.1103/PhysRevD.98.024014} {\bibfield  {journal} {\bibinfo
  {journal} {Phys. Rev. D}\ }\textbf {\bibinfo {volume} {98}},\ \bibinfo
  {pages} {024014} (\bibinfo {year} {2018})},\ \Eprint
  {https://arxiv.org/abs/1804.05556} {arXiv:1804.05556 [gr-qc]} \BibitemShut
  {NoStop}%
\bibitem [{\citenamefont {Barnes}(1965)}]{barnes1965lagrangian}%
  \BibitemOpen
  \bibfield  {author} {\bibinfo {author} {\bibfnamefont {K.}~\bibnamefont
  {Barnes}},\ }\bibfield  {title} {\bibinfo {title} {Lagrangian theory for the
  second-rank tensor field},\ }\href@noop {} {\bibfield  {journal} {\bibinfo
  {journal} {Journal of Mathematical Physics}\ }\textbf {\bibinfo {volume}
  {6}},\ \bibinfo {pages} {788} (\bibinfo {year} {1965})}\BibitemShut {NoStop}%
\bibitem [{\citenamefont {Rivers}(1964)}]{Rivers:1964nfl}%
  \BibitemOpen
  \bibfield  {author} {\bibinfo {author} {\bibfnamefont {R.~J.}\ \bibnamefont
  {Rivers}},\ }\bibfield  {title} {\bibinfo {title} {{Lagrangian theory for
  neutral massive spin-2 fields}},\ }\href {https://doi.org/10.1007/BF02734585}
  {\bibfield  {journal} {\bibinfo  {journal} {Nuovo Cim.}\ }\textbf {\bibinfo
  {volume} {34}},\ \bibinfo {pages} {386} (\bibinfo {year} {1964})}\BibitemShut
  {NoStop}%
\bibitem [{\citenamefont {Van~Nieuwenhuizen}(1973)}]{VanNieuwenhuizen:1973fi}%
  \BibitemOpen
  \bibfield  {author} {\bibinfo {author} {\bibfnamefont {P.}~\bibnamefont
  {Van~Nieuwenhuizen}},\ }\bibfield  {title} {\bibinfo {title} {{On ghost-free
  tensor lagrangians and linearized gravitation}},\ }\href
  {https://doi.org/10.1016/0550-3213(73)90194-6} {\bibfield  {journal}
  {\bibinfo  {journal} {Nucl. Phys. B}\ }\textbf {\bibinfo {volume} {60}},\
  \bibinfo {pages} {478} (\bibinfo {year} {1973})}\BibitemShut {NoStop}%
\bibitem [{\citenamefont {Percacci}\ and\ \citenamefont
  {Sezgin}(2020)}]{Percacci:2020ddy}%
  \BibitemOpen
  \bibfield  {author} {\bibinfo {author} {\bibfnamefont {R.}~\bibnamefont
  {Percacci}}\ and\ \bibinfo {author} {\bibfnamefont {E.}~\bibnamefont
  {Sezgin}},\ }\bibfield  {title} {\bibinfo {title} {{New class of ghost- and
  tachyon-free metric affine gravities}},\ }\href
  {https://doi.org/10.1103/PhysRevD.101.084040} {\bibfield  {journal} {\bibinfo
   {journal} {Phys. Rev. D}\ }\textbf {\bibinfo {volume} {101}},\ \bibinfo
  {pages} {084040} (\bibinfo {year} {2020})},\ \bibinfo {note} {[Erratum:
  Phys.Rev.D 111, 109902 (2025)]},\ \Eprint {https://arxiv.org/abs/1912.01023}
  {arXiv:1912.01023 [hep-th]} \BibitemShut {NoStop}%
\bibitem [{\citenamefont {Barker}\ \emph
  {et~al.}(2024{\natexlab{a}})\citenamefont {Barker}, \citenamefont {Marzo},\
  and\ \citenamefont {Rigouzzo}}]{Barker:2024juc}%
  \BibitemOpen
  \bibfield  {author} {\bibinfo {author} {\bibfnamefont {W.}~\bibnamefont
  {Barker}}, \bibinfo {author} {\bibfnamefont {C.}~\bibnamefont {Marzo}},\ and\
  \bibinfo {author} {\bibfnamefont {C.}~\bibnamefont {Rigouzzo}},\ }\bibfield
  {title} {\bibinfo {title} {{PSALTer: Particle Spectrum for Any Tensor
  Lagrangian}},\ }\href@noop {} {\  (\bibinfo {year} {2024}{\natexlab{a}})},\
  \Eprint {https://arxiv.org/abs/2406.09500} {arXiv:2406.09500 [hep-th]}
  \BibitemShut {NoStop}%
\bibitem [{\citenamefont {Martin-Garcia}\ \emph {et~al.}(2007)\citenamefont
  {Martin-Garcia}, \citenamefont {Portugal},\ and\ \citenamefont
  {Manssur}}]{Martin-Garcia:2007bqa}%
  \BibitemOpen
  \bibfield  {author} {\bibinfo {author} {\bibfnamefont {J.~M.}\ \bibnamefont
  {Martin-Garcia}}, \bibinfo {author} {\bibfnamefont {R.}~\bibnamefont
  {Portugal}},\ and\ \bibinfo {author} {\bibfnamefont {L.~R.~U.}\ \bibnamefont
  {Manssur}},\ }\bibfield  {title} {\bibinfo {title} {{The Invar tensor
  package}},\ }\href {https://doi.org/10.1016/j.cpc.2007.05.015} {\bibfield
  {journal} {\bibinfo  {journal} {Comput. Phys. Commun.}\ }\textbf {\bibinfo
  {volume} {177}},\ \bibinfo {pages} {640} (\bibinfo {year} {2007})},\ \Eprint
  {https://arxiv.org/abs/0704.1756} {arXiv:0704.1756 [cs.SC]} \BibitemShut
  {NoStop}%
\bibitem [{\citenamefont
  {Mart\'\i{}n-Garc\'\i{}a}(2008)}]{Martin-Garcia:2008ysv}%
  \BibitemOpen
  \bibfield  {author} {\bibinfo {author} {\bibfnamefont {J.~M.}\ \bibnamefont
  {Mart\'\i{}n-Garc\'\i{}a}},\ }\bibfield  {title} {\bibinfo {title} {{xPerm:
  fast index canonicalization for tensor computer algebra}},\ }\href
  {https://doi.org/10.1016/j.cpc.2008.05.009} {\bibfield  {journal} {\bibinfo
  {journal} {Comput. Phys. Commun.}\ }\textbf {\bibinfo {volume} {179}},\
  \bibinfo {pages} {597} (\bibinfo {year} {2008})},\ \Eprint
  {https://arxiv.org/abs/0803.0862} {arXiv:0803.0862 [cs.SC]} \BibitemShut
  {NoStop}%
\bibitem [{\citenamefont {Martin-Garcia}\ \emph {et~al.}(2008)\citenamefont
  {Martin-Garcia}, \citenamefont {Yllanes},\ and\ \citenamefont
  {Portugal}}]{Martin-Garcia:2008yei}%
  \BibitemOpen
  \bibfield  {author} {\bibinfo {author} {\bibfnamefont {J.~M.}\ \bibnamefont
  {Martin-Garcia}}, \bibinfo {author} {\bibfnamefont {D.}~\bibnamefont
  {Yllanes}},\ and\ \bibinfo {author} {\bibfnamefont {R.}~\bibnamefont
  {Portugal}},\ }\bibfield  {title} {\bibinfo {title} {{The Invar tensor
  package: Differential invariants of Riemann}},\ }\href
  {https://doi.org/10.1016/j.cpc.2008.04.018} {\bibfield  {journal} {\bibinfo
  {journal} {Comput. Phys. Commun.}\ }\textbf {\bibinfo {volume} {179}},\
  \bibinfo {pages} {586} (\bibinfo {year} {2008})},\ \Eprint
  {https://arxiv.org/abs/0802.1274} {arXiv:0802.1274 [cs.SC]} \BibitemShut
  {NoStop}%
\bibitem [{\citenamefont {Nutma}(2014)}]{Nutma:2013zea}%
  \BibitemOpen
  \bibfield  {author} {\bibinfo {author} {\bibfnamefont {T.}~\bibnamefont
  {Nutma}},\ }\bibfield  {title} {\bibinfo {title} {{xTras : A field-theory
  inspired xAct package for mathematica}},\ }\href
  {https://doi.org/10.1016/j.cpc.2014.02.006} {\bibfield  {journal} {\bibinfo
  {journal} {Comput. Phys. Commun.}\ }\textbf {\bibinfo {volume} {185}},\
  \bibinfo {pages} {1719} (\bibinfo {year} {2014})},\ \Eprint
  {https://arxiv.org/abs/1308.3493} {arXiv:1308.3493 [cs.SC]} \BibitemShut
  {NoStop}%
\bibitem [{\citenamefont {Martín-García}()}]{xCore:2018}%
  \BibitemOpen
  \bibfield  {author} {\bibinfo {author} {\bibfnamefont {J.~M.}\ \bibnamefont
  {Martín-García}},\ }\href {http://www.xact.es/xCore/index.html} {\emph
  {\bibinfo {title} {xCore, {V}ersion 0.6.10}}}\BibitemShut {NoStop}%
\bibitem [{\citenamefont {Yllanes}\ and\ \citenamefont
  {Martín-García}()}]{xCoba:2020}%
  \BibitemOpen
  \bibfield  {author} {\bibinfo {author} {\bibfnamefont {D.}~\bibnamefont
  {Yllanes}}\ and\ \bibinfo {author} {\bibfnamefont {J.~M.}\ \bibnamefont
  {Martín-García}},\ }\href {http://www.xact.es/xCoba/index.html} {\emph
  {\bibinfo {title} {xCoba, {V}ersion 0.8.5}}}\BibitemShut {NoStop}%
\bibitem [{\citenamefont {Bäckdahl}()}]{SymManipulator:2021}%
  \BibitemOpen
  \bibfield  {author} {\bibinfo {author} {\bibfnamefont {T.}~\bibnamefont
  {Bäckdahl}},\ }\href {http://www.xact.es/SymManipulator/index.html} {\emph
  {\bibinfo {title} {SymManipulator, {V}ersion 0.9.5}}},\ \bibinfo {note}
  {{G}othenburg, Sweden, 2021}\BibitemShut {NoStop}%
\bibitem [{\citenamefont {Barker}(2023{\natexlab{a}})}]{Barker:2023bmr}%
  \BibitemOpen
  \bibfield  {author} {\bibinfo {author} {\bibfnamefont {W.}~\bibnamefont
  {Barker}},\ }\bibfield  {title} {\bibinfo {title} {{Particle spectra of
  gravity based on internal symmetry of quantum fields}},\ }\href@noop {} {\
  (\bibinfo {year} {2023}{\natexlab{a}})},\ \Eprint
  {https://arxiv.org/abs/2311.11790} {arXiv:2311.11790 [hep-th]} \BibitemShut
  {NoStop}%
\bibitem [{\citenamefont {Barker}\ and\ \citenamefont
  {Marzo}(2024)}]{Barker:2024ydb}%
  \BibitemOpen
  \bibfield  {author} {\bibinfo {author} {\bibfnamefont {W.}~\bibnamefont
  {Barker}}\ and\ \bibinfo {author} {\bibfnamefont {C.}~\bibnamefont {Marzo}},\
  }\bibfield  {title} {\bibinfo {title} {{Particle spectra of general
  Ricci-type Palatini or metric-affine theories}},\ }\href
  {https://doi.org/10.1103/PhysRevD.109.104017} {\bibfield  {journal} {\bibinfo
   {journal} {Phys. Rev. D}\ }\textbf {\bibinfo {volume} {109}},\ \bibinfo
  {pages} {104017} (\bibinfo {year} {2024})},\ \Eprint
  {https://arxiv.org/abs/2402.07641} {arXiv:2402.07641 [hep-th]} \BibitemShut
  {NoStop}%
\bibitem [{\citenamefont {Barker}\ and\ \citenamefont
  {Zell}(2024)}]{Barker:2024dhb}%
  \BibitemOpen
  \bibfield  {author} {\bibinfo {author} {\bibfnamefont {W.}~\bibnamefont
  {Barker}}\ and\ \bibinfo {author} {\bibfnamefont {S.}~\bibnamefont {Zell}},\
  }\bibfield  {title} {\bibinfo {title} {{Consistent particle physics in
  metric-affine gravity from extended projective symmetry}},\ }\href@noop {} {\
   (\bibinfo {year} {2024})},\ \Eprint {https://arxiv.org/abs/2402.14917}
  {arXiv:2402.14917 [hep-th]} \BibitemShut {NoStop}%
\bibitem [{\citenamefont {Barker}\ \emph
  {et~al.}(2024{\natexlab{b}})\citenamefont {Barker}, \citenamefont {Hobson},
  \citenamefont {Lasenby}, \citenamefont {Lin},\ and\ \citenamefont
  {Wei}}]{Barker:2024goa}%
  \BibitemOpen
  \bibfield  {author} {\bibinfo {author} {\bibfnamefont {W.}~\bibnamefont
  {Barker}}, \bibinfo {author} {\bibfnamefont {M.}~\bibnamefont {Hobson}},
  \bibinfo {author} {\bibfnamefont {A.}~\bibnamefont {Lasenby}}, \bibinfo
  {author} {\bibfnamefont {Y.-C.}\ \bibnamefont {Lin}},\ and\ \bibinfo {author}
  {\bibfnamefont {Z.}~\bibnamefont {Wei}},\ }\bibfield  {title} {\bibinfo
  {title} {{Every Poincar\'e gauge theory is conformal: a compelling case for
  dynamical vector torsion}},\ }\href@noop {} {\  (\bibinfo {year}
  {2024}{\natexlab{b}})},\ \Eprint {https://arxiv.org/abs/2406.12826}
  {arXiv:2406.12826 [hep-th]} \BibitemShut {NoStop}%
\bibitem [{\citenamefont {Dyer}\ \emph {et~al.}(2024)\citenamefont {Dyer},
  \citenamefont {Barker},\ and\ \citenamefont {Iosifidis}}]{Dyer:2024kvo}%
  \BibitemOpen
  \bibfield  {author} {\bibinfo {author} {\bibfnamefont {T.}~\bibnamefont
  {Dyer}}, \bibinfo {author} {\bibfnamefont {W.}~\bibnamefont {Barker}},\ and\
  \bibinfo {author} {\bibfnamefont {D.}~\bibnamefont {Iosifidis}},\ }\bibfield
  {title} {\bibinfo {title} {{Complete background cosmology of parity-even
  quadratic metric-affine gravity}},\ }\href@noop {} {\  (\bibinfo {year}
  {2024})},\ \Eprint {https://arxiv.org/abs/2412.15329} {arXiv:2412.15329
  [gr-qc]} \BibitemShut {NoStop}%
\bibitem [{\citenamefont {Moore}(1920)}]{Moore:1920}%
  \BibitemOpen
  \bibfield  {author} {\bibinfo {author} {\bibfnamefont {E.~H.}\ \bibnamefont
  {Moore}},\ }\bibfield  {title} {\bibinfo {title} {On the reciprocal of the
  general algebraic matrix},\ }\href
  {https://doi.org/10.1090/S0002-9904-1920-03322-7} {\bibfield  {journal}
  {\bibinfo  {journal} {Bulletin of the American Mathematical Society}\
  }\textbf {\bibinfo {volume} {26}},\ \bibinfo {pages} {394–395} (\bibinfo
  {year} {1920})}\BibitemShut {NoStop}%
\bibitem [{\citenamefont {Penrose}(1955)}]{Penrose:1955}%
  \BibitemOpen
  \bibfield  {author} {\bibinfo {author} {\bibfnamefont {R.}~\bibnamefont
  {Penrose}},\ }\bibfield  {title} {\bibinfo {title} {A generalized inverse for
  matrices},\ }\href {https://doi.org/10.1017/S0305004100030401} {\bibfield
  {journal} {\bibinfo  {journal} {Mathematical Proceedings of the Cambridge
  Philosophical Society}\ }\textbf {\bibinfo {volume} {51}},\ \bibinfo {pages}
  {406–413} (\bibinfo {year} {1955})}\BibitemShut {NoStop}%
\bibitem [{\citenamefont {Sezgin}\ and\ \citenamefont {van
  Nieuwenhuizen}(1980)}]{Sezgin:1979zf}%
  \BibitemOpen
  \bibfield  {author} {\bibinfo {author} {\bibfnamefont {E.}~\bibnamefont
  {Sezgin}}\ and\ \bibinfo {author} {\bibfnamefont {P.}~\bibnamefont {van
  Nieuwenhuizen}},\ }\bibfield  {title} {\bibinfo {title} {{New Ghost Free
  Gravity Lagrangians with Propagating Torsion}},\ }\href
  {https://doi.org/10.1103/PhysRevD.21.3269} {\bibfield  {journal} {\bibinfo
  {journal} {Phys. Rev. D}\ }\textbf {\bibinfo {volume} {21}},\ \bibinfo
  {pages} {3269} (\bibinfo {year} {1980})}\BibitemShut {NoStop}%
\bibitem [{\citenamefont {{Barker}}\ \emph {et~al.}()\citenamefont {{Barker}},
  \citenamefont {{Karananas}},\ and\ \citenamefont {{Tu}}}]{Supplement}%
  \BibitemOpen
  \bibfield  {author} {\bibinfo {author} {\bibfnamefont {W.}~\bibnamefont
  {{Barker}}}, \bibinfo {author} {\bibfnamefont {G.~K.}\ \bibnamefont
  {{Karananas}}},\ and\ \bibinfo {author} {\bibfnamefont {A.}~\bibnamefont
  {{Tu}}},\ }\href@noop {} {\emph {\bibinfo {title} {{Supplemental materials at
  \href{https://github.com/wevbarker/SupplementalMaterials-2506}{www.github.com/wevbarker/SupplementalMaterials-2506}.}}}}\BibitemShut
  {Stop}%
\bibitem [{\citenamefont {Aurilia}\ and\ \citenamefont
  {Takahashi}(1981)}]{Aurilia:1981xg}%
  \BibitemOpen
  \bibfield  {author} {\bibinfo {author} {\bibfnamefont {A.}~\bibnamefont
  {Aurilia}}\ and\ \bibinfo {author} {\bibfnamefont {Y.}~\bibnamefont
  {Takahashi}},\ }\bibfield  {title} {\bibinfo {title} {{Generalized Maxwell
  Equations and the Gauge Mixing Mechanism of Mass Generation}},\ }\href
  {https://doi.org/10.1143/PTP.66.693} {\bibfield  {journal} {\bibinfo
  {journal} {Prog. Theor. Phys.}\ }\textbf {\bibinfo {volume} {66}},\ \bibinfo
  {pages} {693} (\bibinfo {year} {1981})}\BibitemShut {NoStop}%
\bibitem [{\citenamefont {Utiyama}(1956)}]{Utiyama:1956sy}%
  \BibitemOpen
  \bibfield  {author} {\bibinfo {author} {\bibfnamefont {R.}~\bibnamefont
  {Utiyama}},\ }\bibfield  {title} {\bibinfo {title} {{Invariant theoretical
  interpretation of interaction}},\ }\href
  {https://doi.org/10.1103/PhysRev.101.1597} {\bibfield  {journal} {\bibinfo
  {journal} {Phys. Rev.}\ }\textbf {\bibinfo {volume} {101}},\ \bibinfo {pages}
  {1597} (\bibinfo {year} {1956})}\BibitemShut {NoStop}%
\bibitem [{\citenamefont {Kibble}(1961)}]{Kibble:1961ba}%
  \BibitemOpen
  \bibfield  {author} {\bibinfo {author} {\bibfnamefont {T.~W.~B.}\
  \bibnamefont {Kibble}},\ }\bibfield  {title} {\bibinfo {title} {{Lorentz
  invariance and the gravitational field}},\ }\href
  {https://doi.org/10.1063/1.1703702} {\bibfield  {journal} {\bibinfo
  {journal} {J. Math. Phys.}\ }\textbf {\bibinfo {volume} {2}},\ \bibinfo
  {pages} {212} (\bibinfo {year} {1961})}\BibitemShut {NoStop}%
\bibitem [{\citenamefont {Sciama}(1962)}]{Sciama:1962}%
  \BibitemOpen
  \bibfield  {author} {\bibinfo {author} {\bibfnamefont {D.~W.}\ \bibnamefont
  {Sciama}},\ }\bibfield  {title} {\bibinfo {title} {On the analogy between
  charge and spin in general relativity},\ }in\ \href@noop {} {\emph {\bibinfo
  {booktitle} {Recent developments in general relativity}}}\ (\bibinfo
  {publisher} {Pergamon Press},\ \bibinfo {address} {Oxford},\ \bibinfo {year}
  {1962})\ p.\ \bibinfo {pages} {415}\BibitemShut {NoStop}%
\bibitem [{\citenamefont {Cartan}(1922)}]{Cartan:1922}%
  \BibitemOpen
  \bibfield  {author} {\bibinfo {author} {\bibfnamefont {{\'E}.}~\bibnamefont
  {Cartan}},\ }\bibfield  {title} {\bibinfo {title} {{Sur une
  g{\'e}n{\'e}ralisation de la notion de courbure de Riemann et les espaces
  {\`a} torsion}},\ }\href@noop {} {\bibfield  {journal} {\bibinfo  {journal}
  {Comptes Rendus, Ac. Sc. Paris}\ }\textbf {\bibinfo {volume} {174}},\
  \bibinfo {pages} {593} (\bibinfo {year} {1922})}\BibitemShut {NoStop}%
\bibitem [{\citenamefont {Cartan}(1923)}]{Cartan:1923}%
  \BibitemOpen
  \bibfield  {author} {\bibinfo {author} {\bibfnamefont {{\'E}.}~\bibnamefont
  {Cartan}},\ }\bibfield  {title} {\bibinfo {title} {{Sur les vari{\'e}t{\'e}s
  {\`a} connexion affine et la th{\'e}orie de la relativit{\'e}
  g{\'e}n{\'e}ralis{\'e}e (premi{\`e}re partie)}},\ }in\ \href@noop {} {\emph
  {\bibinfo {booktitle} {Annales scientifiques de l'{\'E}cole normale
  sup{\'e}rieure}}},\ Vol.~\bibinfo {volume} {40}\ (\bibinfo {year} {1923})\
  pp.\ \bibinfo {pages} {325--412}\BibitemShut {NoStop}%
\bibitem [{\citenamefont {Cartan}(1924)}]{Cartan:1924}%
  \BibitemOpen
  \bibfield  {author} {\bibinfo {author} {\bibfnamefont {{\'E}.}~\bibnamefont
  {Cartan}},\ }\bibfield  {title} {\bibinfo {title} {{Sur les vari{\'e}t{\'e}s
  {\`a} connexion affine, et la th{\'e}orie de la relativit{\'e}
  g{\'e}n{\'e}ralis{\'e}e (premi{\`e}re partie)(suite)}},\ }in\ \href@noop {}
  {\emph {\bibinfo {booktitle} {Annales scientifiques de l'{\'E}cole Normale
  Sup{\'e}rieure}}},\ Vol.~\bibinfo {volume} {41}\ (\bibinfo {year} {1924})\
  pp.\ \bibinfo {pages} {1--25}\BibitemShut {NoStop}%
\bibitem [{\citenamefont {Cartan}(1925)}]{Cartan:1925}%
  \BibitemOpen
  \bibfield  {author} {\bibinfo {author} {\bibfnamefont {{\'E}.}~\bibnamefont
  {Cartan}},\ }\bibfield  {title} {\bibinfo {title} {{Sur les vari{\'e}t{\'e}s
  {\`a} connexion affine, et la th{\'e}orie de la relativit{\'e}
  g{\'e}n{\'e}ralis{\'e}e (deuxi{\`e}me partie)}},\ }in\ \href@noop {} {\emph
  {\bibinfo {booktitle} {Annales scientifiques de l'{\'E}cole normale
  sup{\'e}rieure}}},\ Vol.~\bibinfo {volume} {42}\ (\bibinfo {year} {1925})\
  pp.\ \bibinfo {pages} {17--88}\BibitemShut {NoStop}%
\bibitem [{\citenamefont {Einstein}(1925)}]{Einstein:1925}%
  \BibitemOpen
  \bibfield  {author} {\bibinfo {author} {\bibfnamefont {A.}~\bibnamefont
  {Einstein}},\ }\bibfield  {title} {\bibinfo {title} {{Einheitliche
  Feldtheorie von Gravitation und Elektrizit\"at}},\ }\href@noop {} {\bibfield
  {journal} {\bibinfo  {journal} {Sitzungsber. Preuss. Akad. Wiss}\ }\textbf
  {\bibinfo {volume} {22}},\ \bibinfo {pages} {414} (\bibinfo {year}
  {1925})}\BibitemShut {NoStop}%
\bibitem [{\citenamefont {Einstein}(1928{\natexlab{a}})}]{Einstein:1928}%
  \BibitemOpen
  \bibfield  {author} {\bibinfo {author} {\bibfnamefont {A.}~\bibnamefont
  {Einstein}},\ }\bibfield  {title} {\bibinfo {title} {{Riemanngeometrie mit
  Aufrechterhaltung des Begriffes des Fern-Parallelismus}},\ }\href@noop {}
  {\bibfield  {journal} {\bibinfo  {journal} {Sitzungsber. Preuss. Akad. Wiss}\
  }\textbf {\bibinfo {volume} {17}},\ \bibinfo {pages} {217} (\bibinfo {year}
  {1928}{\natexlab{a}})}\BibitemShut {NoStop}%
\bibitem [{\citenamefont {Einstein}(1928{\natexlab{b}})}]{Einstein:19282}%
  \BibitemOpen
  \bibfield  {author} {\bibinfo {author} {\bibfnamefont {A.}~\bibnamefont
  {Einstein}},\ }\bibfield  {title} {\bibinfo {title} {{Neue M{\"o}glichkeit
  f{\"u}r eine einheitliche Feldtheorie von Gravitation und
  Elektrizit{\"a}t}},\ }\href@noop {} {\bibfield  {journal} {\bibinfo
  {journal} {Sitzungsber. Preuss. Akad. Wiss}\ }\textbf {\bibinfo {volume}
  {18}},\ \bibinfo {pages} {224} (\bibinfo {year}
  {1928}{\natexlab{b}})}\BibitemShut {NoStop}%
\bibitem [{\citenamefont {Barker}\ \emph
  {et~al.}(2020{\natexlab{a}})\citenamefont {Barker}, \citenamefont {Lasenby},
  \citenamefont {Hobson},\ and\ \citenamefont {Handley}}]{Barker:2020elg}%
  \BibitemOpen
  \bibfield  {author} {\bibinfo {author} {\bibfnamefont {W.~E.~V.}\
  \bibnamefont {Barker}}, \bibinfo {author} {\bibfnamefont {A.~N.}\
  \bibnamefont {Lasenby}}, \bibinfo {author} {\bibfnamefont {M.~P.}\
  \bibnamefont {Hobson}},\ and\ \bibinfo {author} {\bibfnamefont {W.~J.}\
  \bibnamefont {Handley}},\ }\bibfield  {title} {\bibinfo {title} {{Mapping
  Poincar\'e gauge cosmology to Horndeski theory for emergent dark energy}},\
  }\href {https://doi.org/10.1103/PhysRevD.102.084002} {\bibfield  {journal}
  {\bibinfo  {journal} {Phys. Rev. D}\ }\textbf {\bibinfo {volume} {102}},\
  \bibinfo {pages} {084002} (\bibinfo {year} {2020}{\natexlab{a}})},\ \Eprint
  {https://arxiv.org/abs/2006.03581} {arXiv:2006.03581 [gr-qc]} \BibitemShut
  {NoStop}%
\bibitem [{\citenamefont {Barker}\ \emph
  {et~al.}(2020{\natexlab{b}})\citenamefont {Barker}, \citenamefont {Lasenby},
  \citenamefont {Hobson},\ and\ \citenamefont {Handley}}]{Barker:2020gcp}%
  \BibitemOpen
  \bibfield  {author} {\bibinfo {author} {\bibfnamefont {W.~E.~V.}\
  \bibnamefont {Barker}}, \bibinfo {author} {\bibfnamefont {A.~N.}\
  \bibnamefont {Lasenby}}, \bibinfo {author} {\bibfnamefont {M.~P.}\
  \bibnamefont {Hobson}},\ and\ \bibinfo {author} {\bibfnamefont {W.~J.}\
  \bibnamefont {Handley}},\ }\bibfield  {title} {\bibinfo {title} {{Systematic
  study of background cosmology in unitary Poincar\'e gauge theories with
  application to emergent dark radiation and $H_0$ tension}},\ }\href
  {https://doi.org/10.1103/PhysRevD.102.024048} {\bibfield  {journal} {\bibinfo
   {journal} {Phys. Rev. D}\ }\textbf {\bibinfo {volume} {102}},\ \bibinfo
  {pages} {024048} (\bibinfo {year} {2020}{\natexlab{b}})},\ \Eprint
  {https://arxiv.org/abs/2003.02690} {arXiv:2003.02690 [gr-qc]} \BibitemShut
  {NoStop}%
\bibitem [{\citenamefont {Barker}\ \emph {et~al.}(2021)\citenamefont {Barker},
  \citenamefont {Lasenby}, \citenamefont {Hobson},\ and\ \citenamefont
  {Handley}}]{Barker:2021oez}%
  \BibitemOpen
  \bibfield  {author} {\bibinfo {author} {\bibfnamefont {W.~E.~V.}\
  \bibnamefont {Barker}}, \bibinfo {author} {\bibfnamefont {A.~N.}\
  \bibnamefont {Lasenby}}, \bibinfo {author} {\bibfnamefont {M.~P.}\
  \bibnamefont {Hobson}},\ and\ \bibinfo {author} {\bibfnamefont {W.~J.}\
  \bibnamefont {Handley}},\ }\bibfield  {title} {\bibinfo {title} {{Nonlinear
  Hamiltonian analysis of new quadratic torsion theories: Cases with
  curvature-free constraints}},\ }\href
  {https://doi.org/10.1103/PhysRevD.104.084036} {\bibfield  {journal} {\bibinfo
   {journal} {Phys. Rev. D}\ }\textbf {\bibinfo {volume} {104}},\ \bibinfo
  {pages} {084036} (\bibinfo {year} {2021})},\ \Eprint
  {https://arxiv.org/abs/2101.02645} {arXiv:2101.02645 [gr-qc]} \BibitemShut
  {NoStop}%
\bibitem [{\citenamefont {Rigouzzo}\ and\ \citenamefont
  {Zell}(2022)}]{Rigouzzo:2022yan}%
  \BibitemOpen
  \bibfield  {author} {\bibinfo {author} {\bibfnamefont {C.}~\bibnamefont
  {Rigouzzo}}\ and\ \bibinfo {author} {\bibfnamefont {S.}~\bibnamefont
  {Zell}},\ }\bibfield  {title} {\bibinfo {title} {{Coupling metric-affine
  gravity to a Higgs-like scalar field}},\ }\href
  {https://doi.org/10.1103/PhysRevD.106.024015} {\bibfield  {journal} {\bibinfo
   {journal} {Phys. Rev. D}\ }\textbf {\bibinfo {volume} {106}},\ \bibinfo
  {pages} {024015} (\bibinfo {year} {2022})},\ \Eprint
  {https://arxiv.org/abs/2204.03003} {arXiv:2204.03003 [hep-th]} \BibitemShut
  {NoStop}%
\bibitem [{\citenamefont {Vandepeer~Barker}(2022)}]{VandepeerBarker:2022xnp}%
  \BibitemOpen
  \bibfield  {author} {\bibinfo {author} {\bibfnamefont {W.~E.}\ \bibnamefont
  {Vandepeer~Barker}},\ }\emph {\bibinfo {title} {{Gauge theories of
  gravity}}},\ \href {https://doi.org/10.17863/CAM.86972} {Ph.D. thesis},\
  \bibinfo  {school} {Cambridge U.} (\bibinfo {year} {2022})\BibitemShut
  {NoStop}%
\bibitem [{\citenamefont {Barker}(2023{\natexlab{b}})}]{Barker:2022jsh}%
  \BibitemOpen
  \bibfield  {author} {\bibinfo {author} {\bibfnamefont {W.~E.~V.}\
  \bibnamefont {Barker}},\ }\bibfield  {title} {\bibinfo {title} {{Geometric
  multipliers and partial teleparallelism in Poincar\'e gauge theory}},\ }\href
  {https://doi.org/10.1103/PhysRevD.108.024053} {\bibfield  {journal} {\bibinfo
   {journal} {Phys. Rev. D}\ }\textbf {\bibinfo {volume} {108}},\ \bibinfo
  {pages} {024053} (\bibinfo {year} {2023}{\natexlab{b}})},\ \Eprint
  {https://arxiv.org/abs/2205.13534} {arXiv:2205.13534 [gr-qc]} \BibitemShut
  {NoStop}%
\bibitem [{\citenamefont {Karananas}(2025{\natexlab{b}})}]{Karananas:2024hoh}%
  \BibitemOpen
  \bibfield  {author} {\bibinfo {author} {\bibfnamefont {G.~K.}\ \bibnamefont
  {Karananas}},\ }\bibfield  {title} {\bibinfo {title} {{Particle content of
  (scalar\,curvature)2 gravities revisited}},\ }\href
  {https://doi.org/10.1103/PhysRevD.111.044068} {\bibfield  {journal} {\bibinfo
   {journal} {Phys. Rev. D}\ }\textbf {\bibinfo {volume} {111}},\ \bibinfo
  {pages} {044068} (\bibinfo {year} {2025}{\natexlab{b}})},\ \Eprint
  {https://arxiv.org/abs/2407.09598} {arXiv:2407.09598 [hep-th]} \BibitemShut
  {NoStop}%
\bibitem [{\citenamefont {Karananas}(2024)}]{Karananas:2024qrz}%
  \BibitemOpen
  \bibfield  {author} {\bibinfo {author} {\bibfnamefont {G.~K.}\ \bibnamefont
  {Karananas}},\ }\bibfield  {title} {\bibinfo {title} {{The particle content
  of (scalar curvature)$^2$ metric-affine gravity}},\ }\href@noop {} {\
  (\bibinfo {year} {2024})},\ \Eprint {https://arxiv.org/abs/2408.16818}
  {arXiv:2408.16818 [hep-th]} \BibitemShut {NoStop}%
\bibitem [{\citenamefont {Blixt}\ \emph {et~al.}(2022)\citenamefont {Blixt},
  \citenamefont {Ferraro}, \citenamefont {Golovnev},\ and\ \citenamefont
  {Guzm\'an}}]{Blixt:2022rpl}%
  \BibitemOpen
  \bibfield  {author} {\bibinfo {author} {\bibfnamefont {D.}~\bibnamefont
  {Blixt}}, \bibinfo {author} {\bibfnamefont {R.}~\bibnamefont {Ferraro}},
  \bibinfo {author} {\bibfnamefont {A.}~\bibnamefont {Golovnev}},\ and\
  \bibinfo {author} {\bibfnamefont {M.-J.}\ \bibnamefont {Guzm\'an}},\
  }\bibfield  {title} {\bibinfo {title} {{Lorentz gauge-invariant variables in
  torsion-based theories of gravity}},\ }\href
  {https://doi.org/10.1103/PhysRevD.105.084029} {\bibfield  {journal} {\bibinfo
   {journal} {Phys. Rev. D}\ }\textbf {\bibinfo {volume} {105}},\ \bibinfo
  {pages} {084029} (\bibinfo {year} {2022})},\ \Eprint
  {https://arxiv.org/abs/2201.11102} {arXiv:2201.11102 [gr-qc]} \BibitemShut
  {NoStop}%
\bibitem [{\citenamefont {Blixt}\ \emph {et~al.}(2023)\citenamefont {Blixt},
  \citenamefont {Hohmann}, \citenamefont {Koivisto},\ and\ \citenamefont
  {Marzola}}]{Blixt:2023qbg}%
  \BibitemOpen
  \bibfield  {author} {\bibinfo {author} {\bibfnamefont {D.}~\bibnamefont
  {Blixt}}, \bibinfo {author} {\bibfnamefont {M.}~\bibnamefont {Hohmann}},
  \bibinfo {author} {\bibfnamefont {T.}~\bibnamefont {Koivisto}},\ and\
  \bibinfo {author} {\bibfnamefont {L.}~\bibnamefont {Marzola}},\ }\bibfield
  {title} {\bibinfo {title} {{Teleparallel bigravity}},\ }\href
  {https://doi.org/10.1140/epjc/s10052-023-12247-7} {\bibfield  {journal}
  {\bibinfo  {journal} {Eur. Phys. J. C}\ }\textbf {\bibinfo {volume} {83}},\
  \bibinfo {pages} {1120} (\bibinfo {year} {2023})},\ \Eprint
  {https://arxiv.org/abs/2305.03504} {arXiv:2305.03504 [gr-qc]} \BibitemShut
  {NoStop}%
\bibitem [{\citenamefont {Hayashi}\ and\ \citenamefont
  {Shirafuji}(1980)}]{Hayashi:1980qp}%
  \BibitemOpen
  \bibfield  {author} {\bibinfo {author} {\bibfnamefont {K.}~\bibnamefont
  {Hayashi}}\ and\ \bibinfo {author} {\bibfnamefont {T.}~\bibnamefont
  {Shirafuji}},\ }\bibfield  {title} {\bibinfo {title} {{Gravity From Poincare
  Gauge Theory of the Fundamental Particles. 4. Mass and Energy of Particle
  Spectrum}},\ }\href {https://doi.org/10.1143/PTP.64.2222} {\bibfield
  {journal} {\bibinfo  {journal} {Prog. Theor. Phys.}\ }\textbf {\bibinfo
  {volume} {64}},\ \bibinfo {pages} {2222} (\bibinfo {year}
  {1980})}\BibitemShut {NoStop}%
\bibitem [{\citenamefont {Hecht}\ \emph {et~al.}(1996)\citenamefont {Hecht},
  \citenamefont {Nester},\ and\ \citenamefont {Zhytnikov}}]{Hecht:1996np}%
  \BibitemOpen
  \bibfield  {author} {\bibinfo {author} {\bibfnamefont {R.~D.}\ \bibnamefont
  {Hecht}}, \bibinfo {author} {\bibfnamefont {J.~M.}\ \bibnamefont {Nester}},\
  and\ \bibinfo {author} {\bibfnamefont {V.~V.}\ \bibnamefont {Zhytnikov}},\
  }\bibfield  {title} {\bibinfo {title} {{Some Poincare gauge theory
  Lagrangians with well posed initial value problems}},\ }\href
  {https://doi.org/10.1016/0375-9601(96)00622-6} {\bibfield  {journal}
  {\bibinfo  {journal} {Phys. Lett. A}\ }\textbf {\bibinfo {volume} {222}},\
  \bibinfo {pages} {37} (\bibinfo {year} {1996})}\BibitemShut {NoStop}%
\bibitem [{\citenamefont {Chen}\ \emph {et~al.}(1998)\citenamefont {Chen},
  \citenamefont {Nester},\ and\ \citenamefont {Yo}}]{Chen:1998ad}%
  \BibitemOpen
  \bibfield  {author} {\bibinfo {author} {\bibfnamefont {H.}~\bibnamefont
  {Chen}}, \bibinfo {author} {\bibfnamefont {J.~M.}\ \bibnamefont {Nester}},\
  and\ \bibinfo {author} {\bibfnamefont {H.-J.}\ \bibnamefont {Yo}},\
  }\bibfield  {title} {\bibinfo {title} {{Acausal PGT modes and the nonlinear
  constraint effect}},\ }\href@noop {} {\bibfield  {journal} {\bibinfo
  {journal} {Acta Phys. Polon. B}\ }\textbf {\bibinfo {volume} {29}},\ \bibinfo
  {pages} {961} (\bibinfo {year} {1998})}\BibitemShut {NoStop}%
\bibitem [{\citenamefont {Yo}\ and\ \citenamefont {Nester}(1999)}]{Yo:1999ex}%
  \BibitemOpen
  \bibfield  {author} {\bibinfo {author} {\bibfnamefont {H.-j.}\ \bibnamefont
  {Yo}}\ and\ \bibinfo {author} {\bibfnamefont {J.~M.}\ \bibnamefont
  {Nester}},\ }\bibfield  {title} {\bibinfo {title} {{Hamiltonian analysis of
  Poincare gauge theory scalar modes}},\ }\href
  {https://doi.org/10.1142/S021827189900033X} {\bibfield  {journal} {\bibinfo
  {journal} {Int. J. Mod. Phys. D}\ }\textbf {\bibinfo {volume} {8}},\ \bibinfo
  {pages} {459} (\bibinfo {year} {1999})},\ \Eprint
  {https://arxiv.org/abs/gr-qc/9902032} {arXiv:gr-qc/9902032} \BibitemShut
  {NoStop}%
\bibitem [{\citenamefont {Yo}\ and\ \citenamefont {Nester}(2002)}]{Yo:2001sy}%
  \BibitemOpen
  \bibfield  {author} {\bibinfo {author} {\bibfnamefont {H.-J.}\ \bibnamefont
  {Yo}}\ and\ \bibinfo {author} {\bibfnamefont {J.~M.}\ \bibnamefont
  {Nester}},\ }\bibfield  {title} {\bibinfo {title} {{Hamiltonian analysis of
  Poincare gauge theory: Higher spin modes}},\ }\href
  {https://doi.org/10.1142/S0218271802001998} {\bibfield  {journal} {\bibinfo
  {journal} {Int. J. Mod. Phys. D}\ }\textbf {\bibinfo {volume} {11}},\
  \bibinfo {pages} {747} (\bibinfo {year} {2002})},\ \Eprint
  {https://arxiv.org/abs/gr-qc/0112030} {arXiv:gr-qc/0112030} \BibitemShut
  {NoStop}%
\bibitem [{\citenamefont {Yo}\ and\ \citenamefont {Nester}(2007)}]{Yo:2006qs}%
  \BibitemOpen
  \bibfield  {author} {\bibinfo {author} {\bibfnamefont {H.-J.}\ \bibnamefont
  {Yo}}\ and\ \bibinfo {author} {\bibfnamefont {J.~M.}\ \bibnamefont
  {Nester}},\ }\bibfield  {title} {\bibinfo {title} {{Dynamic Scalar Torsion
  and an Oscillating Universe}},\ }\href
  {https://doi.org/10.1142/S0217732307025303} {\bibfield  {journal} {\bibinfo
  {journal} {Mod. Phys. Lett. A}\ }\textbf {\bibinfo {volume} {22}},\ \bibinfo
  {pages} {2057} (\bibinfo {year} {2007})},\ \Eprint
  {https://arxiv.org/abs/astro-ph/0612738} {arXiv:astro-ph/0612738}
  \BibitemShut {NoStop}%
\bibitem [{\citenamefont {Shie}\ \emph {et~al.}(2008)\citenamefont {Shie},
  \citenamefont {Nester},\ and\ \citenamefont {Yo}}]{Shie:2008ms}%
  \BibitemOpen
  \bibfield  {author} {\bibinfo {author} {\bibfnamefont {K.-F.}\ \bibnamefont
  {Shie}}, \bibinfo {author} {\bibfnamefont {J.~M.}\ \bibnamefont {Nester}},\
  and\ \bibinfo {author} {\bibfnamefont {H.-J.}\ \bibnamefont {Yo}},\
  }\bibfield  {title} {\bibinfo {title} {{Torsion Cosmology and the
  Accelerating Universe}},\ }\href {https://doi.org/10.1103/PhysRevD.78.023522}
  {\bibfield  {journal} {\bibinfo  {journal} {Phys. Rev. D}\ }\textbf {\bibinfo
  {volume} {78}},\ \bibinfo {pages} {023522} (\bibinfo {year} {2008})},\
  \Eprint {https://arxiv.org/abs/0805.3834} {arXiv:0805.3834 [gr-qc]}
  \BibitemShut {NoStop}%
\bibitem [{\citenamefont {Chen}\ \emph {et~al.}(2009)\citenamefont {Chen},
  \citenamefont {Ho}, \citenamefont {Nester}, \citenamefont {Wang},\ and\
  \citenamefont {Yo}}]{Chen:2009at}%
  \BibitemOpen
  \bibfield  {author} {\bibinfo {author} {\bibfnamefont {H.}~\bibnamefont
  {Chen}}, \bibinfo {author} {\bibfnamefont {F.-H.}\ \bibnamefont {Ho}},
  \bibinfo {author} {\bibfnamefont {J.~M.}\ \bibnamefont {Nester}}, \bibinfo
  {author} {\bibfnamefont {C.-H.}\ \bibnamefont {Wang}},\ and\ \bibinfo
  {author} {\bibfnamefont {H.-J.}\ \bibnamefont {Yo}},\ }\bibfield  {title}
  {\bibinfo {title} {{Cosmological dynamics with propagating Lorentz connection
  modes of spin zero}},\ }\href {https://doi.org/10.1088/1475-7516/2009/10/027}
  {\bibfield  {journal} {\bibinfo  {journal} {JCAP}\ }\textbf {\bibinfo
  {volume} {10}},\ \bibinfo {pages} {027}},\ \Eprint
  {https://arxiv.org/abs/0908.3323} {arXiv:0908.3323 [gr-qc]} \BibitemShut
  {NoStop}%
\bibitem [{\citenamefont {Ho}\ and\ \citenamefont {Nester}(2011)}]{Ho:2011qn}%
  \BibitemOpen
  \bibfield  {author} {\bibinfo {author} {\bibfnamefont {F.-H.}\ \bibnamefont
  {Ho}}\ and\ \bibinfo {author} {\bibfnamefont {J.~M.}\ \bibnamefont
  {Nester}},\ }\bibfield  {title} {\bibinfo {title} {{Poincar\'e gauge theory
  with even and odd parity dynamic connection modes: isotropic Bianchi
  cosmological models}},\ }\href
  {https://doi.org/10.1088/1742-6596/330/1/012005} {\bibfield  {journal}
  {\bibinfo  {journal} {J. Phys. Conf. Ser.}\ }\textbf {\bibinfo {volume}
  {330}},\ \bibinfo {pages} {012005} (\bibinfo {year} {2011})},\ \Eprint
  {https://arxiv.org/abs/1105.5001} {arXiv:1105.5001 [gr-qc]} \BibitemShut
  {NoStop}%
\bibitem [{\citenamefont {Ho}\ and\ \citenamefont {Nester}(2012)}]{Ho:2011xf}%
  \BibitemOpen
  \bibfield  {author} {\bibinfo {author} {\bibfnamefont {F.-H.}\ \bibnamefont
  {Ho}}\ and\ \bibinfo {author} {\bibfnamefont {J.~M.}\ \bibnamefont
  {Nester}},\ }\bibfield  {title} {\bibinfo {title} {{Poincar\'e Gauge Theory
  With Coupled Even And Odd Parity Dynamic Spin-0 Modes: Dynamic Equations For
  Isotropic Bianchi Cosmologies}},\ }\href
  {https://doi.org/10.1002/andp.201100101} {\bibfield  {journal} {\bibinfo
  {journal} {Annalen Phys.}\ }\textbf {\bibinfo {volume} {524}},\ \bibinfo
  {pages} {97} (\bibinfo {year} {2012})},\ \Eprint
  {https://arxiv.org/abs/1106.0711} {arXiv:1106.0711 [gr-qc]} \BibitemShut
  {NoStop}%
\bibitem [{\citenamefont {Ho}\ \emph {et~al.}(2015)\citenamefont {Ho},
  \citenamefont {Chen}, \citenamefont {Nester},\ and\ \citenamefont
  {Yo}}]{Ho:2015ulu}%
  \BibitemOpen
  \bibfield  {author} {\bibinfo {author} {\bibfnamefont {F.-H.}\ \bibnamefont
  {Ho}}, \bibinfo {author} {\bibfnamefont {H.}~\bibnamefont {Chen}}, \bibinfo
  {author} {\bibfnamefont {J.~M.}\ \bibnamefont {Nester}},\ and\ \bibinfo
  {author} {\bibfnamefont {H.-J.}\ \bibnamefont {Yo}},\ }\bibfield  {title}
  {\bibinfo {title} {{General Poincar\'e Gauge Theory Cosmology}},\ }\href
  {https://doi.org/10.6122/CJP.20151014} {\bibfield  {journal} {\bibinfo
  {journal} {Chin. J. Phys.}\ }\textbf {\bibinfo {volume} {53}},\ \bibinfo
  {pages} {110109} (\bibinfo {year} {2015})},\ \Eprint
  {https://arxiv.org/abs/1512.01202} {arXiv:1512.01202 [gr-qc]} \BibitemShut
  {NoStop}%
\bibitem [{\citenamefont {Tseng}(2018)}]{Tseng:2018feo}%
  \BibitemOpen
  \bibfield  {author} {\bibinfo {author} {\bibfnamefont {H.-H.}\ \bibnamefont
  {Tseng}},\ }\emph {\bibinfo {title} {{Gravitational Theories with
  Torsion}}},\ \href@noop {} {Ph.D. thesis},\ \bibinfo  {school} {Taiwan, Natl.
  Tsing Hua U.} (\bibinfo {year} {2018}),\ \Eprint
  {https://arxiv.org/abs/1812.00314} {arXiv:1812.00314 [gr-qc]} \BibitemShut
  {NoStop}%
\bibitem [{\citenamefont {Zhang}\ and\ \citenamefont
  {Xu}(2019)}]{Zhang:2019mhd}%
  \BibitemOpen
  \bibfield  {author} {\bibinfo {author} {\bibfnamefont {H.}~\bibnamefont
  {Zhang}}\ and\ \bibinfo {author} {\bibfnamefont {L.}~\bibnamefont {Xu}},\
  }\bibfield  {title} {\bibinfo {title} {{Late-time acceleration and inflation
  in a Poincar\'e gauge cosmological model}},\ }\href
  {https://doi.org/10.1088/1475-7516/2019/09/050} {\bibfield  {journal}
  {\bibinfo  {journal} {JCAP}\ }\textbf {\bibinfo {volume} {09}},\ \bibinfo
  {pages} {050}},\ \Eprint {https://arxiv.org/abs/1904.03545} {arXiv:1904.03545
  [gr-qc]} \BibitemShut {NoStop}%
\bibitem [{\citenamefont {Zhang}\ and\ \citenamefont
  {Xu}(2020)}]{Zhang:2019xek}%
  \BibitemOpen
  \bibfield  {author} {\bibinfo {author} {\bibfnamefont {H.}~\bibnamefont
  {Zhang}}\ and\ \bibinfo {author} {\bibfnamefont {L.}~\bibnamefont {Xu}},\
  }\bibfield  {title} {\bibinfo {title} {{Inflation in the parity-conserving
  Poincar\'e gauge cosmology}},\ }\href
  {https://doi.org/10.1088/1475-7516/2020/10/003} {\bibfield  {journal}
  {\bibinfo  {journal} {JCAP}\ }\textbf {\bibinfo {volume} {10}},\ \bibinfo
  {pages} {003}},\ \Eprint {https://arxiv.org/abs/1906.04340} {arXiv:1906.04340
  [gr-qc]} \BibitemShut {NoStop}%
\bibitem [{\citenamefont
  {Maldonado~Torralba}(2020)}]{MaldonadoTorralba:2020mbh}%
  \BibitemOpen
  \bibfield  {author} {\bibinfo {author} {\bibfnamefont {F.~J.}\ \bibnamefont
  {Maldonado~Torralba}},\ }\emph {\bibinfo {title} {{New effective theories of
  gravitation and their phenomenological consequences}}},\ \href
  {https://doi.org/10.33612/diss.143961423} {Ph.D. thesis},\ \bibinfo  {school}
  {Cape Town U., Dept. Math.} (\bibinfo {year} {2020}),\ \Eprint
  {https://arxiv.org/abs/2101.11523} {arXiv:2101.11523 [gr-qc]} \BibitemShut
  {NoStop}%
\bibitem [{\citenamefont {de~la Cruz~Dombriz}\ \emph
  {et~al.}(2022)\citenamefont {de~la Cruz~Dombriz}, \citenamefont
  {Maldonado~Torralba},\ and\ \citenamefont {Mota}}]{delaCruzDombriz:2021nrg}%
  \BibitemOpen
  \bibfield  {author} {\bibinfo {author} {\bibfnamefont {A.}~\bibnamefont
  {de~la Cruz~Dombriz}}, \bibinfo {author} {\bibfnamefont {F.~J.}\ \bibnamefont
  {Maldonado~Torralba}},\ and\ \bibinfo {author} {\bibfnamefont {D.~F.}\
  \bibnamefont {Mota}},\ }\bibfield  {title} {\bibinfo {title} {{Dark matter
  candidate from torsion}},\ }\href
  {https://doi.org/10.1016/j.physletb.2022.137488} {\bibfield  {journal}
  {\bibinfo  {journal} {Phys. Lett. B}\ }\textbf {\bibinfo {volume} {834}},\
  \bibinfo {pages} {137488} (\bibinfo {year} {2022})},\ \Eprint
  {https://arxiv.org/abs/2112.03957} {arXiv:2112.03957 [gr-qc]} \BibitemShut
  {NoStop}%
\bibitem [{\citenamefont {Puetzfeld}(2005)}]{Puetzfeld:2004yg}%
  \BibitemOpen
  \bibfield  {author} {\bibinfo {author} {\bibfnamefont {D.}~\bibnamefont
  {Puetzfeld}},\ }\bibfield  {title} {\bibinfo {title} {{Status of
  non-Riemannian cosmology}},\ }\href
  {https://doi.org/10.1016/j.newar.2005.01.022} {\bibfield  {journal} {\bibinfo
   {journal} {New Astron. Rev.}\ }\textbf {\bibinfo {volume} {49}},\ \bibinfo
  {pages} {59} (\bibinfo {year} {2005})},\ \Eprint
  {https://arxiv.org/abs/gr-qc/0404119} {arXiv:gr-qc/0404119} \BibitemShut
  {NoStop}%
\bibitem [{\citenamefont {Ni}(2010)}]{Ni:2009fg}%
  \BibitemOpen
  \bibfield  {author} {\bibinfo {author} {\bibfnamefont {W.-T.}\ \bibnamefont
  {Ni}},\ }\bibfield  {title} {\bibinfo {title} {{Searches for the role of spin
  and polarization in gravity}},\ }\href
  {https://doi.org/10.1088/0034-4885/73/5/056901} {\bibfield  {journal}
  {\bibinfo  {journal} {Rept. Prog. Phys.}\ }\textbf {\bibinfo {volume} {73}},\
  \bibinfo {pages} {056901} (\bibinfo {year} {2010})},\ \Eprint
  {https://arxiv.org/abs/0912.5057} {arXiv:0912.5057 [gr-qc]} \BibitemShut
  {NoStop}%
\bibitem [{\citenamefont {Puetzfeld}\ and\ \citenamefont
  {Obukhov}(2014)}]{Puetzfeld:2014sja}%
  \BibitemOpen
  \bibfield  {author} {\bibinfo {author} {\bibfnamefont {D.}~\bibnamefont
  {Puetzfeld}}\ and\ \bibinfo {author} {\bibfnamefont {Y.~N.}\ \bibnamefont
  {Obukhov}},\ }\bibfield  {title} {\bibinfo {title} {{Prospects of detecting
  spacetime torsion}},\ }\href {https://doi.org/10.1142/S0218271814420048}
  {\bibfield  {journal} {\bibinfo  {journal} {Int. J. Mod. Phys. D}\ }\textbf
  {\bibinfo {volume} {23}},\ \bibinfo {pages} {1442004} (\bibinfo {year}
  {2014})},\ \Eprint {https://arxiv.org/abs/1405.4137} {arXiv:1405.4137
  [gr-qc]} \BibitemShut {NoStop}%
\bibitem [{\citenamefont {Ni}(2016)}]{Ni:2015poa}%
  \BibitemOpen
  \bibfield  {author} {\bibinfo {author} {\bibfnamefont {W.-T.}\ \bibnamefont
  {Ni}},\ }\bibfield  {title} {\bibinfo {title} {{Searches for the role of spin
  and polarization in gravity: a five-year update}},\ }\href
  {https://doi.org/10.1142/S2010194516600107} {\bibfield  {journal} {\bibinfo
  {journal} {Int. J. Mod. Phys. Conf. Ser.}\ }\textbf {\bibinfo {volume}
  {40}},\ \bibinfo {pages} {1660010} (\bibinfo {year} {2016})},\ \Eprint
  {https://arxiv.org/abs/1501.07696} {arXiv:1501.07696 [hep-ph]} \BibitemShut
  {NoStop}%
\bibitem [{\citenamefont {Beltr\'an~Jim\'enez}\ and\ \citenamefont
  {Maldonado~Torralba}(2020)}]{BeltranJimenez:2019hrm}%
  \BibitemOpen
  \bibfield  {author} {\bibinfo {author} {\bibfnamefont {J.}~\bibnamefont
  {Beltr\'an~Jim\'enez}}\ and\ \bibinfo {author} {\bibfnamefont {F.~J.}\
  \bibnamefont {Maldonado~Torralba}},\ }\bibfield  {title} {\bibinfo {title}
  {{Revisiting the stability of quadratic Poincar\'e gauge gravity}},\ }\href
  {https://doi.org/10.1140/epjc/s10052-020-8163-8} {\bibfield  {journal}
  {\bibinfo  {journal} {Eur. Phys. J. C}\ }\textbf {\bibinfo {volume} {80}},\
  \bibinfo {pages} {611} (\bibinfo {year} {2020})},\ \Eprint
  {https://arxiv.org/abs/1910.07506} {arXiv:1910.07506 [gr-qc]} \BibitemShut
  {NoStop}%
\bibitem [{\citenamefont {Starobinsky}(1980)}]{Starobinsky:1980te}%
  \BibitemOpen
  \bibfield  {author} {\bibinfo {author} {\bibfnamefont {A.~A.}\ \bibnamefont
  {Starobinsky}},\ }\bibfield  {title} {\bibinfo {title} {{A New Type of
  Isotropic Cosmological Models Without Singularity}},\ }\href
  {https://doi.org/10.1016/0370-2693(80)90670-X} {\bibfield  {journal}
  {\bibinfo  {journal} {Phys. Lett. B}\ }\textbf {\bibinfo {volume} {91}},\
  \bibinfo {pages} {99} (\bibinfo {year} {1980})}\BibitemShut {NoStop}%
\bibitem [{\citenamefont {Gialamas}\ and\ \citenamefont
  {Tamvakis}(2025)}]{Gialamas:2024iyu}%
  \BibitemOpen
  \bibfield  {author} {\bibinfo {author} {\bibfnamefont {I.~D.}\ \bibnamefont
  {Gialamas}}\ and\ \bibinfo {author} {\bibfnamefont {K.}~\bibnamefont
  {Tamvakis}},\ }\bibfield  {title} {\bibinfo {title} {{Inflation in
  Weyl-invariant Einstein-Cartan gravity}},\ }\href
  {https://doi.org/10.1103/PhysRevD.111.044007} {\bibfield  {journal} {\bibinfo
   {journal} {Phys. Rev. D}\ }\textbf {\bibinfo {volume} {111}},\ \bibinfo
  {pages} {044007} (\bibinfo {year} {2025})},\ \Eprint
  {https://arxiv.org/abs/2410.16364} {arXiv:2410.16364 [gr-qc]} \BibitemShut
  {NoStop}%
\bibitem [{\citenamefont {Lin}\ \emph {et~al.}(2019)\citenamefont {Lin},
  \citenamefont {Hobson},\ and\ \citenamefont {Lasenby}}]{Lin:2018awc}%
  \BibitemOpen
  \bibfield  {author} {\bibinfo {author} {\bibfnamefont {Y.-C.}\ \bibnamefont
  {Lin}}, \bibinfo {author} {\bibfnamefont {M.~P.}\ \bibnamefont {Hobson}},\
  and\ \bibinfo {author} {\bibfnamefont {A.~N.}\ \bibnamefont {Lasenby}},\
  }\bibfield  {title} {\bibinfo {title} {{Ghost and tachyon free Poincar\'e
  gauge theories: A systematic approach}},\ }\href
  {https://doi.org/10.1103/PhysRevD.99.064001} {\bibfield  {journal} {\bibinfo
  {journal} {Phys. Rev. D}\ }\textbf {\bibinfo {volume} {99}},\ \bibinfo
  {pages} {064001} (\bibinfo {year} {2019})},\ \Eprint
  {https://arxiv.org/abs/1812.02675} {arXiv:1812.02675 [gr-qc]} \BibitemShut
  {NoStop}%
\bibitem [{\citenamefont {Lin}\ \emph {et~al.}(2020)\citenamefont {Lin},
  \citenamefont {Hobson},\ and\ \citenamefont {Lasenby}}]{Lin:2019ugq}%
  \BibitemOpen
  \bibfield  {author} {\bibinfo {author} {\bibfnamefont {Y.-C.}\ \bibnamefont
  {Lin}}, \bibinfo {author} {\bibfnamefont {M.~P.}\ \bibnamefont {Hobson}},\
  and\ \bibinfo {author} {\bibfnamefont {A.~N.}\ \bibnamefont {Lasenby}},\
  }\bibfield  {title} {\bibinfo {title} {{Power-counting renormalizable,
  ghost-and-tachyon-free Poincar\'e gauge theories}},\ }\href
  {https://doi.org/10.1103/PhysRevD.101.064038} {\bibfield  {journal} {\bibinfo
   {journal} {Phys. Rev. D}\ }\textbf {\bibinfo {volume} {101}},\ \bibinfo
  {pages} {064038} (\bibinfo {year} {2020})},\ \Eprint
  {https://arxiv.org/abs/1910.14197} {arXiv:1910.14197 [gr-qc]} \BibitemShut
  {NoStop}%
\bibitem [{\citenamefont {Karananas}(2016)}]{Karananas:2016ltn}%
  \BibitemOpen
  \bibfield  {author} {\bibinfo {author} {\bibfnamefont {G.~K.}\ \bibnamefont
  {Karananas}},\ }\emph {\bibinfo {title} {{Poincar\'e, Scale and Conformal
  Symmetries Gauge Perspective and Cosmological Ramifications}}},\ \href
  {https://doi.org/10.5075/epfl-thesis-7173} {Ph.D. thesis},\ \bibinfo
  {school} {Ecole Polytechnique, Lausanne} (\bibinfo {year} {2016}),\ \Eprint
  {https://arxiv.org/abs/1608.08451} {arXiv:1608.08451 [hep-th]} \BibitemShut
  {NoStop}%
\end{thebibliography}%

\end{document}